\newcommand{\E}{{\mathbb E}}       
\newcommand{\pa}{{\rm pa}}       
\newcommand{\an}{{\rm an}}	   
\newcommand{\An}{{\rm An}}	   
\newcommand{\de}{{\rm de}}	   
\newcommand{\var}{{\rm var}}       
\newcommand{\lMin}{\lambda_{\rm min}}       
\newcommand{\vertiii}[1]{{\left\vert\kern-0.25ex\left\vert\kern-0.25ex\left\vert #1 
		\right\vert\kern-0.25ex\right\vert\kern-0.25ex\right\vert}}
\newcommand\numberthis{\addtocounter{equation}{1}\tag{\theequation}}
\newtheorem{theorem}{Theorem}
\newtheorem{lemma}{Lemma}
\newtheorem*{remark}{Remark}
\newtheorem{condition}{Condition}
\newtheorem{corollary}{Corollary}
\setlist[itemize]{noitemsep}
\setlist[enumerate]{noitemsep}
\title{High Dimensional Causal Discovery Under Non-Gaussianity}
\author[Y.S. Wang]{Y. Samuel Wang$^1$}
\address{$^1$Booth School of Business\\
	University of Chicago\\
	Chicago, IL, U.S.A.\\}
\author[M. Drton]{Mathias Drton$^2$}
\email{ysamuelwang@gmail.com, md5@uw.edu}
\address{$^2$Department of Statistics\\
	University of Washington\\
	Seattle, WA, U.S.A.\\}
\address{$^2$Department of Mathematical Sciences\\
	University of Copenhagen\\
	Copenhagen, Denmark\\}
\begin{document}
	\maketitle
	
	\begin{abstract}
		  We consider graphical models based on a recursive system of linear structural equations.
          This implies that there is an ordering, $\sigma$, of the
          variables such that each observed variable $Y_v$ is a linear
          function of a variable specific error term and the other
          observed variables $Y_u$ with $\sigma(u) < \sigma (v)$.  The
          causal relationships, i.e., which other variables the linear
          functions depend on, can be described using a directed
          graph.  It has been previously shown that when the variable
          specific error terms are non-Gaussian, the exact causal
          graph, as opposed to a Markov equivalence class, can be
          consistently estimated from observational data.  We propose
          an algorithm that yields consistent estimates of the graph
          also in high-dimensional settings in which the number of
          variables may grow at a faster rate than the number of
          observations, but in which the underlying causal structure
          features suitable sparsity; specifically, the maximum
          in-degree of the graph is controlled.  Our theoretical
          analysis is couched in the setting of log-concave error
          distributions.
	\end{abstract}

\section{Introduction}

Prior work shows the possibility of causal discovery with
observational data in the framework of linear structural equation
models with non-Gaussian errors.  However, existing methods for
estimation of the causal structure are applicable only in
low-dimensional settings, in which the number of variables, $p$, is
small compared to the sample size, $n$.  In this paper, we develop a
method which, given suitable sparsity, recovers the exact causal
structure consistently in high-dimensional regimes where $p$ grows
along with $n$.  Careful considerations of computational aspects make
our method a practical and statistically sound exploratory tool for
the intended high-dimensional settings.

Let $Y_1,\dots,Y_n \in \mathbb{R}^p$ be multivariate data from an
observational study, specifically, the observations form an
independent, identically distributed sample.  We encode the causal
structure generating dependences in the underlying $p$-variate joint
distribution by a graph $G = (V, E)$ with vertex set
$V=\{1,\dots,p\}$. Each node, $v \in V$, corresponds to an observed
variable in $Y_i=(Y_{vi})_{v\in V}$, and each directed edge,
$(u, v) \in E$, indicates that $Y_{ui}$ has a direct causal effect on
$Y_{vi}$. Thus, positing causal structure is equivalent to selecting a
graph.  We will only consider directed acyclic graphs (DAGs),
directed graphs which do not contain directed cycles.  Given the
correspondence between a node $v \in V$ and the random variable
$Y_{vi}$, we will at times let $v$ stand in for $Y_{vi}$; for
instance, when stating stochastic independence relations.

Discovery of causal structure from observational data is difficult
because of the super-exponential set of possible models, some of which
may be indistinguishable from others. Despite this difficulty, many
methods for causal discovery have been developed and have seen
fruitful applications; see the recent review of
\citet{drton2017stucture}.  In particular, the celebrated PC algorithm
\citep{spirtes2000causation} is a constraint-based method which first
infers a set of conditional independence relationships and then
identifies the associated Markov equivalence class; this class
contains all DAGs compatible with the inferred conditional
independences. \citet{kalisch2007estimating} show if the maximum total
degree of the graph is controlled and the data is Gaussian, then the
PC algorithm can consistently recover the true Markov equivalence
class even in high-dimensional settings where the number of variables
grows with the number of samples. \citet{harris2013nonparanormal}
extend the result to Gaussian copula models using rank correlations.

However, graphs within the same Markov equivalence class may have drastically different causal and scientific interpretations.
For the graphs in Figure~\ref{fig:equivalenceClass}, conditional
independence tests can distinguish model (a) from the rest but
cannot distinguish models (b), (c), and (d) from each other.
Although \citet{maathuis2009ida} provide a procedure for bounding the
size of a causal effect over graphs within an equivalence class,
interpretation of the set of possibly conflicting graphs
can remain difficult.
Results on the size and number of Markov equivalence classes, which
may be exponentially large, can be found e.g.\ in \citet{steinsky:2013}.


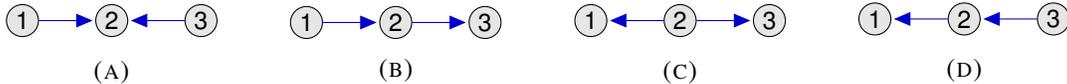
\begin{figure}[t]
	\centering
	\begin{subfigure}{0.23\textwidth}\centering
		\begin{tikzpicture}[->,>=triangle 45,shorten >=1pt,
		auto,
		main node/.style={ellipse,inner sep=0pt,fill=gray!20,draw,font=\sffamily,
			minimum width = .5cm, minimum height = .5cm, scale=.85}]
		
		\node[main node] (1) {1};
		\node[main node] (2) [right = .75cm of 1]  {2};
		\node[main node] (3) [right = .75cm of 2]  {3};
		
		\path[color=black!20!blue,style={->}]
		(1) edge node {} (2)
		(3) edge node {} (2);
		\end{tikzpicture}
		\caption{}
	\end{subfigure}
	\ 
	\begin{subfigure}{0.23\textwidth}\centering
		\begin{tikzpicture}[->,>=triangle 45,shorten >=1pt,
		auto,
		main node/.style={ellipse,inner sep=0pt,fill=gray!20,draw,font=\sffamily,
			minimum width = .5cm, minimum height = .5cm, scale=.85}]
		
		\node[main node] (1) {1};
		\node[main node] (2) [right = .75cm of 1]  {2};
		\node[main node] (3) [right = .75cm of 2]  {3};
		
		\path[color=black!20!blue,style={->}]
		(1) edge node {} (2)
		(2) edge node {} (3);
		\end{tikzpicture}
		\caption{}
	\end{subfigure}
	\  
	\begin{subfigure}{0.23\textwidth}\centering
		\begin{tikzpicture}[->,>=triangle 45,shorten >=1pt,
		auto,
		main node/.style={ellipse,inner sep=0pt,fill=gray!20,draw,font=\sffamily,
			minimum width = .5cm, minimum height = .5cm, scale=.85}]
		
		\node[main node] (1) {1};
		\node[main node] (2) [right = .75cm of 1]  {2};
		\node[main node] (3) [right = .75cm of 2]  {3};
		
		\path[color=black!20!blue,style={->}]
		(2) edge node {} (1)
		(2) edge node {} (3);
		\end{tikzpicture}
		\caption{}
	\end{subfigure}
	\  
	\begin{subfigure}{0.23\textwidth}\centering
		\begin{tikzpicture}[->,>=triangle 45,shorten >=1pt,
		auto,
		main node/.style={ellipse,inner sep=0pt,fill=gray!20,draw,font=\sffamily,
			minimum width = .5cm, minimum height = .5cm, scale=.85}]
		
		\node[main node] (1) {1};
		\node[main node] (2) [right = .75cm of 1]  {2};
		\node[main node] (3) [right = .75cm of 2]  {3};
		
		\path[color=black!20!blue,style={->}]
		(3) edge node {} (2)
		(2) edge node {} (1);
		\end{tikzpicture}
		\caption{}
	\end{subfigure}
	
	\caption{\label{fig:equivalenceClass}The Markov equivalence class of graph (a) is a singleton. However, graph (b), (c) and (d) are Markov equivalent and imply the same set of conditional independences.}
\end{figure}

In contrast, it has been shown that under various additional
assumptions, the exact graph structure, not just an equivalence class,
can be identified from observational data
\citep{loh2014inverse,peters2014equal,ernest2016causal}. In
particular, \citet{shimizu2006lingam} show this to be the case
under three main assumptions: (1) the data are generated by a linear
structural equation model, (2) the error terms in the structural
equations are non-Gaussian, and (3) there is no unobserved
confounding among the observed variables; i.e.,
errors are  independent. These assumptions yield the linear non-Gaussian acyclic model,
abbreviated as LiNGAM, which is described formally in
Section~\ref{sec:lingam}.  Under the LiNGAM framework, the four models
from Figure~\ref{fig:equivalenceClass} are mutually distinguishable.
\citet{shimizu2006lingam} use independent component analysis to
estimate the graph structure, and the subsequent DirectLINGAM
\citep{shimizu2011direct} and Pairwise LiNGAM
\citep{hyvarinen2013pairwise} methods iteratively select a causal
ordering by computing pairwise statistics.  These methods are motivated by identifiability results that are derived by iteratively forming
conditional expectations.  In practice, the conditional expectations are
estimated using larger and larger regression models.  As a result, the
methods become inapplicable when the number of variables exceeds the
sample size.

We develop a modification of the DirectLiNGAM algorithm that is
suitable for high-dimensional data and 
give guarantees for when our algorithm will consistently recover
the true graph in high-dimensional asymptotic scenarios.
Most notably, our analysis considers restricted maximum in-degree of
the graph and assumes log-concave distributions. The theory also
applies to hub graphs where the maximum out-degree may grow with the
size of the graph, which is in contrast to the conditions needed for
high-dimensional consistency of the PC algorithm
\citep{kalisch2007estimating}.  Hub graphs appear in many 
biological networks \citep{hao2012revisiting}.

\section{Causal discovery algorithm}
\subsection{Generative model and notation}\label{sec:lingam}

We assume that the
observations $Y_1, \ldots, Y_n \in \mathbb{R}^p$ are independent, identically distributed replications generated from a linear structural equation model so
that the elements of each random vector $Y_i$ satisfy
\begin{equation}
\label{eq:sem}
Y_{vi} = \sum_{u \neq v} \beta_{vu} Y_{ui} + \varepsilon_{vi}, 
\end{equation}
where the $\beta_{vu}$ are unknown real parameters that quantify the
direct linear effect of variable $u$ on variable $v$, and
$\varepsilon_{vi}$ is an error term 
of unknown
distribution $P_v$.  We assume $\varepsilon_{vi}$ has mean 0 and is
independent of all other error terms.  Our interest is in models that
postulate that a particular set of coefficients $\beta_{vu}$ is zero. In particular, the absence of an edge, $(u,v) \notin E$,
indicates that the model constrains the parameter $\beta_{vu}$ to
zero.  We assume that the graph, $G$, representing the model is a DAG,
which implies that the structural equation model is recursive; i.e.,
there exists a permutation of the variables, $\sigma$, such that
$\beta_{vu}$ is constrained to be zero unless $\sigma(u) <\sigma(v)$.

We denote the model given by graph $G$ by $\mathcal{P}(G)$.  Each
distribution $P\in \mathcal{P}(G)$ is induced through a choice of
linear coefficients $\left(\beta_{vu}\right)_{(u,v) \in E}$ and error
distributions $(P_v)_{v\in V}$.  Let $B=(\beta_{vu})$ be the
$p\times p$ matrix determined by the model constraints and the chosen
free coefficients.  Then the equations in~(\ref{eq:sem}) admit a
unique solution with $Y_i=(I-B)^{-1}\varepsilon_i$.  The error
vectors $\varepsilon_i=(\varepsilon_{vi})_{v\in V}$ are independent and identically distributed and
follow the product distribution $\otimes_{v\in V} P_v$.  The
distribution $P$ is then the joint distribution for $Y_i$ that is
induced by the transformation of $\varepsilon_i$.

Standard notation has the set $\pa(v) = \{u: (u,v) \in E\}$ comprise
the parents of a given node $v$.  The set of ancestors,
$\an(v)$, contains any node $u\not=v$ with a directed path from $u$
to $v$; we let $\An(v) = \an(v) \cup \{v\}$.   The set of
descendants, $\de(v)$, contains the nodes $u$ with
$v\in\an(u)$.

\subsection{Parental faithfulness}
An important approach to causal discovery begins by inferring
relations such as conditional independence and then determines graphs
compatible with empirically found relations.  For this approach to
succeed, 
the considered relations must correspond to structure in the graph $G$ as opposed to a special choice of parameters.
In the context of conditional independence, the assumption that any relation present in an
underlying joint distribution $P\in\mathcal{P}(G)$ corresponds to the
absence of certain paths in $G$ is known as the faithfulness
assumption; see \citet{uhler2013geometry} for a detailed discussion.
For 
our work, we define a weaker
condition, 
parental faithfulness. In particular, if $u \in \pa(v)$, we require
that the total effect of $u$ on $v$ does not vanish when we modify the
considered distribution by regressing $v$ onto any set of its
non-descendants, as detailed next.

Let $l=(v_1, \ldots, v_z)$ be a directed path in $G$, so
$(v_j,v_{j+1})\in E$ for $j = 1, \ldots, z-1$.  Given coefficients
$\left(\beta_{vu}\right)_{(u,v) \in E}$, the path has weight
$ w(l) = \prod_{j = 1}^{z-1} \beta_{{v_{j+1}},v_{j}}$.  Let
$\mathcal{L}_{vu}$ be the set of all directed paths from $u$ to
$v$. Then the total effect of $u$ on $v$ is
$\pi_{vu} = \sum_{l \in\mathcal{L}_{vu}} w(l)$, with $\pi_{vu}=0$ if
$u\not\in\An(v)$ and $\pi_{vu} = 1$ if $u=v$.  The effect gives the
conditional mean of $v$ under interventions on $u$; i.e., $\pi_{vu} = \E(Y_{vi} \mid \text{\rm do}(Y_{ui} = y + 1)) -\E(Y_{vi}
\mid \text{\rm do}(Y_{ui} = y))$ using the
do-operator of \citet{pearl2009Causality}.
Total effects may be calculated
by matrix inversion, 
$\Pi= \left(\pi_{vu}\right)_{u,v \in V} = (I- B)^{-1}$.

Let $\Sigma = \E(Y_i Y_i^t)$ be the covariance matrix of
the, for convenience, centered random vector $Y_i\sim P$.  Let
$\Sigma_{CC}$ be the principal sub-matrix for a non-empty set of indices
$C\subseteq V$.  For $v\in V\setminus C$, let $\Sigma_{Cv}$ be the
sub-vector comprised of the entries in places $(c,v)$ for $c\in C$.
Let
\begin{equation}
\label{eq:pop-reg-coeffs}
\beta_{vC} \;=\; (\beta_{vc.C})_{c\in C} =
(\Sigma_{CC})^{-1}\Sigma_{Cv}
\end{equation}
be the population regression coefficients when $v$ is regressed onto
$C$. The quantity $\beta_{vc.C}$ is defined even if $(c,v) \not\in E$,
and in general $\beta_{vc.C} \neq \beta_{vc}$ even if $(c, v) \in E$.
A pair $(u,v)\in E$ is parentally faithful if for any set
$C\subseteq V\setminus \left[\de(v)\cup\{ v,u \}\right]$, the residual
total effect defined as
\begin{equation}\label{eq:residualTotalEffect}
\pi_{vu.C} \;=\; \pi_{vu} - \sum_{c \in C}\beta_{vc.C}\pi_{cu}
\end{equation}
is nonzero.
If this holds for every pair $(u,v)\in E$, we say that the joint
distribution $P$ is parentally faithful with respect to $G$.
Parental faithfulness only pertains to the
linear coefficients and error variances, and the choices for which
parental faithfulness fails form a set of Lebesgue measure zero.
The concept is exemplified in Figure~\ref{fig:parentalFaithfulness}.
\begin{figure}[t]
	\centering
	\begin{subfigure}[b]{0.4\textwidth}\centering
		\begin{tikzpicture}[->,>=triangle 45,shorten >=1pt,
		auto,
		main node/.style={ellipse,inner sep=0pt,fill=gray!20,draw,font=\sffamily,
			minimum width = .5cm, minimum height = .5cm, scale = .85}]
		
		\node[main node] (1)  {1};
		\node[main node] (2) [right = 1.2cm of 1]  {2};
		\node[main node] (3) [right = 1.2cm of 2] {3};
		
		\path[color=black!20!blue,style={->}]
		(1) edge[bend left = 40] node {} (3)
		(1) edge node {} (2)
		(2) edge node {} (3);
		\end{tikzpicture}
		\caption{}
	\end{subfigure}
	~
	\begin{subfigure}[b]{0.4\textwidth}
		\centering
		\begin{tikzpicture}[->,>=triangle 45,shorten >=1pt,
		auto,
		main node/.style={ellipse,inner sep=0pt,fill=gray!20,draw,font=\sffamily,
			minimum width = .5cm, minimum height = .5cm, scale = .85}]
		
		\node[main node] (1) {1};
		\node[main node] (2) [right = 1.2cm of 1]  {2};
		\node[main node] (3) [right = 1.2cm of 2]  {3};
		\node[main node] (4) [right = 1.2cm of 3]  {4};
		
		\path[color=black!20!blue,style={->}]
		(1) edge[bend left = 30] node {} (3)
		(1) edge[bend left = 30] node {} (4)
		(2) edge node {} (3)
		(2) edge[bend left = 30] node {} (4);
		\end{tikzpicture}
		\caption{}
	\end{subfigure}
	\caption[Example of parental faithfulness]{\label{fig:parentalFaithfulness}In (a), the choice
		$\beta_{31} = \beta_{21} = 1$ and $\beta_{32} = -1$ results in
		parental unfaithfulness because $\pi_{31.\emptyset} = 0$. Also, the choice $\beta_{31} = \beta_{21} = \beta_{32} = 1$ and $\E(\varepsilon_1^2) = \E(\varepsilon_2^2) = \E(\varepsilon_3^2) = 1$ is not faithful because the partial correlation of $2$ and $1$ given $3$ is $0$, but is still parentally faithful. In (b),
		the choice $\beta_{31} = \beta_{32} = \beta_{42} = 1$, $\beta_{41} = 2$, and $\E(\varepsilon_1^2) = \E(\varepsilon_2^2) = \E(\varepsilon_3^2) = 1$ results in parental unfaithfulness because $\pi_{42.3} = 0$.}
\end{figure}
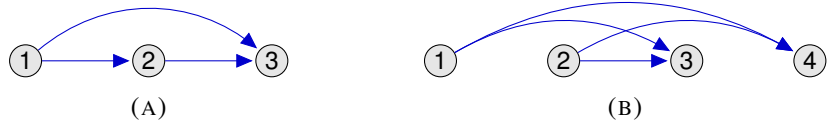

\subsection{Test statistic}\label{sec:testStat}
Reliable determination of the causal direction between $u$ and $v$
generally requires removal of all confounding. Thus,
\citet{shimizu2011direct} and \citet{hyvarinen2013pairwise} adjust $v$
and $u$ for all $x$ such that $\sigma(x) <\sigma(v)$ and
$\sigma(x) < \sigma(u)$. However, 
adjusting by an increasingly larger set of variables propagates error
proportional to the number of variables, rendering high-dimensional
estimation inconsistent, or impossible when the size of the adjustment
set exceeds the sample size. On the other hand, restricting the size
of the adjustment sets may not remove confounding completely.  The
method we present solves this problem via a statistic that is
conservative in the sense that it does not mistakenly certify causal
direction when confounding is present.

\citet{shimizu2011direct} calculate the kernel-based mutual
information between $v$ and the residuals of $u$ when it is regressed
onto $v$.  
The
corresponding population information is positive if and only if $v\in\de(u)$ or there is uncontrolled confounding between $v$
and $u$, that is, $u$ and $v$ have a common ancestor even when certain
edges are removed from the graph.
Hence, the mutual information can be used to test the hypothesis that
$v\not\in\de(u)$ versus the hypothesis that $v\in\de(u)$ or there is
confounding between $u$ and $v$. Unfortunately, calculating the mutual
information can be computationally burdensome, so
\citet{hyvarinen2013pairwise} propose a different parameter
$R_{vu}$. Without confounding, $R_{vu}>0$ if $v\in\an(u)$ and
$R_{vu}<0$ if
$u\in\an(v)$.  With confounding, however, the parameter can take
either sign, so it cannot be reliably used if we remain uncertain
about whether or not confounding occurs. We introduce a parameter that
shares the favorable properties of the mutual information 
but admits computationally inexpensive estimators that are rational
functions of the sample moments of $Y$, which facilitates analysis of error propagation. 

The parameter we consider is motivated by the following
observation. Suppose the true generating mechanism is
$Y_1 \rightarrow Y_2$ so that $Y_1 = \varepsilon_1$ and
$Y_2 = \beta_{21}Y_1 + \varepsilon_2 $ for
$\varepsilon_1$ independent of $\varepsilon_2$. When the causal direction is
correctly specified, the linear coefficient $\beta_{21}$ is recovered
by $\mathbb{E}(Y_1^{K-1} Y_2) /\mathbb{E}(Y_1^{K})$ for all integers
$K$ greater than 1 for which $\mathbb{E}(Y_1^{K}) \neq 0$. Of course, letting $K = 2$ gives the typical least
squares estimator. This leads to the identity
$\mathbb{E}(Y_1^{K-1} Y_2)/\mathbb{E}(Y_1^{K}) =
\mathbb{E}(Y_1Y_2)/\mathbb{E}(Y_1^2)$ which implies
$\mathbb{E}(Y_1^{K-1}Y_2)\mathbb{E}(Y_1^2)- \mathbb{E}(Y_1 Y_2)
\mathbb{E}(Y_1^{K})= 0$, which holds even when $\mathbb{E}(Y_1^{K}) = 0$. In general, however, when the errors are
non-Gaussian and the roles of $Y_1$ and $Y_2$ are swapped, this
identity does not hold. When there are more than 2 variables involved,
we reduce the problem to a bivariate problem by conditioning on an
appropriate set $C$.  To this end, define for any $i$ the residual
\[
  Y_{vi.C} = Y_{vi} - \sum_{c\in C}\beta_{vc.C}Y_{ci},
\]
where
$\beta_{vc.C}$ are the population regression coefficients
from~\eqref{eq:pop-reg-coeffs}. When $C = \emptyset$, let
$Y_{vi.\emptyset} = Y_{vi}$.

\begin{theorem} \label{ch3:thm:testStat}
	Let $P \in \mathcal{P}(G)$ be a
	distribution in the model given by a DAG $G$, and let
	$Y_i \sim P$.  For $K > 2$, two distinct nodes $u$ and $v$, and
	any set $C\subseteq V\setminus\{u,v\}$, define
	\begin{equation} \label{eq:tauDefinition}
	\tau^{(K)}_{v.C\rightarrow u} \;=\;  \E_P(Y_{vi.C}^{K-1} Y_{ui})\E_P(Y_{vi.C}^2) - \E_P(Y_{vi.C}^K) \E_P(Y_{vi.C} Y_{ui}).
	\end{equation}
	\begin{enumerate}[label={\rm(\roman*)}]
		\item \label{imp:tauZero}If $u \not\in \pa(v)$ and
		$\pa(v) \subseteq C\subseteq V\setminus \left[\de(v)\cup\{ v,u \}\right]$, then $\tau^{(K)}_{v.C\rightarrow u} = 0$. 
              \item \label{imp:tauNonZero} Suppose $u \in \pa(v)$ with
                $u,v$ parentally faithful under the covariance matrix
                of $P$.  If
                $C\subseteq V\setminus \left[\de(v)\cup\{ v,u
                  \}\right]$, then
                $\tau^{(K)}_{v.C\rightarrow u} \neq 0$ for generic
                error moments of order $3,\dots,K$.
	\end{enumerate}
\end{theorem}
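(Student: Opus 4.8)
The plan is to reduce both parts to the explicit expansion of the residual $Y_{vi.C}$ in the independent errors. Since $\bm Y_i=\Pi\bm\epsilon_i$ with $\Pi=(I-B)^{-1}$, each $Y_{wi}=\sum_t\pi_{wt}\varepsilon_{ti}$, so
\[ Y_{vi.C}=Y_{vi}-\sum_{c\in C}\beta_{vc.C}Y_{ci}=\sum_t\Big(\pi_{vt}-\sum_{c\in C}\beta_{vc.C}\pi_{ct}\Big)\varepsilon_{ti}=:\sum_t\pi_{vt.C}\,\varepsilon_{ti}, \]
where the coefficient of $\varepsilon_{ti}$ is the residual total effect of~\eqref{eq:residualTotalEffect}. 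Because every $c\in C$ is a non-descendant of $v$, we have $\pi_{cv}=0$, so $\pi_{vv.C}=1$; thus $\varepsilon_{vi}$ enters $Y_{vi.C}$ with coefficient $1$ and, being centered and independent of all other errors, is independent of $S:=\sum_{t\neq v}\pi_{vt.C}\varepsilon_{ti}$. I also record that $\E_P(Y_{ui})=0$.

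For \ref{imp:tauZero} I would first show the residual collapses to one error. Since $\pa(v)\subseteq C$, the structural equation rewrites $\varepsilon_{vi}=Y_{vi}-\sum_{t\in\pa(v)}\beta_{vt}Y_{ti}$ as $Y_{vi}$ minus a linear combination of $\{Y_{ci}\}_{c\in C}$, and $\varepsilon_{vi}$ is uncorrelated with each $Y_{ci}$ (again because $c\notin\de(v)$); by uniqueness of the population regression this forces $Y_{vi.C}=\varepsilon_{vi}$. Splitting $Y_{ui}=\pi_{uv}\varepsilon_{vi}+\tilde Y_{ui}$ with $\tilde Y_{ui}$ centered and independent of $\varepsilon_{vi}$, the two terms of~\eqref{eq:tauDefinition} become $\pi_{uv}\E_P(\varepsilon_{vi}^K)\E_P(\varepsilon_{vi}^2)$ and $\E_P(\varepsilon_{vi}^K)\,\pi_{uv}\E_P(\varepsilon_{vi}^2)$, which cancel; hence $\tau^{(K)}_{v.C\to u}=0$, whether or not $u$ is a descendant of $v$.

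For \ref{imp:tauNonZero}, parental faithfulness supplies the one structural input, $a_u:=\pi_{vu.C}\neq0$, and now $u\in\pa(v)$ additionally gives $\pi_{uv}=0$, so $Y_{ui}$ does not involve $\varepsilon_{vi}$. The idea is to view $\tau^{(K)}_{v.C\to u}$ as a function of the error moments and to exhibit top-order moments on which it cannot vanish. Writing $R=Y_{vi.C}=\varepsilon_{vi}+S$ and setting $\nu:=\E_P(R^2)$, $\rho:=\E_P(RY_{ui})$ — both determined by second moments and $B$ alone — I would expand the four expectations in~\eqref{eq:tauDefinition} by independence and collect two coefficients. The coefficient of $\E_P(\varepsilon_{vi}^K)$ is $-\rho$, since $\E_P(R^K)$ contributes it with coefficient $1$ while $\E_P(R^{K-1}Y_{ui})$ cannot produce $\varepsilon_{vi}^K$ (neither $S$ nor $Y_{ui}$ contains $\varepsilon_{vi}$, and $R^{K-1}$ supplies at most $K-1$ factors). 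The coefficient of $\E_P(\varepsilon_{ui}^K)$ is $a_u^{K-1}\nu-a_u^{K}\rho=a_u^{K-1}(\nu-a_u\rho)$, from the all-$\varepsilon_{ui}$ contributions together with $\pi_{uu}=1$.

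These two coefficients cannot vanish simultaneously: if $\rho=0$ then the second reduces to $a_u^{K-1}\nu$, which is nonzero because $a_u\neq0$ by parental faithfulness and $\nu=\E_P(\varepsilon_{vi}^2)+\E_P(S^2)>0$. Hence $\tau^{(K)}_{v.C\to u}$ is not identically zero as a function of the error moments up to order $K$; indeed it is affine and nonconstant in at least one of the free top-order moments $\E_P(\varepsilon_{vi}^K)$, $\E_P(\varepsilon_{ui}^K)$, so it can vanish only on a measure-zero set, which is exactly the asserted genericity. I expect this nonvanishing step to be the crux: one tracked monomial does not suffice, because its coefficient $-\rho$ can legitimately be zero when the residual is uncorrelated with $Y_{ui}$, so the proof must pair two monomials and use parental faithfulness to rule out their joint vanishing. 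The supporting point that $\nu$, $\rho$, and $a_u$ involve only second moments (and $B$), hence are constant in the top-order moments being varied, is routine but is what makes reading off these coefficients legitimate.
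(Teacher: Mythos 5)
Your proof is correct, and part \ref{imp:tauNonZero} takes a genuinely different route from the paper. For part \ref{imp:tauZero} the two arguments coincide: both reduce $Y_{vi.C}$ to $\varepsilon_{vi}$ when $\pa(v)\subseteq C$ and $C\cap\de(v)=\emptyset$, and then observe the cancellation $\pi_{uv}\E(\varepsilon_{v}^K)\E(\varepsilon_{v}^2)-\pi_{uv}\E(\varepsilon_{v}^K)\E(\varepsilon_{v}^2)=0$. For part \ref{imp:tauNonZero} the paper constructs an explicit witness point: it fixes all error moments of order $<K$ at Gaussian values, adds offsets $\eta$ to the $K$th moments, partitions $\An(v)\cup\An(C)$ into blocks $Z_1,Z_2,Z_3$ to isolate confounding, and performs a long double-factorial cancellation to reduce $\tau$ to $\pi_{vu.C}^{K-1}\eta_{Z_2}\sigma_{Z_1}^2-\pi_{vu.C}\eta_{Z_1}\sigma_{Z_2}^2$ plus confounding corrections. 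You instead note that $\tau$ is affine in the top-order moments and read off the coefficients of $\E(\varepsilon_{v}^K)$ and $\E(\varepsilon_{u}^K)$ as $-\rho$ and $a_u^{K-1}(\nu-a_u\rho)$; these cannot vanish together because $\nu>0$ and $a_u=\pi_{vu.C}\neq0$ by parental faithfulness. Both arguments establish that the polynomial is not identically zero, which is all genericity requires, but yours is shorter, treats the confounded and unconfounded cases uniformly, and sidesteps a delicate point: in the confounded case the paper divides by $\pi_{vu.C}\sigma_{Z_2}^2+\E(\varepsilon_{Z_3}Y_u)$, which is exactly your $\rho$ and is not shown there to be nonzero, so your pairing of two monomials is actually the more watertight argument. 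What the paper's computation buys in exchange is the explicit closed form \eqref{eq:proofSketch} in the unconfounded case, which is reused in Corollary~\ref{ch3:thm:allGauss} to show that jointly Gaussian-consistent moments lie in the exceptional set; your argument does not yield that formula.
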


Estimators $\hat \tau_{v.C \rightarrow u}^{(K)}$ of the parameter
from~\eqref{eq:tauDefinition} are naturally obtained from empirical
regression coefficients and empirical moments.

In Theorem \ref{ch3:thm:testStat}\ref{imp:tauNonZero}, the term
generic indicates that the set of error moments for which this
statement does not hold has Lebesgue measure zero. Given that there is a
finite number of sets $C\subset V$, the union of all exceptional sets
is also a null set.  A detailed proof of Theorem~\ref{ch3:thm:testStat} is
included in the supplement.  Claim \ref{imp:tauZero} can be shown via
direct calculation, and we give a brief sketch of \ref{imp:tauNonZero}
here. For fixed coefficients $(\beta_{vu})_{(u,v) \in E}$ and set
$C\subset V$, $\tau^{(K)}_{v.C\rightarrow u}$ is a rational function
of the error moments. Thus existence of a single choice of error moments
for which $\tau^{(K)}_{v.C\rightarrow u} \neq 0$ is sufficient to show
that the statement holds for generic error moments.  As the argument
boils down to showing that a certain polynomial is not the zero
polynomial \citep{okamoto1973distinctness}, the choice considered need
not necessarily be realizable by a particular distribution.  In
particular, we choose all moments of order less than
$K$ equal to those of the centered Gaussian distribution
with variance $\sigma_v^2 = \E(\varepsilon_v^2)$, but for the $K$th
moment we add an offset $\eta_v > 0$, so
\begin{equation}
\E(\varepsilon_v^K) =
\begin{cases}
\eta_v & \text{ if } K \text{ is odd},\\
(K-1)!!\sigma_v^K + \eta_v &\text{ if } K \text{ is even},
\end{cases}
\label{eq:nongauss-moments}
\end{equation}
where $q!! = \prod_{z = 0}^{\lceil q/2\rceil -1}(q - 2z) $ is the double factorial of $q$.  If there is no
confounding between $Y_{v.C}$ and $Y_{u}$, that is, no ancestor of $u$
is the source of a directed path to $v$ that avoids $C\cup\{u\}$, then
\begin{equation}\label{eq:proofSketch}
\tau^{(K)}_{v.C\rightarrow u} \;=\;
\pi_{vu.C}\left(\pi_{vu.C}^{K-2}\,\eta_u\sigma_{v}^2 -
\eta_v\sigma_{u}^2\right)
\end{equation}
with $\pi_{vu.C} \neq 0$, 
by the assumed parental faithfulness. 
Thus, a choice of offsets with $\pi_{vu.C}^{K-2}\,\eta_u\sigma_{v}^2\not=
\eta_v\sigma_{u}^2$
implies $\tau^{(K)}_{v.C \rightarrow u} \neq 0$. A more involved
but similar argument can be made in the case of confounding. Under a slightly stronger form of faithfulness, $\tau^{(k)}_{v.C \rightarrow u} \neq 0$ if there is confounding regardless of whether $u \in \pa(v)$; see supplement Remark 1.

\begin{corollary}\label{ch3:thm:allGauss}
	Let $P_v$ and $P_u$ be two distributions that each have all moments up to order $K$ equal to those of some Gaussian distribution. Then there exists a graph $G$, for which $u \in \pa(v)$, and distributions $P$ which are parentally faithful with respect to $G$, but $\tau^{(K)}_{v.C\rightarrow u} = 0$
	for some set $C\subseteq V\setminus \left[\de(v)\cup\{ v, u\}\right]$.
	
\end{corollary}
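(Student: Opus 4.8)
The plan is to exhibit an explicit minimal example, since the corollary is a pure existence statement. I would take the two-node graph $G$ on $V=\{u,v\}$ with the single edge $(u,v)$, so that $u\in\pa(v)$ and the only admissible conditioning set is $C=\emptyset$. Fix $\beta_{vu}$ to be any nonzero value and let the error terms follow the given laws $\varepsilon_u\sim P_u$, $\varepsilon_v\sim P_v$; the resulting joint distribution $P$ lies in $\mathcal{P}(G)$ with $Y_u=\varepsilon_u$ and $Y_v=\beta_{vu}\varepsilon_u+\varepsilon_v$.

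First I would verify parental faithfulness. The only edge is $(u,v)$ and the only set $C\subseteq V\setminus(\de(v)\cup\{v,u\})$ is $\emptyset$, so faithfulness reduces to $\pi_{vu.\emptyset}=\pi_{vu}=\beta_{vu}\neq 0$, which holds by construction. Importantly, this condition constrains only the linear coefficient (equivalently, a second-order moment) and is thus compatible with any prescribed higher-moment behaviour of $P_u,P_v$; there is no tension between $P$ being parentally faithful and $\tau$ vanishing, which is precisely the phenomenon the corollary isolates.

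Next I would show $\tau^{(K)}_{v.\emptyset\rightarrow u}=0$. The key point is that $\tau^{(K)}_{v.\emptyset\rightarrow u}$ is built only from joint moments of $(Y_u,Y_v)$ of order at most $K$, which are in turn determined by the moments of $\varepsilon_u,\varepsilon_v$ of order at most $K$. Since $P_u$ and $P_v$ each match a centered Gaussian up to order $K$, all of these moments coincide with those of the bivariate Gaussian model with the same $\beta_{vu}$ and variances $\sigma_u^2,\sigma_v^2$, so $\tau^{(K)}_{v.\emptyset\rightarrow u}$ equals its value there. For a centered bivariate Gaussian, a short Wick/Isserlis computation gives, when $K$ is even, $\E(Y_v^{K-1}Y_u)=(K-1)!!\,\E(Y_vY_u)\,\E(Y_v^2)^{(K-2)/2}$ and $\E(Y_v^K)=(K-1)!!\,\E(Y_v^2)^{K/2}$, so the two products in~\eqref{eq:tauDefinition} cancel; when $K$ is odd the two order-$K$ moments $\E(Y_v^{K-1}Y_u)$ and $\E(Y_v^K)$ both vanish. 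Either way $\tau^{(K)}_{v.\emptyset\rightarrow u}=0$. This is exactly the degenerate case $\eta_u=\eta_v=0$ of the no-confounding identity~\eqref{eq:proofSketch}, in which the bracketed factor vanishes while $\pi_{vu.\emptyset}\neq 0$.

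The step requiring the most care is the reduction that $\tau^{(K)}$ is insensitive to moments above order $K$: one must confirm that matching a Gaussian only up to order $K$ already forces the cancellation, which relies on $\tau^{(K)}$ involving no moment of order exceeding $K$ and on the regression coefficients $\beta_{vc.C}$ entering $Y_{v.C}$ depending only on second moments. In the two-node example this is immediate because $C=\emptyset$, which is why the minimal graph is convenient. If one preferred an example with $C\neq\emptyset$ (for instance a four-node graph as in Figure~\ref{fig:parentalFaithfulness}(b)), the same conclusion would follow, but one would additionally have to check the no-confounding condition and track the regression coefficients before invoking~\eqref{eq:proofSketch} with $\eta_u=\eta_v=0$.
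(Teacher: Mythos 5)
Your proposal is correct and rests on the same mechanism as the paper's proof: the identity~\eqref{eq:proofSketch} degenerates to zero when the error moments are Gaussian up to order $K$ (i.e., $\eta_u=\eta_v=0$) even though $\pi_{vu.C}\neq 0$. You go a bit further than the paper by instantiating an explicit two-node graph with $C=\emptyset$, verifying parental faithfulness, and re-deriving the cancellation via Isserlis' theorem, which makes the existence claim fully concrete but does not change the underlying argument.
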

\begin{proof}
	The moments of $P_v$ and $P_u$ satisfy~\eqref{eq:nongauss-moments}
	with $\eta_v = \eta_u = 0$.  Consequently, if there exists a set $C$ such that there is no confounding between $Y_{v.C}$ and $Y_u$, then
	$\tau_{v.C \rightarrow u}$ satisfies \eqref{eq:proofSketch}, the
	right-hand side of which is zero when $\eta_v = \eta_u = 0$. For example, if $\varepsilon_v$ and $\varepsilon_u$ are both Gaussian and the graph is $u \rightarrow v$, $\tau^{(K)}_{v \rightarrow u} = 0$ for all choices of $\beta_{vu}$ and all $K$.
\end{proof}
Corollary~\ref{ch3:thm:allGauss} confirms that the null set to be avoided in Theorem~\ref{ch3:thm:testStat}\ref{imp:tauNonZero} contains
points for which all error moments are consistent with some Gaussian distribution. Thus, our identification
of causal direction requires that the error moments of order at most $K$ be inconsistent with all
Gaussian distributions. In practice, we consider the case $K = 3, 4$ and recommend $K=4$ unless one is certain the errors are not symmetric. We refer readers to \citet{hoyer2008arbitrary} for a full characterization of when graphs with both Gaussian and non-Gaussian
errors are identifiable.

At each step, the high-dimensional LiNGAM algorithm presented in
Section~\ref{sec:alg} considers a sub-graph and searches for a root
node, i.e., a node without any parents. Suppose $\de(V_2) = V_2 \subseteq V$; if $v \in V_2$ is not a root in the sub-graph induced by
$V_2$, then there must exist some
$u \in V_2$ with $\tau_{v.C \rightarrow u} \neq 0$ for all sets $C$
which are upstream of $v$ and $V_2$. However, if $v$ is a root, then
$\tau_{v.C \rightarrow u} = 0$ for all $u \in V_2$ when $C =
\pa(v)$. Thus, to test whether $v$ is a root in $V_2$, we aggregate
the various $\tau$ parameters corresponding to $u \in V_2$ and
conditioning sets $C$. Corollary~\ref{ch3:thm:aggStat} describes two
ways to do this aggregation. If $v$ is a root, this quantity will be $0$, and if $v$ is not a root, it will be positive. 
 
\begin{corollary}\label{ch3:thm:aggStat}
	Let $P \in \mathcal{P}(G)$, let $v\in V$, and consider two disjoint
	sets $V_1, V_2\subseteq V\setminus\{v\}$.  For a chosen non-negative
	integer $J$, define
	\begin{align*}\label{eq:min_max}
	T^{(K)}_1(v, V_1, V_2) &\;=\; \min_{C \in V_1(J)}  \max_{u \in V_2} |\tau^{(K)}_{v.C \rightarrow u}|, \;\qquad
	T^{(K)}_2(v, V_1, V_2) &\;=\; \max_{u \in V_2} \min_{C \in V_1(J)}  |\tau^{(K)}_{v.C \rightarrow u}|,
	\end{align*}
	where $V_1(J)=\{C\subseteq V_1: |C| = J\}$ if $J\le |V_1|$ and
	$V_1(J)=V_1$ if $J\ge |V_1|$.

	\begin{enumerate}[label={\rm(\roman*)}] 
		\item \label{imp:aggStatZero}
		If $|\pa(v)| \leq J$ and $\pa(v) \subseteq V_1\subseteq V\setminus\de(v)$, then
		\[
		T^{(K)}_1(v, V_1, V_2) = T^{(K)}_2(v, V_1, V_2) =0.
		\]
		
		\item \label{imp:aggStatNonZero} Suppose $\beta_{vu} \neq 0$ for
		all $u\in\pa(v)$.  If $\de(V_2\cup \{v\})\subseteq V_2\cup \{v\}$ and
		$\pa(v) \cap V_2 \neq \emptyset$, then for generic error moments of
		order up to $K$, we have $ T^{(K)}_1(v, V_1, V_2)>0$ and
		$T^{(K)}_2(v, V_1, V_2) > 0$.
	\end{enumerate}
\end{corollary}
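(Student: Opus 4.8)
The plan is to handle the two parts separately, deriving each from Theorem~\ref{ch3:thm:testStat}. For part~\ref{imp:aggStatZero} I would exhibit a single good adjustment set. Since $\pa(v)\subseteq V_1$ and $|\pa(v)|\le J$, I can choose $C^\star\in V_1(J)$ with $\pa(v)\subseteq C^\star$: pad $\pa(v)$ up to size $J$ using elements of $V_1$, or take $C^\star=V_1$ when $J\ge|V_1|$. Because $V_1\cap V_2=\emptyset$, no $u\in V_2$ is a parent of $v$ and no such $u$ lies in $C^\star$; moreover $C^\star\subseteq V_1\subseteq V\setminus\de(v)$, so $\pa(v)\subseteq C^\star\subseteq V\setminus(\de(v)\cup\{v,u\})$ for every $u\in V_2$. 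Theorem~\ref{ch3:thm:testStat}\ref{imp:tauZero} then gives $\tau^{(K)}_{v.C^\star\rightarrow u}=0$ for all $u\in V_2$. As every $|\tau^{(K)}|$ is nonnegative, this one $C^\star$ forces the inner $\max$ defining $T^{(K)}_1$ to vanish, and for each $u$ it forces the inner $\min$ defining $T^{(K)}_2$ to vanish; hence $T^{(K)}_1=T^{(K)}_2=0$.

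For part~\ref{imp:aggStatNonZero} I would first reduce to a single scalar statistic. The max--min inequality gives $T^{(K)}_2\le T^{(K)}_1$, so it suffices to show $T^{(K)}_2>0$. Since $T^{(K)}_2=\max_{u\in V_2}\min_{C}|\tau^{(K)}_{v.C\rightarrow u}|$, it is enough to find one fixed $u^\star\in V_2$ for which the inner minimum is positive, i.e.\ $\tau^{(K)}_{v.C\rightarrow u^\star}\neq 0$ for every $C\in V_1(J)$. The hypothesis $\pa(v)\cap V_2\neq\emptyset$ provides the candidate: take $u^\star\in\pa(v)\cap V_2$.

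The crucial structural step is to check that \emph{every} $C\in V_1(J)$ satisfies the hypotheses of Theorem~\ref{ch3:thm:testStat}\ref{imp:tauNonZero} for the pair $(u^\star,v)$, namely $C\subseteq V\setminus(\de(v)\cup\{v,u^\star\})$. Here $v\notin C$ and $u^\star\notin C$ because $C\subseteq V_1$ while $v\notin V_1$ and $u^\star\in V_2$ with $V_1\cap V_2=\emptyset$. The remaining requirement $C\cap\de(v)=\emptyset$ is exactly where the descendant-closure hypothesis enters: since $u^\star\in\pa(v)$ we have $\de(v)\subseteq\de(u^\star)\subseteq\de(V_2)\subseteq V_2$, and $V_1\cap V_2=\emptyset$ then yields $V_1\cap\de(v)=\emptyset$, so $C\cap\de(v)=\emptyset$ for every $C\subseteq V_1$.

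With this verified, Theorem~\ref{ch3:thm:testStat}\ref{imp:tauNonZero} applies to each fixed $C$: its parental-faithfulness hypothesis amounts to $\pi_{vu^\star.C}\neq 0$, which (given $\beta_{vu^\star}\neq 0$) fails only on a measure-zero set of error variances and is therefore absorbed into the genericity over error moments, and it gives $\tau^{(K)}_{v.C\rightarrow u^\star}\neq 0$ off a null set of moments. Because $V_1(J)$ is finite, the union of these exceptional sets is still null, so for generic error moments $\tau^{(K)}_{v.C\rightarrow u^\star}\neq 0$ holds simultaneously for all $C\in V_1(J)$; thus $\min_{C}|\tau^{(K)}_{v.C\rightarrow u^\star}|>0$, giving $T^{(K)}_2>0$ and hence $T^{(K)}_1\ge T^{(K)}_2>0$. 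I expect the main obstacle to be the uniformity over $C$: Theorem~\ref{ch3:thm:testStat}\ref{imp:tauNonZero} is a genericity statement for one fixed $C$, so the delicate points are (a) the finite-union-of-null-sets argument that makes a single generic moment choice work for all $C$ at once, and (b) confirming that descendant-closure of $V_2$ genuinely removes every descendant of $v$ from the adjustment pool $V_1$, so that no admissible $C$ ever escapes the scope of the theorem.
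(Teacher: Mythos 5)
Part (i) of your argument is correct and is essentially the paper's (the paper simply notes it is immediate from Theorem~\ref{ch3:thm:testStat}\ref{imp:tauZero}); your padding of $\pa(v)$ to an exact-size set $C^\star\in V_1(J)$ and the max--min inequality $T^{(K)}_2\le T^{(K)}_1$ are both fine.

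Part (ii) has a genuine gap, and it sits exactly where the corollary is meant to improve on Theorem~\ref{ch3:thm:testStat}\ref{imp:tauNonZero}: the corollary does \emph{not} assume parental faithfulness, only $\beta_{vu}\neq 0$, and the paper emphasizes this distinction right after the statement. You take an \emph{arbitrary} $u^\star\in\pa(v)\cap V_2$ and try to recover the missing hypothesis $\pi_{vu^\star.C}\neq 0$ by declaring its failure a measure-zero event ``absorbed into the genericity over error moments.'' That cannot work. Your own (correct) observation that $\de(u^\star)\subseteq V_2$ and $V_1\cap V_2=\emptyset$ forces $\pi_{cu^\star}=0$ for every $c\in C\subseteq V_1$, so $\pi_{vu^\star.C}=\pi_{vu^\star}$, which is a polynomial in the \emph{fixed} edge coefficients alone and involves no error moments (or variances) whatsoever. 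If the directed paths from $u^\star$ to $v$ cancel --- as in the configuration of Figure~\ref{fig:parentalFaithfulness}(a), where $\pi_{31}=0$ even though $\beta_{31}\neq 0$ --- then $\pi_{vu^\star.C}=0$ identically, and (absent confounding) $\tau^{(K)}_{v.C\rightarrow u^\star}=0$ for \emph{every} choice of error moments; no genericity over moments rescues that particular $u^\star$. The paper's proof avoids this by actually using the max over $V_2$: it chooses $u\in\pa(v)\cap V_2$ maximal in the ancestral partial order (possible by acyclicity and $\de(V_2)\subseteq V_2$), so that $\de(u)\cap\pa(v)=\emptyset$; any non-edge path from $u$ to $v$ would have to end at a parent of $v$ that is a descendant of $u$, so this choice forces $\pi_{vu}=\beta_{vu}$ and hence $\pi_{vu.C}=\beta_{vu}\neq 0$ for every admissible $C$, after which Theorem~\ref{ch3:thm:testStat} and the finite-union-of-null-sets argument you describe go through. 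Replacing your arbitrary $u^\star$ with this maximal choice repairs the proof; the rest of your structural verification ($V_1\cap\de(v)=\emptyset$ and the uniformity over the finitely many $C$) is sound.
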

\begin{proof}
	\ref{imp:aggStatZero} The 
	statement  follows immediately from
	Theorem~\ref{ch3:thm:testStat}.  \ref{imp:aggStatNonZero} Since, $\pa(v) \cap V_2 \neq \emptyset$, but $\de(V_2\cup \{v\})\subseteq V_2\cup \{v\}$, there exists some $u \in \pa(v) \cap V_2$ such that $\de(u) \cap \pa(v) = \emptyset$. For that $u$ and 
	any $C \subseteq V_1$, the residual total effect is $
	\pi_{vu.C} = \beta_{vu} - \sum_{c \in C}\beta_{vc.C}\pi_{cu} =  \beta_{vu}$
	because the assumed facts $\de(V_2\cup \{v\})\cap V_1 = \emptyset$ and $\de(v)\cap\pa(v) = \emptyset$  imply that
	$\pi_{cu} = 0$ for all $c \in C$ and  $\pi_{vu} = \beta_{vu}$. We have assumed $\beta_{vu} \neq 0$, so, by Theorem~\ref{ch3:thm:testStat},
	generic error moments ensure that
	$|\tau^{(K)}_{v.C \rightarrow u}| > 0$ for all $C$, which in turn implies $T^{(K)}_j(v, V_1, V_2) > 0$ for $j = 1,2$.
\end{proof}
When (i) is satisfied, there may be more than one set $C$ which makes all pairwise statistics $0$. $T_1$ is calculated by finding a single conditioning set $C$ which minimizes the maximum pairwise statistic $\tau$ across all $u \in V_2$; in contrast, $T_2$ allows for a different conditioning set for each $u$. For fixed $v$, $V_1$, and $V_2$, the signs, either positive or zero, of $T_1$ (min-max) and $T_2$ (max-min) will always agree, but when $\pa(v) \cap V_2 \neq \emptyset$ and both quantities are positive, $T_1 \geq T_2$. Thus, the min-max statistic may be more robust to sampling error when testing if the parameters are non-zero. However, as discussed in Section~\ref{sec:alg}, $T_2$ can be computed more efficiently than $T_1$.

Theorem~\ref{ch3:thm:testStat}\ref{imp:tauNonZero} requires
parental faithfulness since we consider arbitrary $u \in \pa(v)$, whereas
Corollary~\ref{ch3:thm:aggStat}\ref{imp:aggStatNonZero} only requires that
$\beta_{vu} \neq 0$ since we maximize over $V_2$. 
Use of sample moments yields estimates $\hat \tau_{v.C \rightarrow
	u}$, which in turn yields estimates $\hat T^{(K)}_j(v, V_1, V_2)$ of $T^{(K)}_j(v, V_1, V_2)$ for
$j=1,2$.
In the remainder of the paper, we drop the subscript $j$ 
in statements that apply to both
parameters/estimators.  Moreover, as we
always 
fix $K$,
we lighten notation by omitting the superscript, writing
$T(v, V_1, V_2)$, $\tau_{v.C \rightarrow u}$ and
$\hat \tau_{v.C \rightarrow u}$.

\section{Graph estimation procedure} \label{sec:graph_recovery}
\subsection{Algorithm}\label{sec:alg}
We now present a modified DirectLiNGAM algorithm which estimates the
underlying causal structure (Algorithm~\ref{alg:topOrder}). As in the
original algorithm, we identify a root and recur on the sub-graph that
has the identified root removed. After step $z$, we have a $z$-tuple,
$\Theta^{(z)}$, which gives an ordering of the roots identified so
far, and the remaining nodes $\Psi^{(z)} = V \setminus \Theta^{(z)}$.
In contrast to DirectLiNGAM, the proposed algorithm does not adjust
for all non-descendants, but only for subsets of limited size.  This
gives meaningful regression residuals also when the number of
variables exceeds the sample size and limits error propagation from
the estimated linear coefficients.

At each step $z$, we consider subsets of $\mathcal{C}^{(z)}_v \subseteq \Theta^{(z-1)}$, which we use to denote the set of possible parents for $v$. Naively allowing $\mathcal{C}^{(z)}_v = \Theta^{(z-1)}$ is not precluded by theory, but the number of subsets
$C \subset \Theta^{(z-1)}$ such that $|C| = J$ grows at
$\mathcal{O}(z^J)$. 
Thus, for computational reasons, we  prune nodes which are not parents of $v$ by letting
\begin{equation}\label{eq:parentSelection}
\mathcal{C}^{(z)}_v = \Bigg\{p \in \mathcal{C}^{(z-1)}_v: \min_{C \in D^{(z)}_v} |\hat \tau_{v.C \rightarrow p}| > g^{(z)}\Bigg\} \;\cup\; \Theta^{(z-1)}_{z-1}
\end{equation}
where $D^{(z)}_v = \bigcup_{d < z} \{C: C \subseteq \mathcal{C}_v^{(d)} \setminus \{p\}; |C| \leq J \}$,  $\Theta^{(z-1)}_{z-1}$ is the node selected at the previous step, and $g^{(z)}$ is some cut-off value. Selecting a good value for $g^{(z)}$ is difficult because it should depend on the unknown signal strength. However, under the assumptions of Theorem~\ref{ch3:thm:deterministicCorrect}, if $r$ is the root selected at step $z-1$ and $\alpha$ is some tuning parameter in $[0,1]$, then letting $g^{(z)} = \max(g^{(z-1)}, \alpha \hat T(r, \mathcal{C}^{(z)}_r, \Psi^{(z-1)}) )$ will not mistakenly prune parents from $\mathcal{C}^{(z)}_v$. In Algorithm~\ref{alg:topOrder},
we do not update $\mathcal{C}^{(z)}_v$ after $v$ is selected as a root. Since the final cut-off, $g^{(p)}$, may be larger than the cut-off used to
select $\mathcal{C}^{(z)}_v$, a final pruning step uses the criteria
from~\eqref{eq:parentSelection} with $g^{(p)}$ to prune away nodes in $\mathcal{C}^{(p)}_v$ that may be ancestors but not parents.

A larger value of $\alpha$ prunes more aggressively, decreasing the computational
effort. However, setting $\alpha$ too large could result in incorrect
estimates if some parent of $v$ is incorrectly pruned from
$\mathcal{C}^{(z)}_v$. Section \ref{sec:deterministicSection}
discusses selecting an appropriate $\alpha$ and a more detailed discussion of computational savings from the pruning procedure is given in the supplement. 

\begin{algorithm}[htb]
	\caption{Estimate Causal DAG}\label{alg:topOrder}
	\begin{algorithmic}[1]
		\State Set $\Theta^{(0)} = \emptyset$ and $\Psi^{(0)} = [p]$
		\For{$z = 1, \ldots, p$}
		\For{$v \in \Psi^{(z-1)}$}
		\State Select the set of possible parents $\mathcal{C}^{(z)}_v \subseteq \Theta^{(z-1)}$ and compute $\hat T(v, \mathcal{C}^{(z)}_v, \Psi^{(z-1)}\setminus \{v\})$
		\EndFor
		\State Let $r = \arg\min_{v \in \Psi^{(z-1)}} \hat T(v, \mathcal{C}^{(z)}_v, \Psi^{(z-1)}\setminus \{v\})$
		\State Append $r$ to $\Theta^{(z-1)}$ to form $\Theta^{(z)}$ and set $\Psi^{(z)} = \Psi^{(z-1)} \setminus \{r\}$.
		\EndFor
		\State Prune ancestors to form parents $\mathcal{C}^{\star}_v$ for all $v \in V$
		\State \textbf{Return:} $\Theta^{(p)}$ as the topological ordering; $\{\mathcal{C}^{\star}_v\}_{v \in V}$ as the set of parents
	\end{algorithmic}
\end{algorithm}

As discussed in Section~\ref{sec:testStat}, $T_1$ may be more robust
to sampling error than $T_2$ but comes at greater
computational cost. At each step, $\Psi^{(z)}$ decreases by a single
node and $\mathcal{C}^{(z)}_v$ may grow by one node. If the
$|\Psi^{(z)}|^2$ values of
$\min_{C \in \mathcal{C}^{(z-1)}_v} \hat \tau_{v.C \rightarrow u}$
have been stored, updating $\hat T_2$, the max-min, only requires
testing the $\binom{|\mathcal{C}^{(z-1)}|}{J-1}$ subsets of
$\mathcal{C}^{(z)}_v$ which include
the variable selected at the previous step. Updating the min-max
statistic  $\hat T_1$ without redundant computation would require
storing the $\mathcal{O}\left((p-z)^2 z^J\right)$ values of $|\tau_{v.C
  \rightarrow u}|$. In practice, we completely recompute it at each
step. Section~\ref{highD:sec:simulations} demonstrates this trade-off
between computational burden and robustness.

\subsection{Deterministic statement}\label{sec:deterministicSection}

Theorem~\ref{ch3:thm:deterministicCorrect} below makes a deterministic
statement about sufficient conditions under which
Algorithm~\ref{alg:topOrder} will output a specific graph $G$ when
given data $Y = (Y_1, \ldots, Y_n)$. We assume each $Y_i \sim P_Y$ but
allow model misspecification so that $P_Y$ may not be in $\mathcal{P}(G)$ for any DAG $G$. 
However, we require that the sample moments of $Y$ are close enough to the population moments for some distribution $P \in \mathcal{P}(G)$. For notational convenience, for $H \subseteq V$ and $\alpha \in \mathbb{R}^{|H|}$, let $\hat m_{H, \alpha} = \frac{1}{n}\sum_{i}^n \left(\prod_{v \in H}Y_{vi}^{\alpha_v}\right)$ denote a sample moment estimated from data $Y$, and let $m_{H, \alpha} = \E_P\left(\prod_{v \in H} Z_v^{\alpha_v}\right)$ denote a population moment for $Z \sim P$.

\begin{condition}\label{con:existence}
	For some $p$-variate distribution $P$, there exists a DAG $G$ with $|\pa(v)| \leq J$ for all $v \in V$ such that:
	\begin{enumerate}[label={(\alph*)},ref={\thetheorem(\alph*)}]
		\item \label{con:existenceNonZero}For all $v,u \in V$ and $C\subseteq V \setminus \{u,v\}$ with $|C| \leq J$ and $C\cap \de(v) = \emptyset$; if $u \in \pa(v)$ then the population quantities for $P$ satisfy $\left|\tau^{(K)}_{v.C\rightarrow u}\right| > \gamma > 0$.
		\item \label{con:existenceZero}For all $v, u \in V$ and $ C \subseteq V \setminus\{v,u\}$ with $|C| \leq J$ and $\pa(v) \subseteq C\subseteq V \setminus \de(v)$, if $u \not\in \pa(v)$, then the population quantities for $P$ satisfy $\tau^{(K)}_{v.C\rightarrow u}  = 0$.
	\end{enumerate}
\end{condition}
\begin{condition}\label{con:min_eigen}
	All $J \times J$ principal submatrices of the population covariance of $P$ have minimum eigenvalue greater or equal to $\lMin > 0$.
\end{condition}
\begin{condition}\label{con:bounded_moments}
	All population moments of $P$ up to degree $K$, $m_{V, \alpha}$ for $\sum_{v}\alpha_v \leq K$, are bounded by a constant $\infty > M > \max(1, \lMin / J)$ for positive integer $J$.
\end{condition}
\begin{condition}\label{con:sample_moments}
	All sample moments of $Y$ up to degree $K$, $\hat m_{V, \alpha}$ for $\sum_{v}\alpha_v \leq K$, are within $\delta_1  < \lMin /(2J)$ of the corresponding population values of $P$.
\end{condition}

The constraint in Condition~\ref{con:bounded_moments} that $M > \max(1, \lMin / J)$ is only used to facilitate simplification of the error bounds and is not otherwise necessary. Condition~\ref{con:existence} is a faithfulness type assumption on $P$, and in Theorem~\ref{ch3:thm:deterministicCorrect} we make a further assumption on $\gamma$ which ensures strong faithfulness. However, it is not strictly stronger or weaker than the
Gaussian strong faithfulness type assumption. In particular we require the linear coefficients and error moments considered to be jointly ``sufficiently parentally faithful and non-Gaussian." So for a fixed sample size, there may be cases where the linear coefficients and error covariances do not satisfy Gaussian strong faithfulness, but do satisfy the non-Gaussian condition because the higher order moments are sufficiently non-Gaussian. However, the opposite may also occur where a set of linear coefficients and error moments satisfy Gaussian strong faithfulness but not the non-Gaussian condition.

Finally, let $\mathcal{P}_{F_K}(G)$ be the subset of distributions
$P\in\mathcal{P}(G)$ with $\tau^{(K)}_{v.C\rightarrow u} \neq 0$
whenever $u \in \pa(v)$ and
$C\subseteq V\setminus\left(\{u,v\}\cup \de(v)\right)$.  Then the set
of linear coefficients and error moments 
that induce
an element of $\mathcal{P}(G)\setminus\mathcal{P}_{F_K}(G)$ has
measure zero.  This set difference includes distributions which are
not parentally faithful with respect to $G$ and distributions for
which there exist a parent/child pair for which both error
distributions have Gaussian moments up to order $K$.  

\begin{theorem}\label{ch3:thm:deterministicCorrect}
	For some $p$-variate distribution $P$ and data $Y = \left(Y_1, \ldots, Y_n \right)$:
	\begin{enumerate}[label={(\roman*)},ref={\thetheorem(\roman*)}]
		\item \label{imp:uniqueness}Suppose Condition~\ref{con:existence} holds.  Then among all DAGs with maximum in-degree at most $J$, there exists a unique DAG G such that $P\in P_{F_K}(G)$
		\item \label{imp:correctness}Suppose Conditions~\ref{con:existence}-\ref{con:sample_moments} hold for constants which satisfy  
		\begin{equation}
		\begin{aligned}\label{ch3:eq:gammaCond}
		\gamma/2  > \delta_3 :&= 4M \delta_1 \left\{16(3^K)(J+ K)^{K}  K \frac{J^{(K+4)/2}M^{K+1}}{\lMin^{K+1}}\right\} \\
		&\quad + 2 \left[\delta_1 \left\{16(3^K)(J+ K)^{K}  K \frac{J^{(K+4)/2}M^{K+1}}{\lMin^{K+1}}\right\}\right]^2.
		\end{aligned}
		\end{equation}
		Then with pruning parameter $g = \gamma / 2$, Algorithm~\ref{alg:topOrder} will output $\hat G = G$.
		
	\end{enumerate}
\end{theorem}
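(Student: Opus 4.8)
The plan is to handle the two parts separately, reducing the correctness of the sample algorithm in part (ii) to a population-level statement plus a single uniform perturbation bound on the statistic.

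For part~\ref{imp:uniqueness} I would first obtain existence. The graph $G$ supplied by Condition~\ref{con:existence} already carries the defining property of $\mathcal{P}_{F_K}(G)$: part~\ref{con:existenceNonZero} gives $|\tau^{(K)}_{v.C\to u}|>\gamma>0$ for every edge $u\in\pa(v)$ and every admissible $C$, which is exactly the non-vanishing required in addition to $\mathcal{P}(G)$-membership, so $P\in\mathcal{P}_{F_K}(G)$. For uniqueness, suppose $P\in\mathcal{P}_{F_K}(\tilde G)$ for another DAG $\tilde G$ with in-degree $\le J$. The engine is a root-peeling argument. I would first show that a node $r$ is a root iff $\tau_{r.\emptyset\to u}=0$ for all $u\neq r$: for $G$ this follows from Condition~\ref{con:existenceZero} (taking $C=\emptyset\supseteq\pa(r)$) in one direction and Condition~\ref{con:existenceNonZero} in the other; for $\tilde G$ the same equivalence holds because $P\in\mathcal{P}_{F_K}(\tilde G)\subseteq\mathcal{P}(\tilde G)$, so Theorem~\ref{ch3:thm:testStat}\ref{imp:tauZero} supplies the vanishing and the defining property of $\mathcal{P}_{F_K}(\tilde G)$ supplies the non-vanishing. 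Hence $G$ and $\tilde G$ have identical root sets. Removing a common root and enlarging the conditioning set $C$ to contain the already-removed ancestors, the conditions for $G$ and Theorem~\ref{ch3:thm:testStat} together with the $\mathcal{P}_{F_K}$ property for $\tilde G$ continue to characterize the parents within the remaining subgraph. Recursing yields a single topological order valid for both graphs along which $\pa_G(v)=\pa_{\tilde G}(v)$ for every $v$, so $\tilde G=G$.

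For part~\ref{imp:correctness} I would first verify that the idealized algorithm run on the population quantities $\tau_{v.C\to u}$ returns $G$, then argue that replacing $\tau$ by $\hat\tau$ changes no decision as long as the estimation error stays below $\gamma/2$. At a generic step with ordered prefix $\Omega^{(z-1)}$ and remaining set $\Psi^{(z-1)}$, every candidate $C\subseteq\Omega^{(z-1)}$ automatically satisfies $C\cap\de(v)=\emptyset$. If $v$ is a root of the subgraph on $\Psi^{(z-1)}$ with $\pa(v)\subseteq\Omega^{(z-1)}$, then Corollary~\ref{ch3:thm:aggStat}\ref{imp:aggStatZero} gives $T(v,\mathcal{C}_v,\Psi^{(z-1)})=0$; if instead $v$ has a parent $u^\ast\in\pa(v)\cap\Psi^{(z-1)}$, then Condition~\ref{con:existenceNonZero} applied to every admissible $C$ yields $\min_C|\tau_{v.C\to u^\ast}|>\gamma$, hence $T>\gamma$. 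Thus $\arg\min_v T$ always selects a genuine root and the ordering is valid. The pruning step is analyzed the same way: for $p\in\pa(v)$, Condition~\ref{con:existenceNonZero} forces $\min_C|\tau_{v.C\to p}|>\gamma$, whereas for $p\notin\pa(v)$ the choice $C=\pa(v)$ is admissible and Condition~\ref{con:existenceZero} makes $\tau_{v.\pa(v)\to p}=0$, so the minimum vanishes. With threshold $g=\gamma/2$ these two regimes are separated by a margin of $\gamma/2$ on each side.

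It remains to control the sampling error, which is where the real work lies. I would prove a uniform bound $|\hat\tau_{v.C\to u}-\tau_{v.C\to u}|\le\delta_3$ over all $v,u$ and all $|C|\le J$, with $\delta_3$ the quantity in~\eqref{ch3:eq:gammaCond}. The statistic is a fixed polynomial in the empirical regression coefficients $\hat\beta_{vc.C}=(\hat\Sigma_{CC})^{-1}\hat\Sigma_{Cv}$ and the raw sample moments. Using Condition~\ref{con:min_eigen} together with $\delta_1<\lMin/(2J)$ from Condition~\ref{con:sample_moments}, I would show $\hat\Sigma_{CC}$ stays invertible with $\|(\hat\Sigma_{CC})^{-1}\|$ of order $1/\lMin$, and then bound $|\hat\beta_{vc.C}-\beta_{vc.C}|$ by perturbation of the inverse; the factors $J^{(K+4)/2}$ and $\lMin^{-(K+1)}$ track the dimension of $C$ and the powers of the inverse that accumulate. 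Propagating this through the residual moments $\E(Y_{v.C}^{K-1}Y_u)$, which are sums of at most $(J+K)^{K}$ monomials in the $\hat\beta$'s and the moments, each bounded by $M$ via Condition~\ref{con:bounded_moments}, and finally through the bilinear form defining $\tau$, produces the stated $\delta_3$ with its combinatorial constant. Once $\gamma/2>\delta_3$, a true parent keeps $|\hat\tau|>\gamma-\delta_3>\gamma/2=g$ and a non-parent attains $|\hat\tau|\le\delta_3<\gamma/2$ at its witnessing set, so root selection and pruning make exactly the same decisions on $\hat\tau$ as on $\tau$, and the algorithm outputs $\hat G=G$. I expect this uniform error-propagation bound, tracking how perturbations of $\hat\Sigma_{CC}^{-1}$ and the moments compound through a degree-$K$ polynomial, to be the main obstacle; the population correctness and the identifiability argument are comparatively routine given Theorem~\ref{ch3:thm:testStat} and Corollary~\ref{ch3:thm:aggStat}.
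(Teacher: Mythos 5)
Your proposal is correct and takes essentially the same route as the paper: the uniform bound $|\hat\tau_{v.C\to u}-\tau_{v.C\to u}|<\delta_3$ that you sketch (perturbation of $\hat\Sigma_{CC}^{-1}$, propagation through the residual moments, then through the bilinear form) is exactly the content of the paper's Lemmas~\ref{ch3:thm:errorBeta}--\ref{ch3:thm:errorTau}, and the root-peeling induction with the $\gamma/2$ margin separating zero from non-zero parameters is the paper's proof of part (ii). For part (i) the paper simply reruns part (ii) on population quantities (so $\delta_1=\delta_3=0$) to get uniqueness, which is your peeling argument in algorithmic packaging.
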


The main result of Theorem~\ref{ch3:thm:deterministicCorrect} is part
(ii). The identifiability of a DAG 
was previously shown by \citet{shimizu2006lingam} by appealing to
results for independent component analysis; however, our direct
analysis of rational functions of $Y$ allows for an explicit tolerance
for how sample moments of $Y$ may deviate from corresponding
population moments of $P$. This implicitly allows for model
misspecification; see
Corollary~\ref{ch3:thm:probGuarantee}. The proof of Theorem
\ref{ch3:thm:deterministicCorrect} requires Lemmas
\ref{ch3:thm:errorBeta}-
\ref{ch3:thm:errorTau}, which we develop first.  The lemmas are proven
in  the supplement. Recall that $\beta_{vC}$ are the population regression coefficients from \eqref{eq:pop-reg-coeffs}, and let $\hat \beta_{vC}$ denote the coefficients estimated from $Y$.
\begin{lemma}\label{ch3:thm:errorBeta}
	Suppose Conditions \ref{con:min_eigen}, \ref{con:bounded_moments}, and \ref{con:sample_moments} hold. Then for any $v \in V$, $C \subseteq V$, and $|C| \leq J$, 
	\[ \|\hat \beta_{vC} - \beta_{vC}\|_\infty < \delta_2 =  4 \frac{J^{3/2}M\delta_1}{\lMin^2}.\]
\end{lemma}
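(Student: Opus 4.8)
The plan is to treat this as a standard perturbation bound for the solution of a linear system, since $\beta_{vC} = A^{-1}b$ and $\hat\beta_{vC} = \hat A^{-1}\hat b$, where I abbreviate $A = \Sigma_{CC}$ and $b = \Sigma_{Cv}$ and let $\hat A,\hat b$ denote their empirical counterparts. First I would convert the three standing conditions into norm bounds on these objects. Because the entries of $A$, $b$ and their empirical versions are population or sample second moments, Condition~\ref{con:bounded_moments} gives $\|b\|_2 \le \sqrt{J}\,M$ (at most $J$ entries, each at most $M$), while Condition~\ref{con:sample_moments} gives the entrywise bounds $\|\hat A - A\|_{\max} \le \delta_1$ and $\|\hat b - b\|_{\max}\le\delta_1$. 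Using $|C|\le J$, these translate into the spectral/Euclidean bounds $\|\hat A - A\|_2 \le J\delta_1$ and $\|\hat b - b\|_2 \le \sqrt{J}\,\delta_1$.

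The key preliminary step, and the only place requiring real care, is controlling $\hat A^{-1}$. By Cauchy eigenvalue interlacing, the principal submatrix $A=\Sigma_{CC}$ inherits $\lambda_{\min}(A)\ge\lMin$ from Condition~\ref{con:min_eigen}, so $\|A^{-1}\|_2\le 1/\lMin$. Invertibility and a norm bound for $\hat A$ are exactly where the threshold in Condition~\ref{con:sample_moments} enters: since $\|\hat A - A\|_2 \le J\delta_1 < \lMin/2$, Weyl's inequality yields $\lambda_{\min}(\hat A) \ge \lMin - \lMin/2 = \lMin/2 > 0$, whence $\hat A$ is invertible with $\|\hat A^{-1}\|_2 \le 2/\lMin$. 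I would flag this as the main obstacle, though it is mild: without the hypothesis $\delta_1 < \lMin/(2J)$ the empirical Gram matrix could fail to be positive definite and the estimator would be undefined.

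With these bounds in hand the rest is a short telescoping argument. I would write
\[
\hat A^{-1}\hat b - A^{-1}b \;=\; \hat A^{-1}(\hat b - b) \;+\; \hat A^{-1}(A - \hat A)A^{-1} b ,
\]
applying the resolvent identity $\hat A^{-1} - A^{-1} = \hat A^{-1}(A-\hat A)A^{-1}$ to the second term. Submultiplicativity then bounds the first summand by $(2/\lMin)\sqrt{J}\,\delta_1$ and the second by $(2/\lMin)(J\delta_1)(1/\lMin)\sqrt{J}\,M = 2J^{3/2}M\delta_1/\lMin^2$. Passing from the Euclidean norm to $\|\cdot\|_\infty$ (which only decreases it) and invoking $M > \lMin/J$ from Condition~\ref{con:bounded_moments} to absorb the first term into the second, the two pieces combine to the claimed $4J^{3/2}M\delta_1/\lMin^2 = \delta_2$, with the strictness of $M>\lMin/J$ delivering the strict inequality.
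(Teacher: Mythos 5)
Your proposal is correct and follows essentially the same route as the paper's proof: the identical decomposition $\hat A^{-1}\hat b - A^{-1}b = \hat A^{-1}(\hat b - b) + (\hat A^{-1}-A^{-1})b$, the same entrywise-to-spectral norm conversions using $|C|\le J$, and the same use of $\delta_1 < \lMin/(2J)$ and $M>\lMin/J$ to produce the factor $4$. The only cosmetic difference is that you control $\vertiii{\hat A^{-1}}_2\le 2/\lMin$ via Weyl's inequality, whereas the paper invokes the Horn--Johnson perturbation bound $\vertiii{\Sigma_{CC}^{-1}E}_2/(1-\vertiii{\Sigma_{CC}^{-1}E}_2)$; these are interchangeable here.
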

Recall, that $Y_{vi.C} = Y_{vi} - \sum_{c \in C}\beta_{vc.C} Y_{ci}$. Let $Z_{v.C}$ denote the analogous quantity for $Z \sim P$, and let $\hat Y_{vi.C} = Y_{vi} - \sum_{c\in C}\hat \beta_{vc.C}Y_{ci}$. 
\begin{lemma}\label{ch3:thm:errorMoment}
	Suppose that Conditions \ref{con:min_eigen}, \ref{con:bounded_moments}, and \ref{con:sample_moments} hold. Let $s$, $r$ be non-negative integers such that $s + r \leq K$, and let $Z \sim P$. For any $v, u \in V$ and $C\subseteq V \setminus\{u,v\}$ such that $|C| \leq J$,
	\[\left|\frac{1}{n} \sum_i \hat Y_{vi.C}^s Y_{ui}^r - \E\left(Z_{v.C}^s Z_{u}^r\right)\right|  <\delta_1 \Phi(J,K,M, \lMin)\]
	where
	\begin{equation}
	\begin{aligned}\label{ch3:eq:phiDef}
	\Phi(J,K,M, \lMin) &= \left\{16(3^K)(J+ K)^{K}  K \frac{J^{(K+4)/2}M^{K+1}}{\lMin^{K+1}}\right\}.
	\end{aligned}
	\end{equation}
\end{lemma}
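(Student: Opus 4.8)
The plan is to bound the target error by the triangle inequality after inserting an intermediate quantity that uses the \emph{population} regression coefficients on the \emph{sample} data. Set
$$A = \frac1n\sum_i \hat Y_{vi.C}^s Y_{ui}^r, \qquad B = \frac1n\sum_i Y_{vi.C}^s Y_{ui}^r, \qquad D = \E\!\left(Z_{v.C}^s Z_u^r\right),$$
so that $|A - D| \le |A - B| + |B - D|$. The point of the split is that $B$ and $D$ are the sample average and the population expectation of the \emph{same} polynomial, namely the expansion of $(Y_v - \sum_{c}\beta_{vc.C}Y_c)^s Y_u^r$; hence $|B-D|$ only sees the gap between sample and population moments, which Condition~\ref{con:sample_moments} controls. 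By contrast, $A$ and $B$ are averages over the same data of polynomials whose coefficients differ only through $\hat\beta_{vc.C}$ versus $\beta_{vc.C}$, a gap controlled by Lemma~\ref{ch3:thm:errorBeta}.

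Before expanding I would record uniform magnitude bounds. Since $C\cap\{u,v\}=\emptyset$ and $|C|\le J$, the polynomial is built from at most $J+2$ distinct coordinates and, by the multinomial (stars-and-bars) count, has at most $\binom{s+|C|}{|C|}\le (J+K)^K$ monomials, each of total degree $s+r\le K$. For the coefficients, Cauchy interlacing gives $\lambda_{\min}(\Sigma_{CC})\ge \lMin$, while Condition~\ref{con:bounded_moments} bounds the entries of $\Sigma_{Cv}$ by $M$, so $\|\beta_{vC}\|_\infty\le\|\beta_{vC}\|_2\le \sqrt{J}M/\lMin$. Because $\lMin\le \min_v\E(Z_v^2)\le M$ (the least eigenvalue is at most the smallest diagonal entry), the quantity $b:=3\sqrt{J}M/\lMin$ satisfies $b\ge 3>1$; combining the coefficient bound with Lemma~\ref{ch3:thm:errorBeta} and $\delta_1<\lMin/(2J)$ from Condition~\ref{con:sample_moments} shows that $|\beta_{vc.C}|$ and $|\hat\beta_{vc.C}|$ are both at most $b$. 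Likewise each sample moment obeys $|\hat m_{H,\alpha}|\le M+\delta_1\le 2M$.

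For the moment-error term $|B-D|$ I would apply Condition~\ref{con:sample_moments} monomial by monomial: each of the at most $(J+K)^K$ coefficients is a product of at most $s\le K$ factors bounded by $b$, hence at most $b^K$, and each monomial contributes an error at most $\delta_1$, giving $|B-D|\le (J+K)^K b^K\delta_1$. For the coefficient-error term, write $A-B=\sum_m(\hat a_m-a_m)\,\hat m_{H_m,\alpha^{(m)}}$ over the common set of monomials, where $a_m,\hat a_m$ are products of at most $K$ regression coefficients. The product-difference inequality $|\prod_j x_j-\prod_j y_j|\le K\,b^{K-1}\max_j|x_j-y_j|$ for factors bounded by $b$, together with Lemma~\ref{ch3:thm:errorBeta}, yields $|\hat a_m-a_m|\le K b^{K-1}\delta_2$; multiplying by $|\hat m_{H_m,\alpha^{(m)}}|\le 2M$ and the monomial count gives $|A-B|\le 2M(J+K)^K K\,b^{K-1}\delta_2$.

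Substituting $\delta_2=4J^{3/2}M\delta_1/\lMin^2$ and $b^K=3^KJ^{K/2}M^K/\lMin^K$ and summing, every power of $J$, $K$, $M$, $\lMin$ together with the numerical constants is dominated by $\Phi$ in~\eqref{ch3:eq:phiDef}; here the inequality $\lMin<JM$ (implied by Condition~\ref{con:bounded_moments}) is what lets the generous constant $16$ absorb the lower-order contribution $|B-D|$ into the dominant term from $|A-B|$. I expect the main obstacle to be precisely this final bookkeeping — tracking the combinatorial monomial count and carrying the product-difference step so that the combined bound lands inside $\Phi$ — rather than any single conceptual difficulty. The technical crux is the treatment of the estimated coefficients via the product-difference inequality, which is where the dependence on Lemma~\ref{ch3:thm:errorBeta} enters and where the extra factors of $J$, $K$, $M/\lMin$ relative to the plain moment error are generated.
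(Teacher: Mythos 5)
Your proposal follows essentially the same route the paper describes for this lemma---a Lipschitz/perturbation argument on a bounded domain, splitting the error via the triangle inequality into a coefficient-perturbation piece (controlled by Lemma~\ref{ch3:thm:errorBeta} through a product-difference inequality) and a sample-versus-population moment piece (controlled by Condition~\ref{con:sample_moments}), with the eigenvalue-interlacing and $\|\beta_{vC}\|_\infty \le \sqrt{J}M/\lMin$ bounds supplying the uniform domain bounds. The one imprecision is that your ``monomial count times maximal coefficient'' bound drops the multinomial coefficients $\binom{s}{\bm{w}}$ arising when $(Y_v - \sum_c \beta_{vc.C}Y_c)^s$ is expanded; the repair is to bound the total coefficient mass by $\sum_{\bm{w}}\binom{s}{\bm{w}}\prod_c |\beta_{vc.C}|^{w_c} \le (1 + J\sqrt{J}M/\lMin)^K \le (2J^{3/2}M/\lMin)^K$ instead, and one can check that both resulting terms still sit comfortably inside $\delta_1\Phi(J,K,M,\lMin)$, so the conclusion is unaffected.
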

The proof of Lemma~\ref{ch3:thm:errorMoment} relies on the fact that
the map from moments of $Z$ to the quantities of interest are
Lipschitz continuous within a bounded domain.
\begin{lemma}\label{ch3:thm:errorTau}
	Suppose that Conditions \ref{con:min_eigen}, \ref{con:bounded_moments}, and \ref{con:sample_moments} hold. Then
	\[|\hat \tau_{v.C \rightarrow u} - \tau_{v.C \rightarrow u}| < 4M \delta_1 \Phi(J, K, M, \lMin) + 2 \left\{\delta_1 \Phi(J, K, M, \lMin)\right\}^2 = \delta_3\]
	for the function $\Phi(J, K, M, \lMin)$ given in Lemma~\ref{ch3:thm:errorMoment}.
\end{lemma}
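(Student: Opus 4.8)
The plan is to treat both the parameter and its estimator as the same bilinear form in four moments, and then reduce the lemma to Lemma~\ref{ch3:thm:errorMoment} together with an elementary product-difference estimate. First I would introduce the four population factors $A=\E_P(Z_{v.C}^{K-1}Z_u)$, $B=\E_P(Z_{v.C}^2)$, $D=\E_P(Z_{v.C}^K)$, and $E=\E_P(Z_{v.C}Z_u)$, so that the definition~\eqref{eq:tauDefinition} reads $\tau_{v.C\rightarrow u}=AB-DE$, while the estimator is $\hat\tau_{v.C\rightarrow u}=\hat A\hat B-\hat D\hat E$, where $\hat A,\hat B,\hat D,\hat E$ are the corresponding empirical averages formed from the estimated residuals $\hat Y_{vi.C}$. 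Each of these four empirical averages has the form $\frac1n\sum_i\hat Y_{vi.C}^{s}Y_{ui}^{r}$ with target $\E_P(Z_{v.C}^s Z_u^r)$ for $(s,r)\in\{(K-1,1),(2,0),(K,0),(1,1)\}$, so in every case $s+r\le K$ and Lemma~\ref{ch3:thm:errorMoment} applies. Abbreviating $\epsilon:=\delta_1\Phi(J,K,M,\lMin)$, this gives $|\hat A-A|,|\hat B-B|,|\hat D-D|,|\hat E-E|<\epsilon$.

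Next I would record a uniform bound $\max(|A|,|B|,|D|,|E|)\le M$ on the four factors. Here $B$ is the variance of the population regression residual $Z_{v.C}$, hence at most $\Sigma_{vv}\le M$, and $|E|\le\sqrt{\E_P(Z_{v.C}^2)\,\E_P(Z_u^2)}\le M$ by Cauchy--Schwarz. The remaining factors $A$ and $D$ are moments of $Z_{v.C}$ (jointly with $Z_u$) of total order at most $K$; these are controlled by the order-$\le K$ moment bound of Condition~\ref{con:bounded_moments}, i.e.\ by the same boundedness of $\E_P(Z_{v.C}^s Z_u^r)$ on the compact moment domain that underlies the Lipschitz argument proving Lemma~\ref{ch3:thm:errorMoment}, and we take $M$ to dominate them as well.

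With these two ingredients the conclusion follows by expanding the product difference. Writing $\hat A=A+e_A$, and analogously for $B,D,E$, with each $|e_\bullet|<\epsilon$, a direct cancellation yields
\[
\hat\tau_{v.C\rightarrow u}-\tau_{v.C\rightarrow u}=Ae_B+Be_A+e_Ae_B-De_E-Ee_D-e_De_E,
\]
so that, by the triangle inequality and the factor bounds above,
\[
|\hat\tau_{v.C\rightarrow u}-\tau_{v.C\rightarrow u}|\le(|A|+|B|+|D|+|E|)\,\epsilon+2\epsilon^2\le 4M\epsilon+2\epsilon^2,
\]
which is exactly $\delta_3$ for $\epsilon=\delta_1\Phi(J,K,M,\lMin)$.

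The bilinear bookkeeping is routine; the one step that needs care---and the main obstacle---is the uniform factor bound $\max(|A|,|B|,|D|,|E|)\le M$. While $B$ and $E$ are immediately controlled by second moments, the higher-order residual moments $A=\E_P(Z_{v.C}^{K-1}Z_u)$ and $D=\E_P(Z_{v.C}^K)$ are moments of a linear combination whose regression coefficients $\beta_{vc.C}$ can be large, so one cannot bound them by expanding coefficient by coefficient and must instead rely on the moment control of Condition~\ref{con:bounded_moments} (equivalently, the boundedness established in proving Lemma~\ref{ch3:thm:errorMoment}). It is precisely this factor bound that supplies the multiplier $4M$ in $\delta_3$, since every bare error $e_\bullet$ is multiplied by one of the four factors.
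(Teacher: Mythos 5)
Your argument is essentially the paper's own proof: the paper likewise splits $\hat\tau-\tau$ into the two product differences, applies Lemma~\ref{ch3:thm:errorMoment} to each empirical factor (each being a moment $\frac1n\sum_i \hat Y_{vi.C}^sY_{ui}^r$ with $s+r\le K$), and bounds each product difference by $2M\epsilon+\epsilon^2$ with $\epsilon=\delta_1\Phi$, summing to $4M\epsilon+2\epsilon^2=\delta_3$. The one subtlety you flag---that the higher-order residual moments $\E_P(Z_{v.C}^{K-1}Z_u)$ and $\E_P(Z_{v.C}^K)$ must also be dominated by $M$ even though they are not raw moments of $P$---is handled no more explicitly in the paper, which simply invokes the bound $M$ for these quantities as well, so your treatment is faithful to the intended argument.
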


The proof of Lemma~\ref{ch3:thm:errorTau} is an application of the
triangle inequality.
\begin{proof}[of Theorem~\ref{ch3:thm:deterministicCorrect}]
	(ii)
	We proceed by induction. By Lemma \ref{ch3:thm:errorTau} and assuming \eqref{ch3:eq:gammaCond}, each statistic $\hat \tau_{v.C \rightarrow u}$ is within $\delta_3 < \gamma/2$ of the corresponding population quantity. Thus, any statistic corresponding to a parameter with value 0 is less than $\gamma / 2$ and, by Condition~\ref{con:existence} and the condition on $\gamma$ in \eqref{ch3:eq:gammaCond}, all statistics corresponding to a non-zero parameter are greater than $\gamma / 2$.
	
	Recall that $\Theta^{(z)}$ is a topological ordering of
	nodes.   Assume for some step $z$, that $\Theta^{(z-1)}$ is
	consistent with a valid ordering of $G$. Let $R^{(z)} = \{v
	\in \Psi^{(z-1)}: \an(r) \subseteq \Theta^{(z-1)}\}$ so that
	any $r \in R^{(z)}$ is a root in the subgraph induced by
	$\Psi^{(z-1)}$ and $\Theta^{(z)} = (\Theta^{(z-1)} \cup \{r\})$
	is consistent with $G$.  The base case for $z = 1$ is trivially satisfied since $\Theta^{(0)} = \emptyset$.
	
	Setting $g = \gamma / 2$ does not incorrectly prune any parents, so $\pa(r) = \mathcal{C}_r^{(z)}$, which  implies for all $r \in R^{(z)}$ that $\hat T(r, \mathcal{C}_r^{(z)}, \Phi^{(z-1)}) < \gamma / 2$. Similarly, for any $v \in \Psi^{(z-1)}\setminus R^{(z)}$, there exists $u \in \Psi^{(z-1)}$ with $|\hat \tau_{v.C\rightarrow u}| > \gamma/2$ for all $C \subseteq \Theta^{(z-1)}$. Thus,
	$\hat T\big(r, \mathcal{C}^{(z)}_r, \Psi^{(z-1)}\big) < \hat T\big(v, \mathcal{C}^{(z)}_v, \Psi^{(z-1)}\big)$
	for every $r\in R^{(z)}$ and $v \in \Psi^{(z-1)}\setminus
	R^{(z)}$. This implies the next root selected, $\arg \min_{v \in \Psi^{(z-1)}} \hat T
	\big(v, \mathcal{C}^{(z)}_v, \Psi^{(z-1)}\big)$ must be in
	$R^{(z)}$, and thus $\Theta^{(z)}$ remains consistent with $G$. 
	
	(i) The fact that $P \in \mathcal{P}_{F_K}(G)$ follows
	directly from the definition. To show uniqueness, we use
	population quantities so that $\delta_1 = 0$ which in turn
	implies $\delta_3 = 0$. Then for any $\gamma > 0$,
	Algorithm~\ref{alg:topOrder} will return $G$. Thus, by
	\ref{imp:correctness}, $G$ must be unique.
\end{proof}

\begin{remark}
	As stated Theorem~\ref{ch3:thm:deterministicCorrect} concerns
an explicit cut-off $g$, whereas 
in practice we specify a tuning parameter $\alpha$ that is easier to
interpret and tune.  If $\alpha\le 1$, it holds under the conditions of
Theorem~\ref{ch3:thm:deterministicCorrect} that 
	Algorithm~\ref{alg:topOrder} returns a topological ordering
        consistent with $G$, but $\hat E$ may be a superset of
        $E$. However, there exists $\alpha \geq 1$ which will
        recover the exact graph.  

        To see this note that $\alpha \le 1$ ensures that
        $g^{(z)} < \gamma / 2$ under the specified conditions, so no
        parents are pruned incorrectly and the estimated
        topological ordering is correct. This, however, may
        not remove all ancestors that are not parents, so the
        estimated edge set may be a superset of the true edge
        set.  Letting instead
\begin{equation}\label{eq:optAlpha}
\alpha = \frac{\min_{v}\min_{a \in \pa(v)}\min_{C\cap \de(v) = \emptyset } \vert \hat \tau_{v.C \rightarrow a} \vert}
{ \max_{v}\max_{ a \in \an(v)\setminus \pa(v)}\min_{ C\cap \de(v) = \emptyset }\vert \hat \tau_{v.C \rightarrow a}\vert },
\end{equation}
will correctly prune ancestors and not parents. Because all sample moments are close to their population values, the denominator must be less than $\gamma/2$ and strong parental faithfulness further implies that the numerator is greater than $\gamma/2$ so \eqref{eq:optAlpha} is greater than 1. However, setting $\alpha$ too large may result in an incorrect estimate of the ordering since a true parent may be errantly pruned. Thus, we advocate a more conservative approach of setting $\alpha \leq 1$ which is more robust to violations of strong faithfulness.
\end{remark}

\begin{remark}
	Suppose $P_Y \in \mathcal{P}(G)$ but is not necessarily parentally faithful with respect to $G$. If $\alpha = 0$ and $\beta_{vu} \neq 0$ for all $(u, v) \in E$, then for generic error moments a correct ordering will still be recovered consistently as $\delta_1 \rightarrow 0$.

Indeed, Corollary~\ref{ch3:thm:aggStat}\ref{imp:aggStatNonZero} holds without parental faithfulness. So for generic error moments, there exists $\gamma > 0$ such that $T(v, \mathcal{C}_v^{(z - 1)}, \Phi^{(z-1)}) > \gamma$ for all $v \in \Phi^{(z-1)} \setminus R^{(z)}$ for all steps $z$. However, without parental faithfulness, a parent node may be errantly pruned if $\alpha > 0$.
To ensure Corollary~\ref{ch3:thm:aggStat}\ref{imp:aggStatZero} holds,
we need $\pa(r) \subseteq \mathcal{C}_v^{(z)}$ for all $r \in
R^{(z)}$, which is satisfied by letting $\mathcal{C}_r^{(z)} =
\Theta^{(z-1)}$. For fixed $\gamma$, since $\delta_3 \rightarrow 0$ as
$\delta_1 \rightarrow 0$, there exists a $\delta_1$ so that $\gamma > 2\delta_3$.
\end{remark}

\subsection{High-dimensional consistency}
We now consider a sequence of graphs, observations, and distributions
indexed by the number of variables $p$.  
For notational brevity, we do not explicitly include the index $p$ in
the notation, and keep simply writing $G$, $Y$, $P_Y$ and $P$ for
these sequences. The following corollary states conditions sufficient
for the 
conditions of Theorem~\ref{ch3:thm:deterministicCorrect} to hold with
probability tending to 1. We first make explicit assumptions on $P_Y$,
with $m^\star_{V, \alpha}$ denoting the population moments of
$P_Y$. Again, we allow for misspecification, but require control of
the $L_\infty$ distance between population moments of $P_Y$ and some
$P \in \mathcal{P}_{F_K}(G)$.

\begin{condition}\label{con:log_concave}
	$P_Y$ is a log-concave distribution.
\end{condition}
\begin{condition}\label{con:bounded_variance}
	All population moments of $P_Y$ up to degree $2K$, $m^\star_{V, \alpha}$ for $\sum_{v}\alpha_v \leq 2K$, are bounded by $M - \xi > \max(1, \lMin / J)$.
\end{condition}
\begin{condition}\label{con:misspecification}
	Each population moment of $Y$ up to degree $K$, $m^\star_{V, \alpha}$ for $\sum_{v}\alpha_v \leq K$, is within $\xi$ of the corresponding population moment of $P$.
\end{condition}
When $Y$ is actually generated from a recursive linear structural equation model, Condition~\ref{con:misspecification} trivially holds with $\xi = 0$ and log-concave errors imply that $Y$ is log-concave. 

\begin{corollary} \label{ch3:thm:probGuarantee}
	For a sequence of distributions $P$ and data $Y$ assume Conditions~\ref{con:existence}, \ref{con:min_eigen}, \ref{con:log_concave}, \ref{con:bounded_variance}, and \ref{con:misspecification} hold. For pruning parameter $g = \gamma / 2$, Algorithm \ref{alg:topOrder} will return the graph $\hat G = G$ with probability tending to $1$ if
	\begin{gather*}
	\frac{\log(p)}{n^{1/(2K)}} \frac{J^{5/2}K^{5/2}M^2 }{\gamma^{1/2}\lMin^{3/2}} \rightarrow 0, \qquad\; \qquad
	\xi \frac{3^K K^{K + 1} J^{(3K)/2 + 2} M^{K+2}}{\gamma \lMin^{K+1}} \rightarrow 0 \numberthis
	\end{gather*}
	when $p \rightarrow \infty $ and $\gamma, \lMin < 1 < M$.
\end{corollary}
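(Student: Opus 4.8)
The plan is to reduce everything to the deterministic guarantee of Theorem~\ref{ch3:thm:deterministicCorrect}(ii), which already yields $\hat G = G$ whenever Conditions~\ref{con:existence}--\ref{con:sample_moments} and the threshold inequality~\eqref{ch3:eq:gammaCond} hold. Here Conditions~\ref{con:existence} and \ref{con:min_eigen} are assumed directly, so it suffices to establish Conditions~\ref{con:bounded_moments} and \ref{con:sample_moments} together with~\eqref{ch3:eq:gammaCond} with probability tending to one. Condition~\ref{con:bounded_moments} is immediate and deterministic: by Condition~\ref{con:misspecification} each moment of $P$ differs from the corresponding moment of $P_Y$ by at most $\xi$, and by Condition~\ref{con:bounded_variance} the latter is bounded by $M-\xi$, so every moment of $P$ up to degree $2K$ (hence up to degree $K$) is bounded by $M$. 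The entire probabilistic burden therefore rests on Condition~\ref{con:sample_moments}, i.e.\ on showing that the maximal deviation $\delta_1 = \max_{|\alpha|\le K}|\hat m_{V,\alpha} - m_{V,\alpha}|$ is, with high probability, small enough both to satisfy $\delta_1 < \lMin/(2J)$ and to force $\delta_3 < \gamma/2$ in~\eqref{ch3:eq:gammaCond}.

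First I would identify the target accuracy. Since $\delta_3 = 4M\delta_1\Phi + 2(\delta_1\Phi)^2$ with $\Phi = \Phi(J,K,M,\lMin)$ from~\eqref{ch3:eq:phiDef}, solving $\delta_3 < \gamma/2$ shows it is enough to have $\delta_1 \le \delta_1^\star$ for a threshold $\delta_1^\star \asymp \gamma/(M\Phi) \asymp \gamma\,\lMin^{K+1}/\bigl(3^K (J+K)^K K\, J^{(K+4)/2} M^{K+2}\bigr)$, which under the regime $\gamma,\lMin<1<M$ also comfortably lies below $\lMin/(2J)$. Next I would split the deviation via the triangle inequality,
\[
|\hat m_{V,\alpha} - m_{V,\alpha}| \;\le\; |\hat m_{V,\alpha} - m^\star_{V,\alpha}| + |m^\star_{V,\alpha} - m_{V,\alpha}|,
\]
where the second, misspecification term is at most $\xi$ by Condition~\ref{con:misspecification}. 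The second displayed hypothesis of the corollary is precisely the statement that $\xi$ is asymptotically negligible relative to $\delta_1^\star$, so misspecification consumes only a vanishing fraction of the budget and it remains to bound the sampling error $\max_{|\alpha|\le K}|\hat m_{V,\alpha} - m^\star_{V,\alpha}|$ below $\delta_1^\star - \xi \sim \delta_1^\star$.

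The crux is the concentration of a single sample moment. Fix a monomial $W_i = \prod_{v} Y_{vi}^{\alpha_v}$ with $|\alpha|\le K$, so that the sample moment $n^{-1}\sum_i W_i$ has mean $m^\star_{V,\alpha}$. Because $P_Y$ is log-concave (Condition~\ref{con:log_concave}), $W_i$ is a degree-$\le K$ monomial of a log-concave vector, and moment-comparison (hypercontractivity) for log-concave measures bounds all higher moments of $W_i$ in terms of low-order ones; combined with the degree-$2K$ bound of Condition~\ref{con:bounded_variance}, in particular $\var(W_i)\le M$, this yields a sub-Weibull tail whose exponent degrades with the degree $K$. A Bernstein-type bound for the centered average then gives, for each fixed $\alpha$, a large-deviation tail of the form $2\exp\bigl(-c\,(n s^2/v)^{1/(2K)}\bigr)$ at deviation $s$, with variance proxy $v$ controlled by $M$ and $\lMin$; the exponent $1/(2K)$ is the rate-determining feature reflecting the degree-$K$ monomial. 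Taking a union bound over the $\binom{p+K}{K} = \mathcal{O}(p^K)$ moments of degree at most $K$ contributes a factor $K\log p$ to the exponent, and requiring this to vanish at $s = \delta_1^\star$, after substituting the expression for $\delta_1^\star$, is what produces the first displayed condition $\log(p)\, n^{-1/(2K)} J^{5/2}K^{5/2}M^2\gamma^{-1/2}\lMin^{-3/2}\to 0$. On the resulting high-probability event all sample moments lie within $\delta_1^\star$ of the population moments of $P$, so Conditions~\ref{con:bounded_moments}--\ref{con:sample_moments} and~\eqref{ch3:eq:gammaCond} hold and Theorem~\ref{ch3:thm:deterministicCorrect}(ii) delivers $\hat G = G$.

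I expect the main obstacle to be this concentration step: obtaining the sub-Weibull moment-growth bound for a degree-$K$ monomial of a log-concave vector with constants explicit in $J,K,M,\lMin$, and then inverting the union-bounded tail so that the degree-$K$ dependence materializes precisely as the $n^{1/(2K)}$ rate. The remaining bookkeeping---translating $\delta_1^\star$ into the stated polynomial factors and checking that the misspecification term is dominated by the sampling-error budget---is routine once the concentration inequality is in hand.
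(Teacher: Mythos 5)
Your proposal is correct and follows essentially the same route as the paper: reduce to Theorem~\ref{ch3:thm:deterministicCorrect}(ii), note that Conditions~\ref{con:bounded_variance} and \ref{con:misspecification} give Condition~\ref{con:bounded_moments}, solve Lemma~\ref{ch3:thm:errorTau} for the target accuracy $\delta_1^\star$ (the paper's $\delta_4$), split sampling error from the misspecification term $\xi$, and concentrate each degree-$\le K$ sample moment at rate $n^{1/(2K)}$ with a union bound over $\mathcal{O}(p^K)$ moments. The only cosmetic difference is that the paper imports the sub-Weibull concentration bound for polynomials of log-concave vectors directly as Lemma B.3 of \citet{lin2016estimation}, whereas you sketch deriving it from hypercontractivity plus a Bernstein-type argument.
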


\begin{proof}
	Conditions~\ref{con:bounded_variance} and \ref{con:misspecification} imply Condition \ref{con:bounded_moments}. It remains to be shown that Condition~\ref{con:sample_moments} and \eqref{ch3:eq:gammaCond} hold for the $\gamma$ specified in Condition~\ref{con:existence}. Solving the inequality in Lemma~\ref{ch3:thm:errorTau} for $\delta_1$ shows \eqref{ch3:eq:gammaCond} will be satisfied if the sample moments of $Y$ are within $\delta$ of the population moments 
	such that $\delta + \xi \leq \delta_1$ with $\delta_1$ less than
	\begin{equation*}
	\begin{aligned}
	\min\left[\frac{-8M \Phi + \left\{(8M\Phi)^2 + 16\Phi^2\gamma\right\}^{1/2}}{8\Phi^2}, \frac{\lMin}{2J}, M\right]  & = \min\left\{\frac{\left(M^2 + \gamma / 4\right)^{1/2} - M}{\Phi}, \frac{\lMin}{2J}\right\} \\
	\end{aligned}
	\end{equation*}
	for $\Phi$ defined in \eqref{ch3:eq:phiDef}. Since $J, K, M > 1$, $\gamma, \lMin < 1$ ensure that first term is the relevant term. We further simplify the expression since
	\[\left(M^2 + \gamma / 4\right)^{1/2} \geq M + \gamma \min_{t\in (0, \gamma)} \left.\frac{\partial \left(M^2 + \gamma / 4\right)^{1/2}}{\partial \gamma}\right|_{\gamma = t} = M + \frac{\gamma}{8 \left(M^2 + \gamma / 4\right)^{1/2}}. \]
	Thus, the conditions of Theorem~\ref{ch3:thm:deterministicCorrect} will be satisfied if
	\begin{align*}
	\delta + \xi  \leq \frac{\gamma}{8\left(M^2 + \gamma / 4\right)^{1/2}\Phi} =: \delta_4.
	\end{align*}
	Specifically, we analyze the case when $\xi < \delta_4/2$ and $|\hat m_{V,a} - m_{V,a}| < \delta < \delta_4/2$ for all $|a| \leq K$.
	If $Y_v$ follows a log-concave distribution, we can apply Lemma B.3 of \citet{lin2016estimation} which states for
	$f$, some $K$ degree polynomial of log-concave random variables $Y = (Y_1, \ldots, Y_n)$, and some absolute constant, $L$, if
	\begin{equation*}
	\frac{2}{L} \left(\frac{\delta}{(e)\left[\var\left\{ f(Y)\right\}\right]^{1/2}} \right)^{1/K} \geq 2
	\end{equation*} then
	\begin{equation*}
	{\rm Pr}\left[|f(Y) - \E\left\{f(Y)\right\}| > \delta \right] \leq \exp\left\{\frac{-2}{L} \left(\frac{\delta}{\left[\var\left\{ f(Y)\right\}\right]^{1/2}} \right)^{1/K}\right\}.
	\end{equation*}
	
	Letting $f(Y)$ be the sample moments of $Y$ up to degree $K$, Condition~\ref{con:bounded_variance} implies the variance is bounded by $M / n$. When $p > 2$, there are $\binom{p + K }{p} < p^{K}$ moments with degree at most $K$, then by a union bound, when $0 <\xi < \delta_4/2$,
	\begin{equation*}
	\begin{aligned}
	{\rm Pr}\left(\hat G = G\right) &\geq 1 - {\rm Pr}\left(|\hat
          m_{V,a} - m_{V,a}| > \delta_4/2\text{ for any } |a| \leq K
        \right) \\
	&\geq 1 - p^K \exp\left[\frac{-2}{L} \left\{\frac{ \delta_4/2}{\left(M/n\right)^{1/2}} \right\}^{1/K}\right]
	\end{aligned}
	\end{equation*}
	when 
	\begin{equation}\label{eq:sampleSize}
	\frac{2n^{1/(2K)}}{L} \left(\frac{ \delta_4/2}{eM^{1/2}} \right)^{1/K} \geq 2.
	\end{equation}
	In the asymptotic regime, where $p$ is increasing,
	\begin{equation*}
	\begin{aligned}
	\frac{LM^{1/(2K)}K\log(p)}{\left( \delta_4/2\right)^{1/K}n^{1/(2K)}} \rightarrow 0
	\end{aligned}
	\end{equation*}
	implies that the inequality in \eqref{eq:sampleSize} will be satisfied and
	\[ p^K \exp\left[\frac{-2}{L} \left\{\frac{\delta_4/2 }{\left(M/n\right)^{1/2}} \right\}^{1/K}\right] \rightarrow 0. \]
	Plugging in the expression for $\delta_4$, we find
	\begin{align*}
	\frac{LM^{1/(2K)}K\log(p)}{\left( \delta_4/2\right)^{1/K}2n^{1/(2K)}} & = \frac{LM^{1/(2K)}K\log(p)}{2n^{1/(2K)}} \times \\
	&\quad \left\{\frac{16\left(M^2 + \gamma / 4\right)^{1/2} 16 (3^K)(J+K)^{K}K J^{(K+4)/2}M^{K+1} }{\gamma\lMin^{K+1}}\right\}^{1/K}.
	\end{align*}
	This quantity is of order $\mathcal{O}\big(
	\big(\log(p)J^{5/2}K^{5/2}M^2\big)/\big(n^{1/(2K)}
	\gamma^{1/2}\lMin^{3/2}\big)\big)$ when assuming that $\gamma < M$.
	In addition, $\xi < \delta_4 /2$ will be satisfied if
	$\frac{2\xi}{\delta_4} \rightarrow 0$. This ratio is
	\begin{align*}
	\frac{2\xi}{\delta_4} &= 2\xi \left\{\frac{16\left(M^2 + \gamma / 4\right)^{1/2} 16 (3^K)(J+K)^{K}K J^{(K+4)/2}M^{K+1} }{\gamma\lMin^{K+1}}\right\}
	\end{align*}
	which is $\mathcal{O}\big(\left(\xi 3^K K^{K + 1} J^{(3K)/2 + 2} M^{K+2}\right)/ \big(\gamma \lMin^{K+1}\big)\big)$ when $\gamma < M$.
\end{proof}

When fixing the other terms, Corollary~\ref{ch3:thm:probGuarantee}
requires $\log(p) = o(n^{1/(2K)})$. 
Corollary~\ref{ch3:thm:probGuarantee} does not preclude $J$ from growing with $n$ and $p$; however, the computational complexity of Algorithm~\ref{alg:topOrder} is exponential in $J$, so in practice $J$ must remain relatively small. 

\section{Numerical results}\label{highD:sec:simulations}
\subsection{Simulations: low dimensional performance}\label{sec:sim_min_max}
We first compare the proposed method using: (1) min-max $\hat T_1$ and (2) max-min $\hat T_2$ against (3) DirectLiNGAM \citep{shimizu2011direct} and (4) Pairwise LiNGAM \citep[Section 3.2]{hyvarinen2013pairwise}.  We randomly generate graphs and corresponding data with the following procedure. For each node $v$, 
select the number of parents $d_v$ uniformly from $1, \ldots, \min(v, J)$. We 
include edge $(v-1, v)$ to ensure that the ordering is unique and draw $\beta_{v, v-1}$ uniformly from $(-1, -.5) \cup (.5, 1)$. The remaining 
parents are selected uniformly from $[v-2]$ and the corresponding edge weights are set to $\pm 1/5$. The $n$ error terms for variable $v$ are generated by 
selecting $\sigma_v \sim \text{unif}(.8, 1)$ and then drawing $\varepsilon_{vi} \sim  \sigma_v\text{unif}(-\sqrt{3}, \sqrt{3})$.

We use $K = 4$, fix the max in-degree $J = 3$, let $p = 5, 10, 15,
20$, and let $n = 50p$ and $n = 10p$.  
We set $\alpha = .8$ 
and compare performance by measuring Kendall's $\tau$ between the returned ordering and the true ordering; i.e., the number of concordant pairs in the ordering minus the number of discordant pairs, normalized by the number of total pairs. The procedure is repeated 500 times for each setting of $p$ and $n$.

\begin{figure}[t]
	\centering
	\includegraphics[scale = .6]{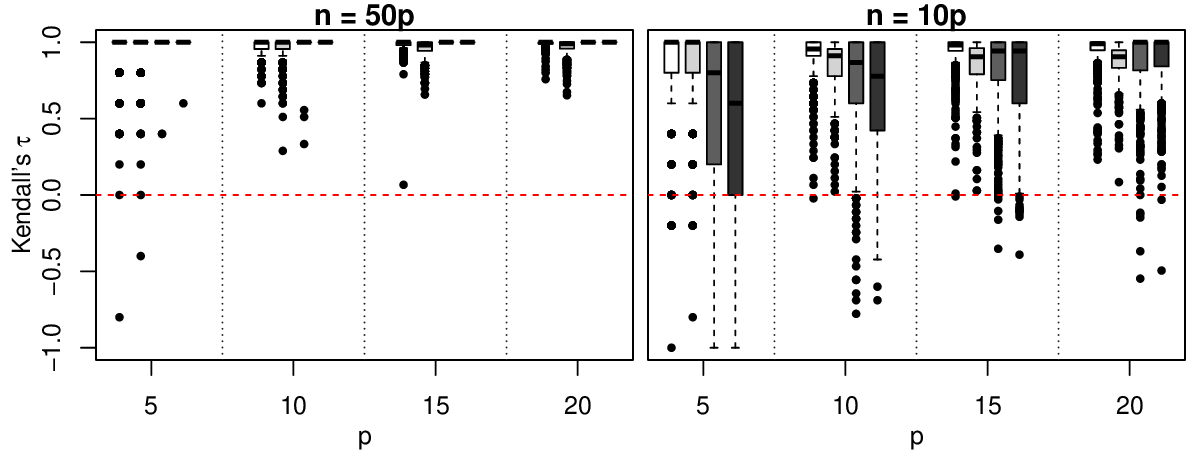}
	\caption[Low dimensional DAG estimation comparison]{\label{fig:comparison}Each bar represents the results from 500 randomly drawn graphs and data. In each group, from left to right, 
		the bars represent (1) min-max $\hat T_1$, (2) max-min $\hat T_2$, (3) 
		\citet{shimizu2011direct}, and (4) 
		\citet{hyvarinen2013pairwise}. In the left panel $n = 50p$ and the right panel $n = 10p$.}
\end{figure}

Figure~\ref{fig:comparison} shows that in the low-dimensional case
with $n = 50p$, the Pairwise LiNGAM and DirectLiNGAM methods
outperform the proposed method, with either statistic. However,
already
with $n = 10p$, our method begins to
give improvements.  The min-max statistic $T_1$ does slightly better
than $T_2$, the max-min.  However,
Figure~\ref{fig:timing} 
shows a large difference in computational effort;
$p = 40, 80$ are included for further contrast. In the sequel, we
use the max-min statistic, $T_2$.
\begin{figure}[t]
	\centering
	\includegraphics[scale = .5]{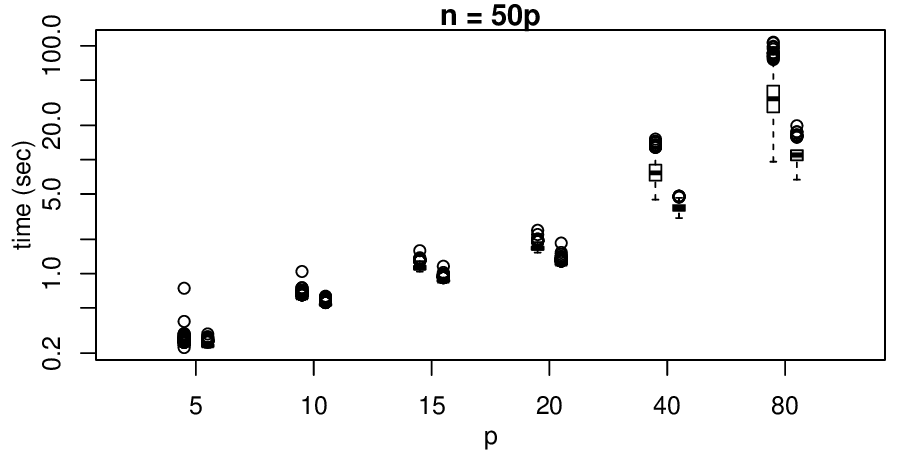}
	\caption[Timing comparison of min-max vs max-min
	methods]{\label{fig:timing} Timing results from 500 randomly
		drawn graphs and data with $n = 50p$. In each pair, the left
		represents min-max, $\hat T_1$ and the right max-min,
		$\hat T_2$. The y-axis is on a log scale.}
\end{figure}

The proposed method compares favorably to the DirectLiNGAM method in
computational effort because of the expensive kernel mutual
information calculation and is comparable to the Pairwise
LiNGAM. However, we refrain from a direct timing comparison because
DirectLiNGAM and Pairwise LiNGAM are both implemented in Matlab while
our proposed method is implemented in R and C++
\citep{r2017,eddelbuettel2011rcpp}. In the supplement, we also provide
a direct comparison between the proposed statistic and those used by
\citet{shimizu2011direct} and \citet{hyvarinen2013pairwise}.

\subsection{Simulations: high-dimensional consistency}\label{sec:highDSims}
To illustrate high-dimensional consistency, we generate the graph and
coefficients as in Section \ref{sec:sim_min_max} but with $p = 100, 200, 500, 1000, 1500, 2000$ and $n = 3/4p$. We first consider random DAGs and data generated as before, but with $J = 2$. We also consider 
graphs with hubs, that is, nodes with large out-degree.
These are generated by including a directed edge from $v-1$ to $v$ for all nodes $v = 2, \ldots, p$ and drawing the edge weight uniformly from $(-1, -.65)\cup (.65, 1)$. 
We then set nodes $\{1,2,3\}$ as hubs and include an edge with weight $\pm 1/5$ to each non-hub node from a randomly selected hub. Thus, the out-degree for each of the hub nodes grows linearly with $p$, but the maximum in-degree remains bounded by $2$. 
For both 
cases, the results for 20 runs at each value of $p$ are shown in Figure~\ref{fig:highDCons}.
\begin{figure}[t]
	\centering
	\includegraphics[scale = .5]{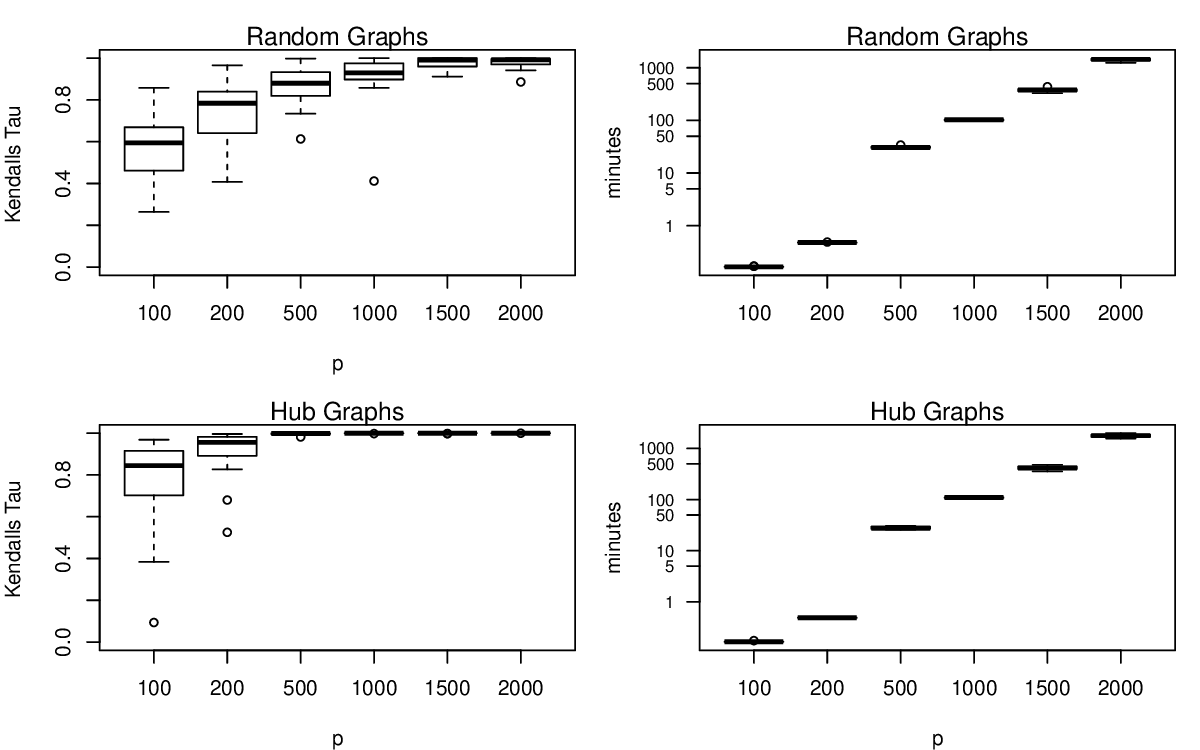}
	\caption[High-dimensional consistency of HDL]{\label{fig:highDCons}Each boxplot represents the results of 20 simulations. In all cases, we let $n = 3/4p$. The top panels show results from randomly drawn DAGs while the bottom panel shows results from DAGs constructed to have hub structure. The left plots show performance as measured by Kendall's $\tau$ and the right plots show computational time when using 16 CPUs in parallel.}
\end{figure}
In the supplement, we show simulations with gamma errors and also consider a setting with Gaussian errors, where our method should not be consistent.

\subsection{Pre-selection of neighborhoods}\label{sec:preSelectSims}
As with the original DirectLiNGAM procedure, any edges or non-edges
known in advance can be accounted for.
Such information could, for instance, be obtained by
applying
neighborhood selection \citep{meinshausen2006neighborhood} to estimate
the Markov blanket of each node.  This blanket consists of parents,
children, and parents of children.  For sparse graphs, \citet[Section
3.3]{hyvarinen2013pairwise} propose first using such a pre-selection step, then directly estimating the direction of each edge using pairwise measures without any additional adjustment. To create a total ordering, Alg B and Alg C of \citet{shimizu2006lingam} can be used. This does not require specifying a maximum in-degree, but in general, the neighborhood selection procedure will only be consistent if the total degree is controlled. 

In our proposed procedure, we may incorporate estimated Markov
blankets by limiting, at each step $z$, for each remaining node $v$,
the set of potential parents, $C_v^{(z)}$, to the intersection of the
estimated Markov blanket of $v$ and the previously ordered nodes,
$\Theta^{(z-1)}$. We do not otherwise prune the set of potential
parents. Figure~\ref{fig:preSelect} shows results from using the
pre-selection step under the setting from
Section~\ref{sec:highDSims} for general random graphs. The
pre-selection procedure improves the performance of our proposed
high-dimensional LiNGAM procedure, but the proposed procedure without
pre-selection still outperforms the two-stage procedure of
\citet[Section 3.3]{hyvarinen2013pairwise}.  Similar results for the hub graph
setting are shown in the supplement.  

\begin{figure}[t]
	\centering
	\includegraphics[scale = .4]{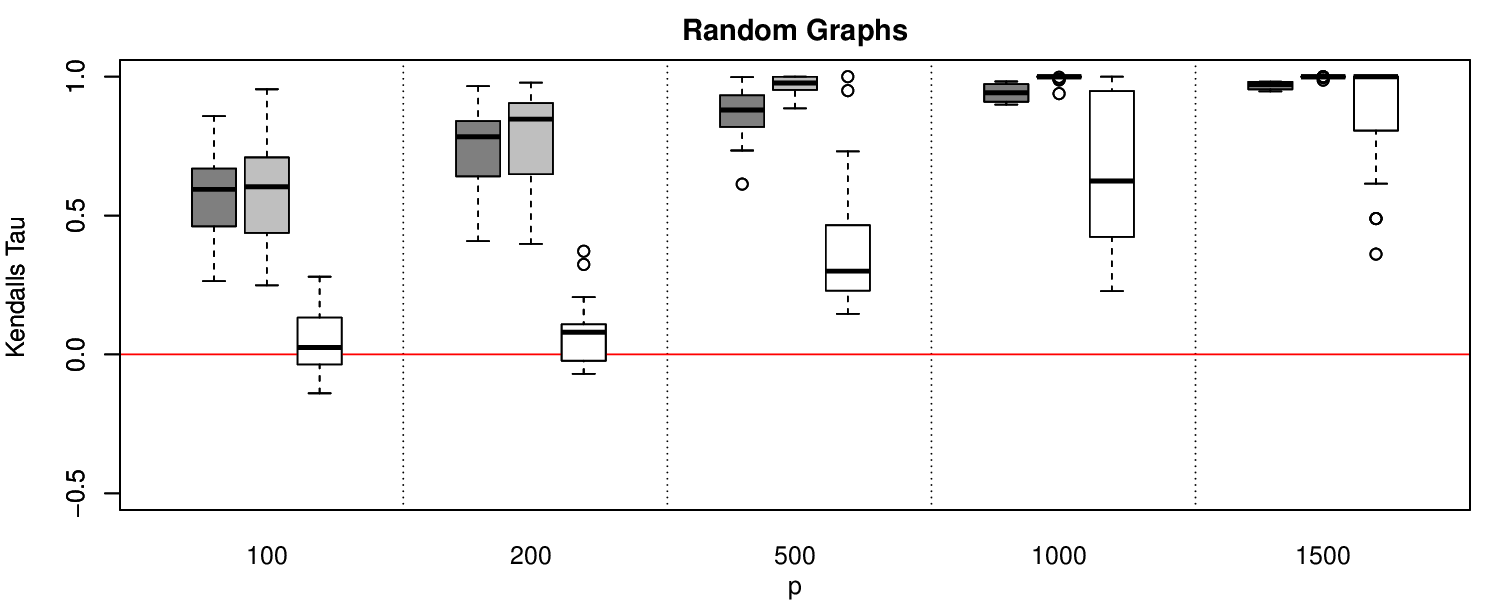}
	\caption{\label{fig:preSelect}Each boxplot represents 20 simulations with random DAGs when using a pre-selection step; in each case $n = 3/4p$. From left to right the methods are: the proposed high-dimensional LiNGAM procedure, same as Figure~\ref{fig:highDCons}; the proposed high-dimensional LiNGAM procedure with pre-selection; the two stage pairwise procedure from \citet{hyvarinen2013pairwise}.}
\end{figure}  

\subsection{Data example: high-dimensional performance}
We 
estimate causal structure among the stocks in the Standard
and Poor's 500. Specifically, we consider the percentage
increase/decrease for each share price for each trading day between
Jan 2007 to Sep 2017. We consider the $p=442$ companies for which data is
available for the entire period, and   we scale and center the data so that each variable has mean 0 and variance 1. As 
structure may vary over time,
we estimate the causal structure for each of the following periods separately with $J=3$ and $K = 4$: 2007-2009, 2010-2011, 2012-2013, 2014-2015, 2016-2017 (ending in September). Across these periods, the sample size, $n$, ranges from 425 to 755. 

The underlying structure is unlikely to be causally sufficient or acyclic. In addition, although it is common to assume that daily returns are independent, this assumption may not hold in practice. Nonetheless, 
the method still recovers reasonable structure. We first consider the most recent Jan 2016 - Sep 2017 period. 
Figure~\ref{fig:ch3:spyOrdering} shows a boxplot for the estimated ordering of the companies within each sector. The sectors are sorted top to bottom by median ordering. 
Near the top, we see utilities, energy, real estate, and finance. Since energy is an input for almost every other sector, intuitively price movements in energy should be causally upstream of other sectors. The estimated ordering of utilities might seem surprising; 
however, utility stocks are typically thought of as a proxy for bond prices. Thus, the estimated ordering may reflect the fact that changes in utility stocks capture much of the causal effect of interest rates, which had stayed constant for much of 2011-2015 but began moving again in 2016. Real estate and finance, sectors that are highly impacted by interest rates, are also estimated to be early in the causal ordering.

Figure~\ref{fig:ch3:orderingByTime} ranks each sector by the median
topological ordering for each period. The orderings are relatively stable over time, but there are a few notable changes. 
In 2007, real estate was estimated to be the ``root sector" while finance is in the middle. This aligns with the idea that the root of the 2008 financial crisis was actually 
failing mortgage backed securities in real estate, which had a causal effect on finance. However, over time, real estate has moved more downstream.

\section{Discussion}
We proposed a causal discovery method that was proven consistent for
specific test statistics and log concave errors.  Similar analyses
could be given for other  statistics that are Lipschitz
continuous in the sample moments over a bounded domain, can
distinguish causal direction, and indicate the presence of
confounding. This would include a normalized version of the proposed
test statistics which accounts for the scaling of the
data. Log-concavity was assumed for exponential concentration of
sample moments and other distributional assumptions could be
considered instead if analogous concentration results can be obtained and
traced throughout the analysis.

The proposed algorithm requires selecting a bound on the in-degree $J$
and a pruning parameter $\alpha$. The in-degree is typically unknown,
but a reasonable upper bound may be used as a ``bet on
sparsity''. If the maximum in-degree of the  true graph is
larger than the specified $J$ but the ``extra edges" have small enough
edge-weights, the ``closest" DAG with maximum  in-degree $J$ is
still  recovered with high probability. The pruning parameter
$\alpha$ plays a similar role to the nominal level for each
conditional independence test in the PC algorithm. Both parameters
have an effect on the sparsity of the estimated graph and
regulate the maximum size of conditioning sets.

\begin{figure}
	\centering
	\includegraphics[scale = .55]{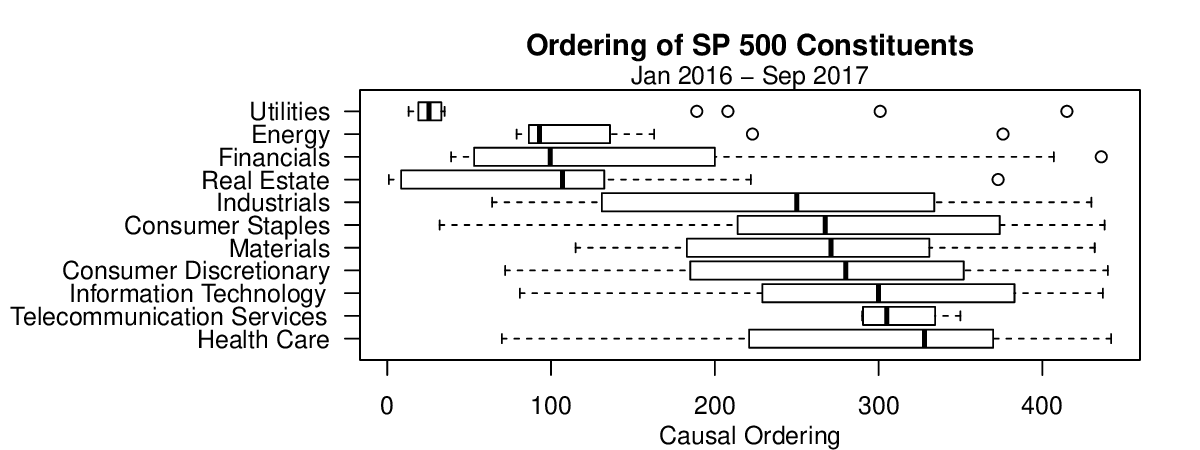}
	\caption[Data example: estimated ordering of S\&P 500 by sector; 2016 - 2017]{\label{fig:ch3:spyOrdering}Estimated causal ordering of the stocks in the Standard and Poor's 500 for Jan 2016 - Sep 2017. The stocks are grouped by sector, and the sectors are arranged by median causal ordering.}
\end{figure}  

\begin{figure}
	\centering
	\includegraphics[scale = .55]{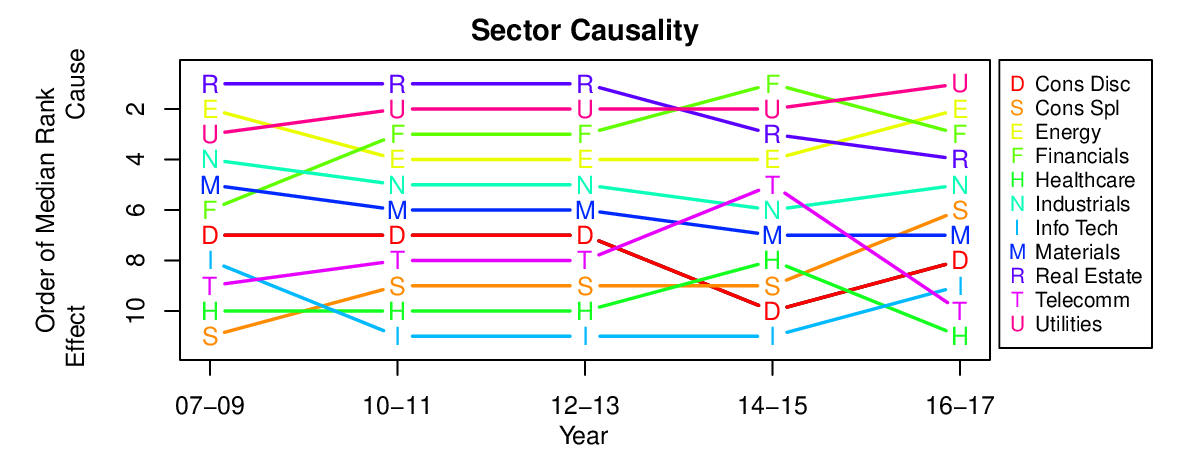}
	\caption[Data example: estimated ordering of S\&P 500 by sector; 2007 - 2017]{\label{fig:ch3:orderingByTime}Sectors ranked by median estimated topological ordering across each time period.}
\end{figure}  

At each step, instead of taking the minimum $|\tau|$ over all subsets
of potential parents, one could also pick parents for every unordered
node using a variable selection procedure and then only calculate
$|\tau|$ using the selected parents.  Such a procedure would also
consistently estimate the causal ordering as long as the variable
selection procedure is consistent.  Slightly different conditions,
such as a beta-min condition, would be needed when adopting standard
methods based on least squares,
but in practice the resulting method performs quite well as shown in
simulations in the supplement.  This could be explained as being due
to use of only second moments for the variable selection.

In \eqref{ch3:eq:gammaCond}, we have made a key restriction that the
error moments must be adequately different from the moments of any
Gaussian and the edge weights must be strongly parentally faithful. In
practice, this is a difficult condition to satisfy, and
\citet{uhler2013geometry} show that strong faithfulness type restrictions can be problematic in practice. However, even if the distribution is not strongly parentally faithful, we can still consistently recover the correct ordering as long as each individual linear coefficient is non-zero and the errors are sufficiently non-Gaussian.
\citet{sokol2014quantifying} consider identifiability of independent component analysis for fixed $p$ when the error terms are Gaussians contaminated with non-Gaussian noise. In particular, when the effect of the non-Gaussian contamination decreases at an adequately slow rate, the entire mixing matrix is identifiable asymptotically. In our analysis, the measure of non-Gaussianity is treated by our assumptions on $\gamma$. Our results suggest that the results of \citet{sokol2014quantifying} can also be extended, given suitable sparsity, to the asymptotic regime where the number of variables is increasing.

The modified procedure we propose retains the existing benefits of the original DirectLiNGAM procedure. In particular, the output of algorithm is independent of the ordering of the variables in the input data. Although this is typically not an issue in the low-dimensional case, in the high-dimensional setting, the output of causal discovery methods may be highly dependent on ordering \citep{colombo2014order}. 

\section*{Acknowledgment}
This work was supported by the U.S.~National Science Foundation under Grant No.~DMS 1712535. 
 Thomas S. Richardson gave helpful feedback on an advance copy of the manuscript.  

\bibliographystyle{apalike}
\bibliography{diss_bib}

\appendix

\section{Proof of Theorem~\ref{ch3:thm:testStat}}\label{sec:proof_thm1}
\begin{proof}
	\textbf{Statement (i):} Consider any set $C$ such that $\pa(v)\subseteq C$ and $C \cap\de(v)= \emptyset$. Since we condition on all parents, $\beta_{vc.C} = 0$ for any $c \in C$ which is not a parent of $v$. Then,
	\begin{align*}\numberthis
	Y_{v.C} &= Y_v - \sum_{c \in \pa(v)}\beta_{vc.C}Y_{c} -\sum_{k \in C \setminus \pa(v)}\beta_{vc.C}Y_{c}\\
	& = Y_v - \sum_{c \in \pa(v)}\beta_{vc.C}Y_{c}\\
	& = \varepsilon_v.
	\end{align*}
	We then directly calculate the parameter for the set $C$:
	\begin{align*}\numberthis
	\tau^{(K)}_{v.C\rightarrow u} &= \E(Y_{v.C}^{K-1}Y_u) \E(Y_{v.C}^2) - \E(Y_{v.C}^K) \E(Y_{v.C} Y_u)
	\\
	&=  \textstyle  \E\left\{\varepsilon_v^{K-1} \left(\varepsilon_u + \pi_{uv}\varepsilon_v + \sum_{z \in \an(u)\setminus\{v\}} \pi_{uz}\varepsilon_z \right)\right\}
	\E\left(\varepsilon_v^2\right)
	\\
	& \textstyle \quad - \E\left(\varepsilon_v^K\right) \E\left\{\varepsilon_v \left(\varepsilon_u + \pi_{uv}\varepsilon_v + \sum_{z \in \an(u)\setminus\{v\}} \pi_{uz}\varepsilon_z \right)\right\}
	\\
	&=  \pi_{uv}\E\left(\varepsilon_v^K \right)
	\E\left(\varepsilon_v^2\right) - \pi_{uv}\E\left(\varepsilon_v^K\right) \E\left(\varepsilon_v^2\right)
	\\
	& = 0. 
	\end{align*}
	The penultimate equality follows from the assumption of independent errors.\qed
	\vspace{.5cm}
	
	\textbf{Statement (ii):} For fixed $C$, $u, v \in V$, and parentally
	faithful linear coefficients and variances,
	$\tau^{(K)}_{v.C\rightarrow u}$ is a polynomial of the error moments
	of degree $k = 3, \ldots, K$. Thus, selecting a single point (of
	error moments) where the quantity $\tau_{v.C \rightarrow u}$ is
	non-zero is sufficient for showing that the quantity is non-zero for
	generic error moments of degree $k = 3, \ldots, K$
	\citep{okamoto1973distinctness}. Specifically, we select that point
	by letting all error moments for $k < K$ be consistent with the
	Gaussian moments implied by $\sigma_v^2$, the variance of
	$\varepsilon_v$, but select the $K$th degree moment to be
	inconsistent with the corresponding Gaussian moment.  Since there
	are a finite number of sets $C \subseteq V$ such that
	$C \cap\de(v) = \emptyset$, then the set of error moments of degree
	$k = 3, \ldots,K$ which yield $\tau_{v.C \rightarrow u} = 0$ for any
	$C \subseteq V$ also has Lebesgue measure zero.
	
	Recall that the total residual effect of $u$ on $v$ given $C$ is
	\[
	\pi_{vu.C} = \pi_{vu} - \sum_{c \in C}\beta_{vc.C}\pi_{cu},
	\]
	where $\pi_{vu}$ is the total effect of $u$ on $v$ and
	$\pi_{uu} = 1$.  Now,
	\begin{align*} \numberthis
	Y_{v.C} &= Y_v - \sum_{c \in C}\beta_{vc.C}Y_c\\
	&= \varepsilon_v + \sum_{k \in \an(v)}\pi_{vk}\varepsilon_k - \sum_{c \in C}\beta_{vc.C}\sum_{d \in \An(c)}\pi_{cd}\varepsilon_d \\
	&= \varepsilon_v + \sum_{k \in \an(v)\cup \An(C)} \pi_{vk.C} \varepsilon_k.
	\end{align*}
	By the parental faithfulness assumption, since $u \in \pa(v)$, $\pi_{vu.C} \neq 0$.  We partition $\An(v)\cup \An(C)$ into three disjoint sets	
	\begin{align*}\numberthis
	Z_1 &= \left\{\An(v)\cup \An(C) \right\} \setminus \An(u), \\
	Z_2 &= \{z \in \An(u): \pi_{vz.C}= \pi_{vu.C}\pi_{uz}\}, \text{ and }\\
	Z_3 &= \An(u) \setminus Z_2.
	\end{align*}
	For generic edge weights, $Z_2\setminus \{u\}$
	corresponds to ancestors of $u$ which only have
	directed paths to $v$ through $C \cup\{u\}$; $Z_3$
	corresponds to ancestors of $u$ which have directed
	paths to $v$ that do not pass through $C
	\cup\{u\}$. For some specific edge weight values,
	$Z_2$ may also include non ancestors of $u$ which have
	directed paths to $v$ that do not pass through
	$C \cup\{u\}$ if those paths do not contribute to the
	total effect; i.e., non-faithfulness may result in
	$Z_2$ including more nodes. However, $Z_2$ is
	non-empty because $u\in Z_2$ by the parental
	faithfulness assumption.
	
	Let
	\begin{align*}
	\varepsilon_{Z_1} &=  \varepsilon_v + \sum_{z\in Z_1} \pi_{vz.C}\varepsilon_z,\\
	\varepsilon_{Z_2} &= \sum_{z \in Z_2} \pi_{uz}\varepsilon_z, \\
	\varepsilon_{Z_3} &= \sum_{z \in Z_3}\bigg(\pi_{vz} - \sum_{c \in C}\beta_{vc.C}\pi_{cz}\bigg) \varepsilon_z = \sum_{z \in Z_3}\pi_{vz.C} \varepsilon_z  ,  
	\end{align*}
	so that
	\[Y_{v.C} =  \varepsilon_{Z_1} + \pi_{vu.C}\varepsilon_{Z_2} + \varepsilon_{Z_3} \]
	and
	\[Y_u = \varepsilon_{Z_2}  + \sum_{z \in Z_3} \pi_{uz}\varepsilon_z. \]	
	For $w = (w_1, w_2, w_3)$ with $|w| = w_0$, let $\binom{w_0}{w} = \frac{w_0!}{w_1! w_2! w_3!}$, the multinomial coefficient. Then
	\begin{align*}\numberthis
	Y_{v.C}^{w_0} &= \left(  \varepsilon_{Z_1} + \pi_{vu.C}\varepsilon_{Z_2} + \varepsilon_{Z_3} \right)^{w_0} =  \sum_{|w| = w_0}\binom{w_0}{w}  \varepsilon_{Z_1}^{w_1} (\pi_{vu.C}\varepsilon_{Z_2})^{w_2}  \varepsilon_{Z_3}^{w_3}\\
	&= \sum_{\substack{|w| = w_0\\ w_3 = 0}}{w_0 \choose w}  \varepsilon_{Z_1}^{w_1} (\pi_{vu.C} \varepsilon_{Z_2})^{w_2}  \varepsilon_{Z_3}^{w_3} + \sum_{\substack{|w| = w_0\\ w_3 > 0}}{w_0
		\choose w}  \varepsilon_{Z_1}^{w_1} (\pi_{vu.C}  \varepsilon_{Z_2})^{w_2}  \varepsilon_{Z_3}^{w_3}\\
	&= ( \varepsilon_{Z_1} + \pi_{vu.C} \varepsilon_{Z_2})^{w_0} + \sum_{\substack{|w| = w_0\\ w_3 > 0}}\binom{w_0}{w}  \varepsilon_{Z_1}^{w_1} (\pi_{vu.C}  \varepsilon_{Z_2})^{w_2}  \varepsilon_{Z_3}^{w_3},
	\end{align*}
	so that
	\begin{align*}\numberthis\label{eq:Total}
	\tau^{(K)}_{v.C\rightarrow u} & = \E (Y_{v.C}^{K-1} Y_u) \E (Y_{v.C}^2) - \E (Y_{v.C}^K) \E (Y_{v.C} Y_u)
	\\
	& = \E\left\{\left(  \varepsilon_{Z_1} + \pi_{vu.C} \varepsilon_{Z_2} + \varepsilon_{Z_3} \right)^{K-1} Y_u \right\}
	\E\left\{ \left( \varepsilon_{Z_1} + \pi_{vu.C}\varepsilon_{Z_2} + \varepsilon_{Z_3}\right) ^2\right\}\\
	&\quad - \E\left\{\left(  \varepsilon_{Z_1} + \pi_{vu.C}\varepsilon_{Z_2} + \varepsilon_{Z_3} \right)^K\right\} \E\left\{\left(  \varepsilon_{Z_1} +\pi_{vu.C} \varepsilon_{Z_2} + \varepsilon_{Z_3} \right) Y_u\right\}
	\\
	& = \left(\sigma_{Z_1}^2 + \pi_{vu.C}^2\sigma_{Z_2}^2 + \sigma_{Z_3}^2\right)\left[\E\left\{( \varepsilon_{Z_1} + \pi_{vu.C}\varepsilon_{Z_2})^{K-1}Y_u\right\}  \vphantom{\E\left\{Y_u\sum_{\substack{|w| = K-1\\ w_3 > 0}}\binom{K-1}{w}  \varepsilon_{Z_1}^{w_1} (\pi_{vu.C}  \varepsilon_{Z_2})^{w_2}  \varepsilon_{Z_3}^{w_3} \right\}}\right.\\
	&\left. \hphantom{\left(\sigma_{Z_1}^2 + \pi_{vu.C}^2\sigma_{Z_2}^2 + \sigma_{Z_3}^2\right)} \quad \quad  + \E\left\{Y_u\sum_{\substack{|w| = K-1\\ w_3 > 0}}\binom{K-1}{w}  \varepsilon_{Z_1}^{w_1} (\pi_{vu.C}  \varepsilon_{Z_2})^{w_2}  \varepsilon_{Z_3}^{w_3} \right\}\right]\\
	&\quad - \left[\E\left\{( \varepsilon_{Z_1} + \pi_{vu.C}\varepsilon_{Z_2})^{K}\right\} + \E\left\{\sum_{\substack{|w| = K\\ w_3 > 0}}{K \choose w}  \varepsilon_{Z_1}^{w_1} (\pi_{vu.C}  \varepsilon_{Z_2})^{w_2}  \varepsilon_{Z_3}^{w_3}\right\}\right] \\
	&\quad \times
	\left\{\pi_{vu.C}\sigma_{Z_2}^2 + \E\left( \varepsilon_{Z_3} Y_u\right)\right\}.
	\end{align*}

	We first consider the case where $Z_3$ is empty so that the expansion above reduces to
	\begin{align*}\numberthis
	\tau^{(K)}_{v.C\rightarrow u} & = \left[\E\left\{( \varepsilon_{Z_1} + \pi_{vu.C}\varepsilon_{Z_2})^{K-1}Y_u\right\}\right]
	\times
	\left(\sigma_{Z_1}^2 + \pi_{vu.C}^2\sigma_{Z_2}^2\right)\\
	&\quad - \left[\E\left\{( \varepsilon_{Z_1} + \pi_{vu.C}\varepsilon_{Z_2})^{k}\right\}\right] \times
	\left(\pi_{vu.C}\sigma_{Z_2}^2\right).
	\end{align*}
	Let the moments of degree $k$ for $2 < k <  K$ of all the error terms be consistent with some Gaussian distribution. This implies that the error moments for $z = Z_1, Z_2$ are also consistent with some Gaussian distribution since the sum of Gaussians is also Gaussian. So $\E(\varepsilon_z^2) = \sigma_z^2$ and for $k < K$,
	\[E(\varepsilon_z^k) =
	\begin{cases}
	0 & \text{ if } k \text{ is odd},\\
	(k-1)!!\sigma_z^k &\text{ if } k \text{ is even},
	\end{cases}\]
	where $k!!$ is the double factorial of $k$. However, let the $K$th degree error moments be inconsistent with the specified Gaussian distribution so that for $z = Z_1, Z_2$ and $\eta_z > 0$,
	\[E(\varepsilon_z^K) =
	\begin{cases}
	\eta_z & \text{ if }  K \text{ is odd},\\
	(K-1)!!\sigma_z^K + \eta_z &\text{ if } K \text{ is even.}\\
	\end{cases}\]
	By direct calculation we see
	\begin{align*} 
	\tau^{(K)}_{v.C\rightarrow u} & = \E (Y_{v.C}^{K-1} Y_u) \E (Y_{v.C}^2) - \E (Y_{v.C}^K) \E (Y_{v.C} Y_u)\\
	& = \E\left\{( \varepsilon_{Z_1} +  \pi_{vu.C}\varepsilon_{Z_2})^{K-1} Y_u \right\}
	\E\left\{(  \varepsilon_{Z_1} +  \pi_{vu.C}\varepsilon_{Z_2})^2\right\}\\
	&\qquad - \E\left\{(  \varepsilon_{Z_1} +  \pi_{vu.C}\varepsilon_{Z_2})^K\right\} \E\left\{(  \varepsilon_{Z_1} +  \pi_{vu.C}\varepsilon_{Z_2}) (Y_u)\right\}\\
	& =
	\textstyle\left\{\sum_{a = 0}^{K-1}{K-1 \choose a} \E\left(\varepsilon_{Z_1}^a\right) \E\left(\pi_{vu.C}^{K-1-a}\varepsilon_{Z_2}^{K-1-a}Y_u\right)\right\} \left\{\E(\varepsilon_{Z_1}^2)  + \E( \pi_{vu.C}^2\varepsilon_{Z_2}^2) \right\}
	\\
	&\textstyle\qquad - \left\{\sum_{a = 0}^K {K \choose a} \E\left(\varepsilon_{Z_1}^a \right) \E\left(\pi_{vu.C}^{K-a}\varepsilon_{Z_2}^{K - a}Y_u\right)\right\}
	\E\left( \pi_{vu.C}\varepsilon_{Z_2}Y_u\right) \\
	\qquad &\textstyle=\left\{\sum_{a = 1}^{K-1}{K-1 \choose a} \E\left(\varepsilon_{Z_1}^a\right)\pi_{vu.C}^{K-1 - a} \E\left(\varepsilon_{Z_2}^{K-a}\right)\right\} \left\{\E(\varepsilon_{Z_1}^2)  + \pi_{vu.C}^2\E(\varepsilon_{Z_2}^2) \right\}
	\\
	&\textstyle\qquad - \left\{\sum_{a = 1}^{K-1} {K \choose a} \E\left(\varepsilon_{Z_1}^a \right) \pi_{vu.C}^{K - a}\E\left(\varepsilon_{Z_2}^{K - a}\right)\right\}
	\pi_{vu.C}\E\left(\varepsilon_{Z_2}^2\right)
	\\
	&\textstyle\qquad + \pi_{vu.C}^{K-1}\E(\varepsilon_{Z_2}^K)\left\{\E(\varepsilon_{Z_1}^2) + \pi_{vu.C}^2\E(\varepsilon_{Z_2}^2)\right\} \\
	&\textstyle\qquad - \left\{\pi_{vu.C}^K\E(\varepsilon_{Z_2}^K) + \E(\varepsilon_{Z_1}^K) \right\}\pi_{vu.C}\E(\varepsilon_{Z_2}^2)\\
	&\textstyle = \left\{\sum_{a = 1}^{K-1}{K-1 \choose a} \pi_{vu.C}^{K-1 - a}\E\left(\varepsilon_{Z_1}^a\right) \E\left(\varepsilon_{Z_2}^{K-a}\right)\right\} \left\{\E(\varepsilon_{Z_1}^2)  + \pi_{vu.C}^2\E(\varepsilon_{Z_2}^2) \right\}
	\\
	&\textstyle\qquad - \left\{\sum_{a = 1}^{K-1} {K \choose a} \E\left(\varepsilon_{Z_1}^a \right) \pi_{vu.C}^{K - a}\E\left(\varepsilon_{Z_2}^{K - a}\right)\right\}
	\pi_{vu.C}\E\left(\varepsilon_{Z_2}^2\right) \\
	&\textstyle \qquad + \pi_{vu.C}^{K-1}\E(\varepsilon_{Z_2}^K)\E(\varepsilon_{Z_1}^2) - \pi_{vu.C}\E(\varepsilon_{Z_1}^K)\E(\varepsilon_{Z_2}^2).
	\end{align*}	
	When $K$ is odd, $\E(\varepsilon_{Z_1}^a)\E(\varepsilon_{Z_2}^{K-a}) = 0$ for all $a = 1, \ldots,K-1$, so we are left with
	\begin{align*}\numberthis \label{eq:emptyZ3Odd}
	\tau^{(K)}_{v.C\rightarrow u} &= \pi_{vu.C}^{K-1}\E(\varepsilon_{Z_2}^K)\sigma_{Z_1}^2 - \pi_{vu.C}\E(\varepsilon_{Z_1}^K)\sigma_{Z_2}^2\\
	& = \pi_{vu.C}^{K-1}\eta_{Z_2}\sigma_{Z_1}^2 - \pi_{vu.C}\eta_{Z_1}\sigma_{Z_2}^2.
	\end{align*}
	When $K$ is even, then $\E(\varepsilon_{Z_1}^a)\E(Y_u^{K-a}) = 0$ when $a$ is odd, so we are left with
	\begin{align*}
	\tau^{(K)}_{v.C\rightarrow u} &= \textstyle\left\{\sum_{a = 2,4,\ldots,K-2}{K-1 \choose a} \pi_{vu.C}^{K-1 - a}\E\left(\varepsilon_{Z_1}^a\right) \E\left(\varepsilon_{Z_2}^{K-a}\right)\right\}  \left\{\E(\varepsilon_{Z_1}^2)  + \pi_{vu.C}^2\E(\varepsilon_{Z_2}^2) \right\} \\
	&\textstyle\qquad - \left\{\sum_{a = 2,4,\ldots,K-2} {K \choose a} \E\left(\varepsilon_{Z_1}^a \right) \pi_{vu.C}^{K - a}\E\left(\varepsilon_{Z_2}^{K - a}\right)\right\}
	\pi_{vu.C}\E\left(\varepsilon_{Z_2}^2\right) \\
	& \textstyle\qquad + \pi_{vu.C}^{K-1}\E(\varepsilon_{Z_2}^K)\E(\varepsilon_{Z_1}^2) - \pi_{vu.C}\E(\varepsilon_{Z_1}^K)\E(\varepsilon_{Z_2}^2).
	\end{align*}
	Evaluating the moments yields	
	\begin{align*} 
	\tau^{(K)}_{v.C\rightarrow u}&= \textstyle\left\{\sum_{a = 2,\ldots,K-2}{K-1 \choose a} \pi_{vu.C}^{K-1 - a}(a-1)!! \sigma_{Z_1}^a (K - a-1)!!\sigma_{Z_2}^{K-a}\right\}\left(\sigma_{Z_1}^2  + \pi_{vu.C}^2\sigma_{Z_2}^2 \right)
	\\
	&\textstyle\qquad - \left\{\sum_{a = 2,\ldots,K-2}{K \choose a} (a-1)!!\sigma_{Z_1}^a \pi_{vu.C}^{K - a}(K - a-1)!!\sigma_{Z_2}^{K-a}\right\}\left(\pi_{vu.C}\sigma_{Z_2}^2\right)\\
	&\textstyle \qquad + \pi_{vu.C}^{K-1}\E(\varepsilon_{Z_2}^K)\sigma_{Z_1}^2 - \pi_{vu.C}\E(\varepsilon_{Z_1}^K)\sigma_{Z_2}^2\\
	&\textstyle= \left\{\sum_{a = 2,\ldots,K-2}{K-1 \choose a} \pi_{vu.C}^{K-1 - a}(a-1)!! \sigma_{Z_1}^{a + 2} (K - a-1)!!\sigma_{Z_2}^{K-a}\right\}
	\\
	&\textstyle \qquad + \pi_{vu.C}\sigma_{Z_2}^2 \left\{\sum_{a = 2,\ldots,K-2}{K-1 \choose a} \pi_{vu.C}^{K - a}(a-1)!! \sigma_{Z_1}^a (K - a-1)!!\sigma_{Z_2}^{K-a}\right\} \\
	&\textstyle\qquad - \left\{\sum_{a = 2,\ldots,K-2}{K-1 \choose a}\frac{K}{K-a} (a-1)!!\sigma_{Z_1}^a \pi_{vu.C}^{K - a}(K - a-1)!!\sigma_{Z_2}^{K-a}\right\} \left(\pi_{vu.C}\sigma_{Z_2}^2\right)\\
	&\textstyle \qquad + \pi_{vu.C}^{K-1}\E(\varepsilon_{Z_2}^K)\sigma_{Z_1}^2 - \pi_{vu.C}\E(\varepsilon_{Z_1}^K)\sigma_{Z_2}^2\\
	&\textstyle= \left\{\sum_{a = 2,\ldots,K-2}{K-1 \choose a} \pi_{vu.C}^{K-1 - a}(a-1)!! \sigma_{Z_1}^{a + 2} (K - a-1)!!\sigma_{Z_2}^{K-a}\right\}
	\\
	&\textstyle\qquad + \left\{\sum_{a = 2,\ldots,K-2}{K-1 \choose a}\left(1 - \frac{K}{K-a}\right) (a-1)!!\sigma_{Z_1}^a \pi_{vu.C}^{K - a}(K - a-1)!!\sigma_{Z_2}^{K-a}\right\}\\
	&\textstyle\qquad \times \left(\pi_{vu.C}\sigma_{Z_2}^2\right)\\
	&\textstyle \qquad + \pi_{vu.C}^{K-1}\E(\varepsilon_{Z_2}^K)\sigma_{Z_1}^2 - \pi_{vu.C}\E(\varepsilon_{Z_1}^K)\sigma_{Z_2}^2\\
	&\textstyle= \left\{\sum_{a = 2,\ldots,K-4}{K-1 \choose a} \pi_{vu.C}^{K-1 - a}(a-1)!! \sigma_{Z_1}^{a + 2} (K - a-1)!!\sigma_{Z_2}^{K-a}\right\}
	\\
	&\textstyle\qquad - \left\{\sum_{a = 4,\ldots,K-2}{K-1 \choose a}\frac{a}{K-a} (a-1)!!\sigma_{Z_1}^a \pi_{vu.C}^{K - a} (K - a-1)!!\sigma_{Z_2}^{K-a}\right\} \left(\pi_{vu.C}\sigma_{Z_2}^2\right)\\
	&\textstyle \qquad + \pi_{vu.C}(K-1)!!\sigma_{Z_1}^K\sigma_{Z_2}^2 - \pi_{vu.C}^{K-1}(K-1)!!\sigma_{Z_2}^K \sigma_{Z_1}^2\\
	&\textstyle \qquad + \pi_{vu.C}^{K-1}\E(\varepsilon_{Z_2}^K)\sigma_{Z_1}^2 - \pi_{vu.C}\E(\varepsilon_{Z_1}^K)\sigma_{Z_2}^2.
	\end{align*}
	Rewriting terms and a change of variables show that the first
	two lines cancel leaving	$\tau^{(K)}_{v.C\rightarrow
		u}$ equal to
	
	\begin{align*}
	\numberthis   \label{eq:emptyZ3Even}
	& \sum_{a = 2,\ldots,K-4}\left\{ {K-1 \choose a+2}\frac{(a+1)(a+2)}{(K-(a + 1))(K-(a+2))} \pi_{vu.C}^{K- (a + 2)}\right. \\
	&\left. \quad \qquad \qquad \times (a-1)!! \sigma_{Z_1}^{a + 2} (K - a-1)!!\sigma_{Z_2}^{K-(a+2)}\vphantom{ {K-1 \choose a+2}} \right\}\pi_{vu.C}\sigma_{Z_2}^2
	\\
	&\quad - \left\{\sum_{a = 4,\ldots,K-2}{K-1 \choose a}\left(\frac{a}{K-a}\right) (a-1)!!\sigma_{Z_1}^a \pi_{vu.C}^{K - a}(K - a-1)!!\sigma_{Z_2}^{K-a}\right\}\pi_{vu.C}\sigma_{Z_2}^2\\
	& \quad + \pi_{vu.C}(K-1)!!\sigma_{Z_1}^K\sigma_{Z_2}^2 - \pi_{vu.C}^{K-1}(K-1)!!\sigma_{Z_2}^K \sigma_{Z_1}^2\\
	& \quad + \pi_{vu.C}^{K-1}\E(\varepsilon_{Z_2}^K)\sigma_{Z_1}^2 - \pi_{vu.C}\E(\varepsilon_{Z_1}^K)\sigma_{Z_2}^2\\
	&= \pi_{vu.C}\sigma_{Z_2}^2 \sum_{a = 2,\ldots,K-4} \left\{{K-1 \choose a+2}\frac{a+2}{K-(a+2)} \pi_{vu.C}^{K- (a + 2)}((a+ 2) - 1)!!\right. \\
	&\left. \hphantom{\pi_{vu.C}\sigma_{Z_2}^2 \sum_{a = 2,\ldots,K-4}} \qquad \times \sigma_{Z_1}^{a + 2} (K-(a+2) - 1)!!\sigma_{Z_2}^{K-(a+2)}\vphantom{ \pi_{vu.C}\sigma_{Z_2}^2 \sum_{a = 2,\ldots,K-4} }\right\}
	\\
	&\quad - \pi_{vu.C}\sigma_{Z_2}^2 \left\{\sum_{a = 4,\ldots,K-2}{K-1 \choose a}\left(\frac{a}{K-a}\right) (a-1)!!\sigma_{Z_1}^a \pi_{vu.C}^{K - a}(K - a-1)!!\sigma_{Z_2}^{K-a}\right\}\\
	& \quad + \pi_{vu.C}(K-1)!!\sigma_{Z_1}^K\sigma_{Z_2}^2 - \pi_{vu.C}^{K-1}(K-1)!!\sigma_{Z_2}^K \sigma_{Z_1}^2\\
	& \quad + \pi_{vu.C}^{K-1}\E(\varepsilon_{Z_2}^K)\sigma_{Z_1}^2 - \pi_{vu.C}\E(\varepsilon_{Z_1}^K)\sigma_{Z_2}^2\\
	&= \pi_{vu.C}(K-1)!!\sigma_{Z_1}^K\sigma_{Z_2}^2 - \pi_{vu.C}^{K-1}(K-1)!!\sigma_{Z_2}^K \sigma_{Z_1}^2  + \pi_{vu.C}^{K-1}\E(\varepsilon_{Z_2}^K)\sigma_{Z_1}^2 - \pi_{vu.C}\E(\varepsilon_{Z_1}^K)\sigma_{Z_2}^2\\
	&= \pi_{vu.C}^{K-1}\eta_{Z_2}\sigma_{Z_1}^2 - \pi_{vu.C}\eta_{Z_1}\sigma_{Z_2}^2.
	\end{align*}
	This is the same expression as when $K$ is odd. So for any $K > 2$,
	\begin{equation}
	\eta_{Z_1} \neq \frac{\pi_{vu.C}^{K-2}\sigma^2_{Z_1}\eta_2}{\sigma^2_{Z_2}} \qquad \iff \qquad \tau^{(K)}_{v.C\rightarrow u} \neq 0.
	\end{equation}
	Since $Z_1$ and $Z_2$ are disjoint, we can always select the $K$th moments of the individual error moments so that this holds.
	
	Now consider the case when $Z_3$ is not empty. From Equations~\eqref{eq:Total}, \eqref{eq:emptyZ3Odd}, and \eqref{eq:emptyZ3Even},
	\begin{align*}
	\tau_{v.C\rightarrow u} &= \pi_{vu.C}^{K-1}\eta_{Z_2}\sigma_{Z_1}^2 - \pi_{vu.C}\eta_{Z_1}\sigma_{Z_2}^2\\
	&\quad + \E\left\{( \varepsilon_{Z_1}+\pi_{vu.C}\varepsilon_{Z_2})^{K-1} Y_u \right\} \sigma_{Z_3}^2  \\
	&\quad + \E\left\{Y_u \sum_{\substack{|w| = K-1\\ w_3 > 0}}{K-1 \choose w}  \varepsilon_{Z_1}^{w_1} (\pi_{vu.C}  \varepsilon_{Z_2})^{w_2}  \varepsilon_{Z_3}^{w_3}\right\}
	\left(\sigma_{Z_1}^2 + \pi_{vu.C}\sigma_{Z_2}^2 + \sigma_{Z_3}^2\right)\\
	&\quad - \E\left\{( \varepsilon_{Z_1} + \pi_{vu.C}\varepsilon_{Z_2})^{K}\right\}\E\left( \varepsilon_{Z_3} Y_u\right)\\
	& \quad -  \E\left\{Y_u\sum_{\substack{|w| = K\\ w_3 >
			0}}{K \choose w}  \varepsilon_{Z_1}^{w_1} (\pi_{vu.C}
	\varepsilon_{Z_2})^{w_2}  \varepsilon_{Z_3}^{w_3}\right\}
	\left\{\pi_{vu.C}\sigma_{Z_2}^2 + \E\left( \varepsilon_{Z_3} Y_u\right)\right\}
	\\
	& =
	\pi_{vu.C}^{K-1}\eta_{Z_2}\sigma_{Z_1}^2 - \pi_{vu.C}\eta_{Z_1}\sigma_{Z_2}^2 \\
	&\quad + \left\{\pi_{vu.C}^{K-1} \left((K-1)!!\sigma_{Z_2}^K + \eta_{Z_2}\right)\right\} \sigma^2_{Z_3} + \sigma_{Z_3}^2\sum_{a = 1}^{K-1}\E(\varepsilon_{Z_1}^a)\pi_{vu.C}^{K-1-a}\E(\varepsilon_{Z_2}^{K-a})\\  &\quad + \E\left\{Y_u \sum_{\substack{|w| = K-1\\ w_3 > 0}}{K-1 \choose w}  \varepsilon_{Z_1}^{w_1} (\pi_{vu.C}  \varepsilon_{Z_2})^{w_2}  \varepsilon_{Z_3}^{w_3} \right\}
	\left(\sigma_{Z_1}^2 + \pi_{vu.C}\sigma_{Z_2}^2 + \sigma_{Z_3}^2\right)
	\\
	&\quad - \left[\vphantom{\sum_{a = 1}^{K-1}\E(\varepsilon_{Z_1}^a)} (K-1)!!\sigma_{Z_1}^K + \eta_{Z_1} + \pi_{vu.C}^{K}\left\{(K-1)!!\sigma_{Z_2}^K + \eta_{Z_2}\right\} \right. \\
	&\left.  \quad \quad + \sum_{a = 1}^{K-1}\E(\varepsilon_{Z_1}^a)\E(\varepsilon_{Z_2}^{K-a}\pi_{vu.C}^{K-a})\right]\E\left( \varepsilon_{Z_3} Y_u\right) \\
	&\quad -  \E\left\{Y_u \sum_{\substack{|w| = K\\ w_3 > 0}}{K \choose w}  \varepsilon_{Z_1}^{w_1} (\pi_{vu.C}  \varepsilon_{Z_2})^{w_2}  \varepsilon_{Z_3}^{w_3}\right\}
	\left\{\pi_{vu.C}\sigma_{Z_2}^2 + \E\left( \varepsilon_{Z_3} Y_u\right)\right\}.
	\end{align*}
	Since the unevaluated terms are fixed with respect to $\eta_{Z_1}$ and $\eta_{Z_2}$, selecting values such that
	\begin{align*}\numberthis \label{eq:eta_Confounding}
	\eta_{Z_1} &\neq \frac{1}{\pi_{vu.C}\sigma^2_{Z_2} +  \E(\varepsilon_{Z_3}Y_u)}\left[\vphantom{\E\left(Y_u \sum_{\substack{|w| = K\\ w_3 > 0}}{K \choose w}  \varepsilon_{Z_1}^{w_1} (\pi_{vu.C}  \varepsilon_{Z_2})^{w_2}  \varepsilon_{Z_3}^{w_3}\right)}
	\pi_{vu.C}^{K-1}\eta_{Z_2}\sigma_{Z_1}^2 \right.\\
	&\quad + \left\{\pi_{vu.C}^{K-1} (K-1)!!\sigma_{Z_2}^K + \eta_{Z_2}\right\} \sigma^2_{Z_3} + \sigma_{Z_3}^2\sum_{a = 1}^{K-1}\E(\varepsilon_{Z_1}^a)\pi_{vu.C}^{K-1-a}\E(\varepsilon_{Z_2}^{K-a})\\  &\quad + \E\left\{Y_u \sum_{\substack{|w| = K-1\\ w_3 > 0}}{K-1 \choose w}  \varepsilon_{Z_1}^{w_1} (\pi_{vu.C}  \varepsilon_{Z_2})^{w_2}  \varepsilon_{Z_3}^{w_3} \right\}
	\times
	\left(\sigma_{Z_1}^2 + \pi_{vu.C}\sigma_{Z_2}^2 + \sigma_{Z_3}^2\right)
	\\
	&\quad - \left\{(K-1)!!\sigma_{Z_1}^K + \pi_{vu.C}^{K}\left((K-1)!!\sigma_{Z_2}^K + \eta_{Z_2}\right) \vphantom{\sum_{a = 1}^{K-1}\E(\varepsilon_{Z_1}^a)\pi_{vu.C}^{K-a}}  \right.\\
	&\left. \qquad + \sum_{a = 1}^{K-1}\E(\varepsilon_{Z_1}^a)\pi_{vu.C}^{K-a}\E(\varepsilon_{Z_2}^{K-a})\right\}\E\left( \varepsilon_{Z_3} Y_u\right) \\
	&\left.\quad -  \E\left\{Y_u \sum_{\substack{|w| = K\\ w_3 > 0}}{K \choose w}  \varepsilon_{Z_1}^{w_1} (\pi_{vu.C}  \varepsilon_{Z_2})^{w_2}  \varepsilon_{Z_3}^{w_3}\right\}
	\left\{\pi_{vu.C}\sigma_{Z_2}^2 + \E\left( \varepsilon_{Z_3} Y_u\right)\right\} \right]
	\end{align*}
	implies that $\tau^{(K)}_{v.C \rightarrow u} \neq 0$.
	
\end{proof}
\begin{remark}\label{remark:confounding}
	Even if $u \not \in \pa(v)$ and $\pi_{vu.C} = 0$, as long as there is some $z \in Z_3$ such that $\pi_{vz.C} \neq 0$ and $\pi_{uz} \neq 0$ (i.e., $v$ and $u$ are confounded by $z$), then $\tau^{(K)}_{v.C \rightarrow u}$ is still non-zero for generic error moments. However, even if there is a $z$ with a directed path to both $u$ and $v$ which is not blocked by $C$, $\pi_{vz.C} \neq 0$ and $\pi_{uz} \neq 0$ is not necessarily implied by parental faithfulness because $z$ may be in $\an(v)\setminus \pa(v)$ or $\an(u)\setminus \pa(u)$.
\end{remark}
\begin{proof}
	When $\pi_{vu.C} = 0$, ~\eqref{eq:Total} reduces to
	\begin{align*}\numberthis\label{eq:justConfounding}
	\tau^{(K)}_{v.C\rightarrow u} & = \E (Y_{v.C}^{K-1} Y_u) \E (Y_{v.C}^2) - \E (Y_{v.C}^K) \E (Y_{v.C} Y_u)
	\\
	& = \E\left(\left(  \varepsilon_{Z_1} + \varepsilon_{Z_3} \right)^{K-1} Y_u \right)
	\E\left\{\left( \varepsilon_{Z_1} + \varepsilon_{Z_3}\right) ^2\right\}\\
	&\quad - \E\left\{\left(  \varepsilon_{Z_1} + \varepsilon_{Z_3} \right)^K\right\} \E\left\{\left(  \varepsilon_{Z_1} + \varepsilon_{Z_3} \right) Y_u\right\}\\
	& = \sum_{a = 0}^{K-1}\binom{K-1}{a}\E\left(\varepsilon_{Z_1}^a\varepsilon_{Z_3}^{K-1 - a} Y_u\right)\left(\sigma^2_{Z_1} + \sigma^2_{Z_3}\right)\\
	& \quad - \sum_{a = 0}^{K}\binom{K}{a}\E\left(\varepsilon_{Z_1}^a\varepsilon_{Z_3}^{K-a}\right)\E\left(\varepsilon_{Z_3} Y_u\right)\\
	& = \sum_{a = 0}^{K-1}\binom{K-1}{a}\E\left(\varepsilon_{Z_1}^a\varepsilon_{Z_3}^{K-1 - a} Y_u\right)\left(\sigma^2_{Z_1} + \sigma^2_{Z_3}\right)\\
	& \quad - \E(\varepsilon_{Z_1}^{K})\E(\varepsilon_{Z_3}Y_u) -  \sum_{a = 0}^{K-1}\binom{K}{a}\E\left(\varepsilon_{Z_1}^a\varepsilon_{Z_3}^{K-a}\right)\E\left(\varepsilon_{Z_3} Y_u\right).
	\end{align*}
	Since the first and third term do not involve $\E(\varepsilon_{Z_1}^{K})$, letting
	\begin{equation}
	\E\left(\varepsilon_{Z_1}^{K}\right) \neq \frac{\sum\limits_{a = 0}^{K-1}\binom{K-1}{a}\E\left(\varepsilon_{Z_1}^a\varepsilon_{Z_3}^{K-1 - a} Y_u\right)\left(\sigma^2_{Z_1} + \sigma^2_{Z_3}\right) -  \sum\limits_{a = 0}^{K-1}\binom{K}{a}\E\left(\varepsilon_{Z_1}^a\varepsilon_{Z_3}^{K-a}\right)\E\left(\varepsilon_{Z_3} Y_u\right)}{\E(\varepsilon_{Z_3}Y_u)}
	\end{equation} 
	ensures that $\tau^{(K)}_{v.C \rightarrow u} \neq 0$ so that the quantity is non-zero for generic error moments.
\end{proof}

\section{Proof of Lemma \ref{ch3:thm:errorBeta}}
\begin{proof}
	We use $\|\cdot\|$ to denote vector norms and
	$\vertiii{\cdot}$ to denote matrix norms. Conditions
	\ref{con:min_eigen} and \ref{con:bounded_moments} imply that
	\begin{equation}\label{eq:betaBound}
	\begin{aligned}
	||\beta_{vC}||_\infty &\leq ||\beta_{vC}||_2 \leq \vertiii{\Sigma_{CC}^{-1}}_2 \|\Sigma_{Cv}\|_2 \leq \vertiii{\Sigma_{CC}^{-1}}_2 \sqrt{J}\|\Sigma_{Cv}\|_\infty \leq  \frac{\sqrt{J} M}{\lMin}.
	\end{aligned}
	\end{equation}		
	Condition \ref{con:sample_moments} implies $\hat \Sigma_{CC} = \Sigma_{CC} + E$ and $\hat \Sigma_{Cv} = \Sigma_{Cv} + e$ with $\|E\|_\infty < \delta_1$ and $\|e\|_\infty < \delta_1$. Using results from \citet[Equation 5.8.7]{horn2013matrix} for the third and fourth inequalities below yields
	\begin{align*}
	\|\hat \beta_{vC} - \beta_{vC}\|_\infty & \leq \|\hat \beta_{vC} - \beta_{vC}\|_2 = \|(\Sigma_{CC} + E)^{-1}(\Sigma_{Cv} + e) - \Sigma_{CC}^{-1}\Sigma_{Cv}\|_2 \\
	&\leq \|(\Sigma_{CC} + E)^{-1}\Sigma_{Cv} - \Sigma_{CC}^{-1}\Sigma_{Cv} \|_2 + \|(\Sigma_{CC} + E)^{-1} e\|_2\\
	& \leq \frac{\vertiii{\Sigma_{CC}^{-1}E}_2}{1 - \vertiii{\Sigma_{CC}^{-1}E}_2}\|\beta_{vC}\|_2 +  \vertiii{(\Sigma_{CC} + E)^{-1}}_2 \|e\|_2\\ \numberthis \label{eq:boundedBeta}
	& \leq \frac{\vertiii{\Sigma_{CC}^{-1}E}_2}{1 - \vertiii{\Sigma_{CC}^{-1}E}_2}\|\beta_{vC}\|_2 +  \frac{1/\lMin}{1 - \vertiii{\Sigma_{CC}^{-1}E}_2}\|e\|_2\\
	&\leq \frac{\vertiii{\Sigma_{CC}^{-1}E}_2}{1 - \vertiii{\Sigma_{CC}^{-1}E}_2}\|\beta_{vC}\|_2 +  \frac{\sqrt{J}\delta_1/\lMin}{1 - \vertiii{\Sigma_{CC}^{-1}E}_2}.
	\end{align*}
	The term $ \vertiii{\Sigma_{CC}^{-1}E}_2 \leq \vertiii{\Sigma_{CC}^{-1}}_2 \vertiii{E}_2 \leq \frac{J\|E\|_\infty}{\lMin} \leq \frac{J\delta_1}{\lMin} < 1/2$. Since the bound in \eqref{eq:boundedBeta} is increasing in each of its arguments for $\vertiii{\Sigma_{CC}^{-1}E} < 1$,
	\begin{multline}
	\|\hat \beta_{vC} - \beta_{vC}\|_\infty  \leq \frac{\frac{J\delta_1}{\lMin}}{1 - \frac{J\delta_1}{\lMin}}\frac{\sqrt{J} M}{\lMin}+  \frac{\sqrt{J}\delta_1 / \lMin}{1 - \frac{J\delta_1}{\lMin} }
	= \frac{\sqrt{J}\delta_1/\lMin}{1 - \frac{J\delta_1}{\lMin} }\left(JM/ \lMin +1\right)\\
	\leq 2\frac{\sqrt{J}\delta_1}{\lMin}\left(JM/ \lMin +1\right) 
	\leq 4 \frac{J^{3/2}M\delta_1}{\lMin^2}= \delta_2.
	\end{multline}
	The penultimate inequality holds because by assumption $\frac{J\delta_1}{\lMin} < 1/2$ and the last inequality holds because by assumption $ M > \frac{\lMin}{J}$. 
\end{proof}

\section{Proof of Lemma~\ref{ch3:thm:errorMoment}}
\begin{proof}
	For $a \in \mathbb{Z}^{|C|+1}_{\geq 0}$, let $\binom{s}{a} = \frac{s}{a_1! a_2! \ldots a_{|C|+1}!}$ be the multinomial coefficient. Define the map 
	\begin{equation*}
	\begin{aligned}
	f\left(\beta_{vC}, \left\{m_{V, \alpha}\right\}_{|\alpha| = s+r} \right) &= \E_P\left(Z_{v.C}^s Z_{u}^r\right) = \E_P\left\{\left(Z_{v} - \sum_{c \in C} \beta_{vc.C}Z_{c}\right)^s Z_{u}^r\right\}
	\\
	&= \E_P\left\{Z_{u}^r\sum_{|a| = s} \binom{s}{a} \prod_{c \in C}(- \beta_{vc.C}Z_{c})^{a_c} Z_{v}^{a_v}\right\}\\
	&= \sum_{|a| = s}\left\{ \binom{s}{a} m_{\left(C, v, u\right),\left(a, r\right)} \prod_{c \in C}\left(-\beta_{vc.C} \right)^{a_c} \right\}.
	\\
	\end{aligned}
	\end{equation*}
	Since $a$ is of length $|C| + 1$, there are $\binom{|C| + 1 + s - 1}{|C| + 1 - 1}$ moments we consider. By Condition~\ref{con:bounded_moments} and \ref{con:sample_moments}, each of the sample moments of $Y$ is restricted to $\left(-M - \delta_1, M + \delta_1\right)$, and as shown in \eqref{eq:betaBound} each of the $\hat \beta_{vz.C}$  is restricted to $\left(-\frac{J^{1/2}M}{\lMin} - \delta_2, \frac{J^{1/2}M}{\lMin} + \delta_2\right)$. In this domain, the partial derivatives of $f$ are bounded with
	\begin{align*}
	\left|\frac{\partial f}{m_{V, \alpha} } \right| &\leq s! \left(\frac{\sqrt{J}M}{\lMin} + \delta_2\right)^{s}, \text{ and }\\
	\left|\frac{\partial f}{\partial \beta_{vz.C}} \right|  & \leq \sum_{\substack{|a| = s\\ a_z > 0}} \binom{s}{a} (a_z) \left|  m_{(C,v,u),(a, r)} (-\beta_{vz.C})^{a_z - 1} \prod_{c \in C\setminus z}(-\beta_{vc.C})^{a_c} \right|
	\\
	& \leq \sum_{\substack{|a| = s\\ a_z > 0}} \binom{s}{a}s\left\{\left(M + \delta_1\right) \left(\frac{\sqrt{J}M}{\lMin} + \delta_2\right)^{s - 1}\right\}
	\\
	& \leq (|C|+1)^{s} s\left\{\left(M + \delta_1\right) \left(\frac{\sqrt{J}M}{\lMin} + \delta_2\right)^{s - 1} \right\}.
	\end{align*}
	By the mean value theorem for some $\left(\tilde \beta_{vC}, \left\{\tilde m_{V,\alpha}\right\}_{|\alpha| = s+r}\right)$, a convex combination of $\left(\hat \beta_{vC}, \left\{\hat m_{V, \alpha}\right\}_{|\alpha| = s+r}\right)$ and $\left(\beta_{vC}, \left\{ m_{V, \alpha}\right\}_{|\alpha| = s+r}\right)$,
	\begin{align*}
	\left|f \vphantom{\left(\beta_{vC}, \left\{m_{V, \alpha}\right\}_{|\alpha| = s+r} \right) - f\left(\hat \beta_{vC}, \left\{\hat m_{V, \alpha}\right\}_{|\alpha| = s+r} \right)}\right.&\left.\left(\beta_{vC}, \left\{m_{V, \alpha}\right\}_{|\alpha| = s+r} \right) - f\left(\hat \beta_{vC}, \left\{\hat m_{V, \alpha}\right\}_{|\alpha| = s+r} \right) \right| 
	\\
	&= \left|\left\{\nabla f\left(\tilde \beta_{vC}, \left\{\tilde m_{V, \alpha}\right\}_{|\alpha| = s+r}\right)\right\}^T \left\{ \left(\beta_{vC}, \left\{m_{V, \alpha}\right\}_{|\alpha| = s+r}\right) -   \left(\hat \beta_{vC}, \left\{\hat m_{V, \alpha}\right\}_{|\alpha| = s+r}\right) \right\} \right| 
	\\
	&\leq \left|\left\{\nabla_\beta f \left(\tilde \beta_{vC}, \left\{\tilde m_{V,\alpha}\right\}_{|\alpha| = s+r}\right)\right\}^T \left\{ \beta_{vC} - \hat \beta_{vC}\right\} \right|
	\\
	& \quad + \left|\left\{\nabla_m f \left(\tilde \beta_{vC}, \left\{\tilde m_{V, \alpha}\right\}_{|\alpha| = s+r}\right)\right\}^T \left\{ \left\{m_{V, \alpha}\right\}_{|\alpha| = s+r}  - \left\{\hat m_{V,\alpha}\right\}_{|\alpha| = s+r} \right\} \right| \\
	&\leq |C|\delta_2 \max\left|\frac{\partial f}{\partial \beta_{vz.C}} \right| + \binom{|C| + s}{|C|} \delta_1 \max \left|\frac{\partial f}{m_{V,\alpha}} \right|
	\end{align*}
	where $\nabla_\beta$ and $\nabla_m$ indicate the gradient with respect to the linear coefficients and moments, respectively. The last inequality follows from H\"{o}lder's inequality. Plugging in $\delta_2$ from Lemma \ref{ch3:thm:errorBeta} yields
	\begin{align*}
	\binom{|C| + s}{|C|} \delta_1 & \max \left|\frac{\partial f}{m_V, \alpha } \right| +  |C|\delta_2 \max\left|\frac{\partial f}{\partial \beta_{vz.C}} \right|
	\\
	& \leq \binom{|C| + s}{|C|} \delta_1 s! \left(\frac{\sqrt{J}M}{\lMin} + 4 \frac{J^{3/2}M\delta_1}{\lMin^2}\right)^{s}
	\\
	&\quad  + |C|4 \frac{J^{3/2}M\delta_1}{\lMin^2} (|C|+1)^{s} \left\{\left(M + \delta_1\right) s \left(\frac{\sqrt{J}M}{\lMin} + 4 \frac{J^{3/2}M\delta_1}{\lMin^2}\right)^{s - 1} \right\}
	\\
	& \leq (|C| + s)^s \delta_1 \left(\frac{3\sqrt{J}M}{\lMin}\right)^{s}
	\\
	&\quad  + |C|4 \frac{J^{3/2}M\delta_1}{\lMin^2} (|C|+1)^{s} \left\{\left(M + \delta_1\right) s \left(\frac{3\sqrt{J}M}{\lMin}\right)^{s - 1} \right\}
	\\
	& \leq (|C| + s)^s \delta_1 \left(\frac{3\sqrt{J}M}{\lMin}\right)^{s}
	\\
	&\quad  + |C|4 \frac{J\delta_1}{\lMin} (|C|+1)^{s} \left[2M s \left(\frac{3\sqrt{J}M}{\lMin}\right)^{s} \right]
	\\
	& \leq \delta_1 \left\{8(J+ K)^{K} JK \left(\frac{3\sqrt{J}M}{\lMin}\right)^{K}\left(1 + \frac{JM}{\lMin}\right)\right\}
	\\
	& \leq \delta_1 \left\{16(J+ K)^{K}  JK \left(\frac{3\sqrt{J}M}{\lMin}\right)^{K}\left(\frac{JM}{\lMin}\right)\right\}
	\\
	& = \delta_1 \left\{16(3^K)(J+ K)^{K}  K
	\frac{J^{(K+4)/2}M^{K+1}}{\lMin^{K+1}}\right\}. 
	\end{align*}
	The second inequality holds because we assumed $J\delta_1 /\lMin < 1 /2$; the third inequality holds because we assumed $\delta_1 < M$; the fourth inequality holds because we assumed $|C| \leq J$ and $s \leq K$; the fifth inequality holds because we assumed $JM / \lMin > 1$. 
\end{proof}

\section{Proof of Lemma~\ref{ch3:thm:errorTau}}

\begin{proof}
	
	Similar to the previous notation where $m_{H,\alpha} = \E(\prod_{h \in H} Z_{h}^{\alpha_h})$, we also allow for $v.C \in H$ indicating the population moment involving $Z_{v.C}$. By the triangle inequality,
	\begin{equation}
	\begin{aligned}
	|\hat \tau_{v.C \rightarrow u} -  \tau_{v \rightarrow u} |& = \left|\hat m_{(v.C,u),(K-1,1)} \hat m_{(v.C),(2)} - \hat m_{(v.C),(K)} \hat m_{(v.C,u),(1,1)} \right.\\
	&\quad \left.
	- (m_{(v.C,u),(K-1,1)} m_{v.C(2)} - m_{(v.C),(K)} m_{(v.C,u),(1,1)})\right|\\
	& \leq |\hat m_{(v.C,u)(K-1,1)} \hat m_{(v.C), (2)} - m_{(v.C,u),(K-1,1)} m_{(v.C),(2)}|\\
	&\quad  + | \hat m_{(v.C),(K)} \hat m_{(v.C),u),(1,1)}
	-  m_{(v),(K)} m_{(v,u),(1,1)})|.
	\\
	\end{aligned}
	\end{equation}
	
	Consider each of the two terms separately. For some $0 < \eta_1 < \delta_1 \Phi(J, K, M, \lMin)$  and $0 < \eta_2 < \delta_1 \Phi(J, K, M, \lMin)$ we have
	\begin{equation*}
	\begin{aligned}
	|\hat m_{(v.C,u),(K-1,1)} &\hat m_{(v.C),(2)} -
	m_{(v.C,u),(K-1,1)} m_{(v.C),(2)}| \\
	&=\left|(m_{(v.C,u),(K-1,1)} + \eta_1) (m_{(v.C),(2)} + \eta_2) - m_{(v.C,u),(K-1,1)} m_{(v.C),(2)}\right|
	\\
	&= |(m_{(v.C,u),(K-1,1)}\eta_2 + m_{(v.C),(2)}\eta_1) + \eta_1\eta_2|
	\\
	&\leq M \eta_2 + M \eta_1 + \eta_1 \eta_2
	\\
	& = 2M \delta_1 \Phi(J, K, M, \lMin) + \left(\delta_1 \Phi(J, K, M, \lMin)\right)^2.
	\end{aligned}
	\end{equation*}
	Using the analogous argument for the second term, we can bound the entire quantity as
	\[|\hat \tau_{v.C \rightarrow u} -  \tau_{v.C \rightarrow u} |
	< \delta_3 =  4M \delta_1 \Phi(J, K, M, \lMin) +
	2\left(\delta_1 \Phi(J, K, M, \lMin)\right)^2.
	\]
\end{proof}

\section{Pruning Procedure}
At the beginning of each step $z$ in Algorithm~\ref{alg:topOrder}, we have a set of nodes which have already been ordered, $\Theta^{(z-1)}$, and a set of nodes which have not yet been ordered, $\Psi^{(z-1)}$. To select the next node in the ordering, we calculate for each $v \in \Psi^{(z-1)}$,
\begin{equation*}
\hat T(v, \mathcal{C}_v^{(z)}, \Psi^{(z-1)}\setminus v),
\end{equation*}
where $\mathcal{C}_v^{(z)}$ is a set of potential parents we consider for node $v$. This requires fitting $\binom{|\mathcal{C}_v^{(z)}|}{J}$ regressions which can be computationally prohibitive when $|\mathcal{C}_v^{(z)}|$ is large. In order to speed up computation, rather than letting $\mathcal{C}_v^{(z)} = \Theta^{(z-1)}$, we seek to keep $\mathcal{C}_v^{(z)}$ small while preserving the theoretical guarantees in Theorem~\ref{ch3:thm:deterministicCorrect} and Corollary~\ref{ch3:thm:probGuarantee}. In particular, we let
\begin{equation}
\mathcal{C}^{(z)}_v = \Bigg\{p \in \mathcal{C}^{(z-1)}_v: \min_{C \in D^{(z)}_v} |\hat \tau_{v.C \rightarrow p}| > g^{(z)}\Bigg\} \;\cup\; \Theta^{(z-1)}_{z-1}
\end{equation}
where $D^{(z)}_v = \bigcup_{d < z} \{C: C \subseteq \mathcal{C}_v^{(d)} \setminus \{p\}; |C| \leq J \}$ and $g^{(z)}$ is some cut-off value. Suppose $g^{(z)} = \max(g^{(z-1)}, \alpha \hat T(r, \mathcal{C}^{(z)}_r, \Psi^{(z-1)}) )$ where $r$ is the root selected at step $z - 1$ and $\alpha$ is some tuning parameter in $[0,1]$. Under the assumptions of Theorem~\ref{ch3:thm:deterministicCorrect} where $\gamma$ is the signal strength, $g^{(z)} < \gamma /2$ for all steps $z$ because $\hat T(r, \mathcal{C}^{(z)}_r, \Psi^{(z-1)})) < \gamma / 2$. Thus, no parent of $v$ is mistakenly excluded from $\mathcal{C}^{(z)}_v$. 

This pruning procedure can lead to large computational savings; however, the savings, even with an oracle pruning parameter $g^{(z)} = \gamma/2$, will depend on the structure of the true graph. If there is not a unique total ordering, the savings may vary from one sample to another, even if $\hat G$ is the same, because the pruning depend on the topological ordering selected. On one extreme, if the true graph is the empty graph, then $|\mathcal{C}_v^{(z)}| = 1$ (specifically $\mathcal{C}_v^{(z)} = \Theta^{(z-1)}_{z-1}$) for all $v$ at all steps $z$. If the graph is a single chain, i.e., $1 \rightarrow 2 \rightarrow \ldots \rightarrow p$, then $|\mathcal{C}_v^{(z)}| = J+1$ where $J$ is the user specified maximum in-degree. However, there are also cases where the maximum in-degree is bounded, but $\max_{v \in V, z \in [p]}|\mathcal{C}_v^{(z)}| = O(p)$. For the graph shown in Figure~\ref{fig:upsideTree}, in the worst case, at step $z = 5$, if $\Theta^{(4)} = \{1,2,3,4\}$, then $\mathcal{C}_7^{(5)} = \{1,2,3,4\}$. However, if $\Theta^{(4)} = \{1,2,5, 3\}$, then $\mathcal{C}_7^{(5)} = \{5,3\}$. In general, for ``upside-down'' binary trees, in the worst case scenario, $\max_{v \in V, z \in [p]}|\mathcal{C}_v^{(z)}| \geq p/2$.

\begin{figure}[htb]
	\centering
	\begin{tikzpicture}[->,>=triangle 45,shorten >=1pt,
	auto,
	main node/.style={ellipse,inner sep=0pt,fill=gray!20,draw,font=\sffamily,
		minimum width = .5cm, minimum height = .5cm}]
	
	\node[main node] (7) {7};
	\node[main node] (6) [above right= .8cm of 7]  {6};
	\node[main node] (5) [above left= .8cm of 7]  {5};
	\node[main node] (4) [above right= .3cm of 6]  {4};
	\node[main node] (3) [above left= .3cm of 6]  {3};
	\node[main node] (2) [above right= .3cm of 5]  {2};
	\node[main node] (1) [above left= .3cm of 5]  {1};

	\path[color=black!20!blue,style={->}]
	(1) edge (5)
	(2) edge (5)
	(3) edge (6)
	(4) edge (6)
	(5) edge (7)
	(6) edge (7)
	;
	
	\end{tikzpicture}
	\caption{\label{fig:upsideTree}In the worst case, the pruning procedure does not lead to substantial computational savings for an ``upside down'' tree.}
\end{figure}
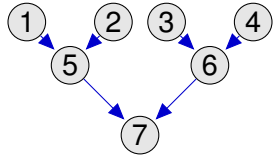

To reduce redundant computation required for the pruning procedure, we keep a running record of the minimum $|\tau_{v.C \rightarrow p}|$ for each $v, p \in V$. Thus, at each new step, we only need to consider the new sets in $ D^{(z-1)}_v \setminus  D^{(z)}_v$. Since $\mathcal{C}_{v}^{(z)}$ grows by at most 1 node at each step, then $|D^{(z-1)}_v \setminus  D^{(z)}_v| \leq \binom{\mathcal{C}^{(z)}_v - 1}{J-1}$. Also, when calculating the pruning statistics, we avoid re-computing $Y_{v.C}$, which is the most expensive part, since it was already computed when calculating $|\tau_{v.C \rightarrow u}|$ for some $u \in \Psi^{(z-2)}$ at the previous step.

\section{Additional Simulations}
\subsection{Bivariate Data}
In Figure~\ref{fig:compareBivariate} we compare the proportion of times DirectLiNGAM, Pairwise LiNGAM, and the proposed procedure are able to detect the correct causal direction in bivariate data. This allows for a direct comparison of the test statistics without the additional algorithmic changes. For each simulation, we let $\varepsilon_X$ and $\varepsilon_Y$ be either gamma or uniform with standard deviations randomly selected in $(.8,1)$. We then let $X = \varepsilon_X$ and $Y = \beta_{YX} X + \varepsilon_Y$ with $\beta_{YX}$ drawn randomly from $\pm(.65, 1)$. For $\tau^{(K)}$, we consider $K = 3,4$ and also consider the sum of statistics of the form $|\tau^{(K_1)}| + |\tau^{(K_2)}|$ which can in some cases have more accuracy. For the uniform distribution we let $(K_1, K_2) = (3,4), (4,6)$ and for the gamma distribution we let $(K_1, K_2) = (3,4), (3,5)$. For DirectLiNGAM and Pairwise LiNGAM we use the default settings in the code provided on the author's website\footnote{https://sites.google.com/site/sshimizu06/Dlingamcode}. We use $10,000$ simulations at each level of $n$. We see that at the smaller sample sizes, the proposed statistics have higher accuracy, especially the sum statistics. However, as $n$ grows larger, the DirectLiNGAM and Pairwise LiNGAM methods perform better. Since the third moments of the uniform are the same as the Gaussian, $\tau^{(3)}$ is a very poor determinant of causal direction when the errors are uniform. It still does slightly better than chance because $\tau^{(3)}_{Y\rightarrow X}$ tends to have a higher variance than $\tau^{(3)}_{X\rightarrow Y}$, but the accuracy actually decreases as $n$ increases.  

\begin{figure}[h]\centering
	\includegraphics[scale=.4]{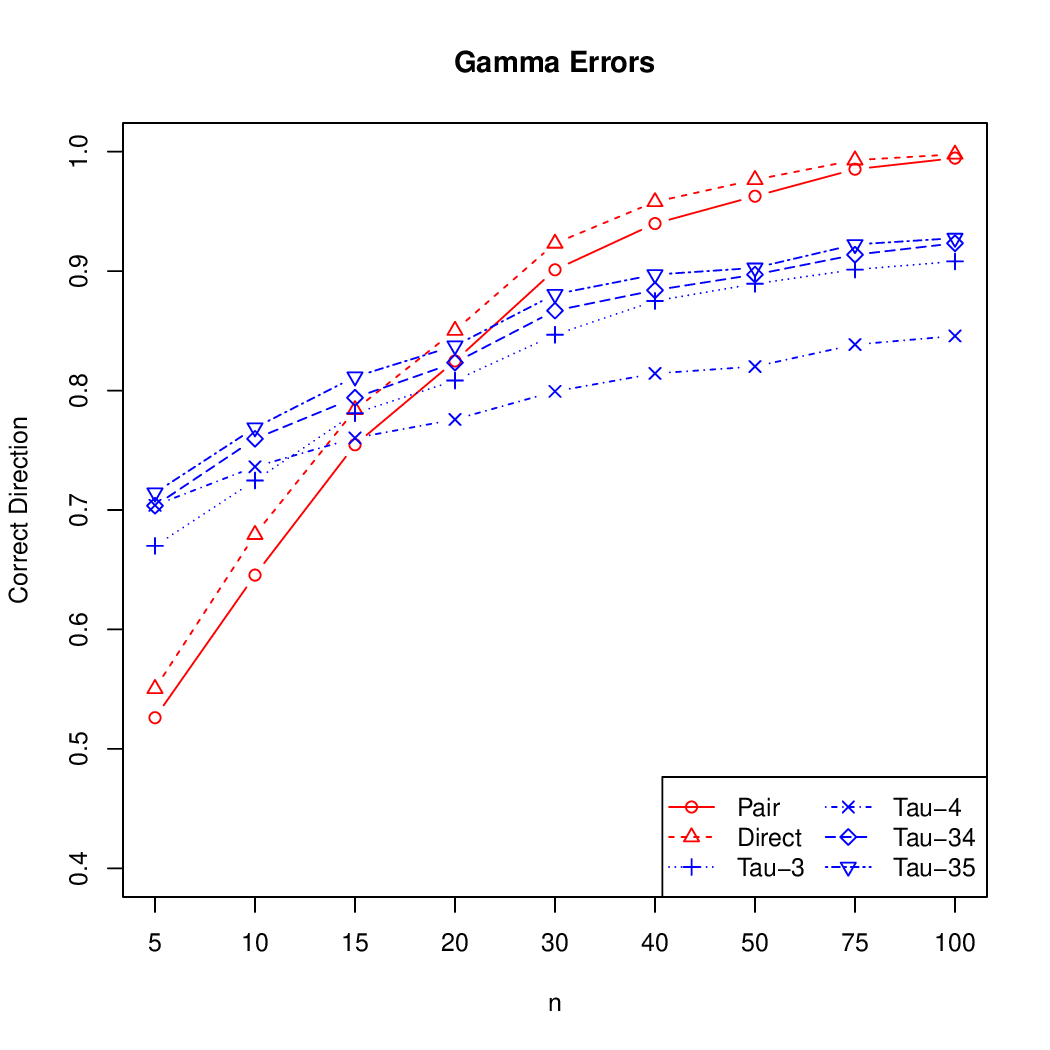}
	\includegraphics[scale=.4]{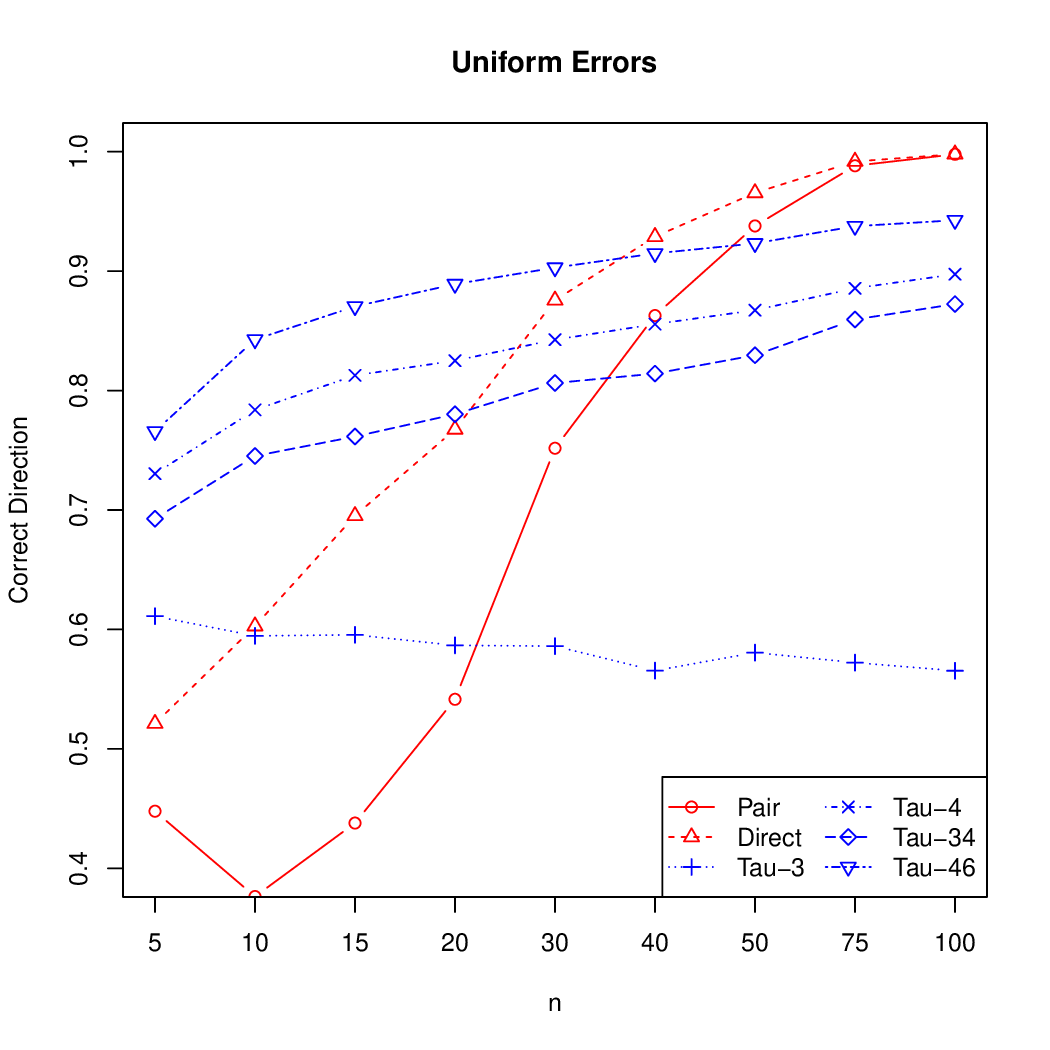}
	\caption{\label{fig:compareBivariate}Comparison of methods on bivariate data.}
\end{figure}

\subsection{Gamma and Gaussian errors}
We also show two additional simulation settings with the same setup as the high-dimensional non-hub setting from Section~\ref{sec:highDSims} where $n = 3/4p$. The results in the main text use uniform random variables, and here we give results for when the random variables are gamma and Gaussian. In particular, in the gamma setting, we let $\varepsilon_{vi} = \sigma_v(\alpha_{vi} - \frac{2}{\sqrt{2}})$ where $\alpha_{vi}$ is a gamma random variable with shape parameter $2$ and rate parameter $\sqrt{2}$ so that $\varepsilon_{vi}$ has mean 0 and variance $\sigma_v^2$. In the Gaussian case, we let $\varepsilon_{vi} \sim N(0, \sigma_v^2)$. In both cases, we draw $\sigma_v$ uniformly from $[.8, 1]$.

For the gamma case, as the theory would predict, the performance increases as the number of nodes and sample size increase. However, because the sample moments of gamma random variables concentrate slower than the uniform random variables, at each setting of $p$ and $n$, the performance with gamma errors is worse than the uniform case. 

For the Gaussian case, as the theory would predict, we see that the
estimated ordering does not improve with increasing $p$ and
$n$. However, is still better than chance. We posit this is because
even though the test statistic cannot determine causal direction with
Gaussian errors, it can still detect uncontrolled confounding (see
supplement Remark~\ref{remark:confounding}). Consider, step $z$ of the
procedure, where $\Theta^{(z)}$ is the ``already ordered" set and
$\Psi^{(z)}$ is the ``yet to be ordered" set. When selecting a root
from the subgraph induced by $\Psi^{(z)}$, node $v$ will be selected
as a root (using population quantities) only if the confounders,
$\pa(v) \cap \pa(\Psi^{(z)}\setminus v)$, have already been selected
into $\Theta^{(z)}$ and can be used to condition $v$. For example if
nodes $v,u \in \Psi^{(z)}$, share a common parent $s \in \Psi^{(z)}$,
$T(v, \Theta^{(z)}, \Psi^{(z)})$ and $T(u, \Theta^{(z)}, \Psi^{(z)})$
will still be positive because $|\tau_{v.C \rightarrow u}| > 0$ when
$s \not \in C$.  Thus, similar to how the PC algorithm can use v-structures
to orient some edges, the proposed procedure can use confounding
structures to identify some, but not all, causal orderings even in the Gaussian case.

\begin{figure}[h]\centering
	\includegraphics[scale=.5]{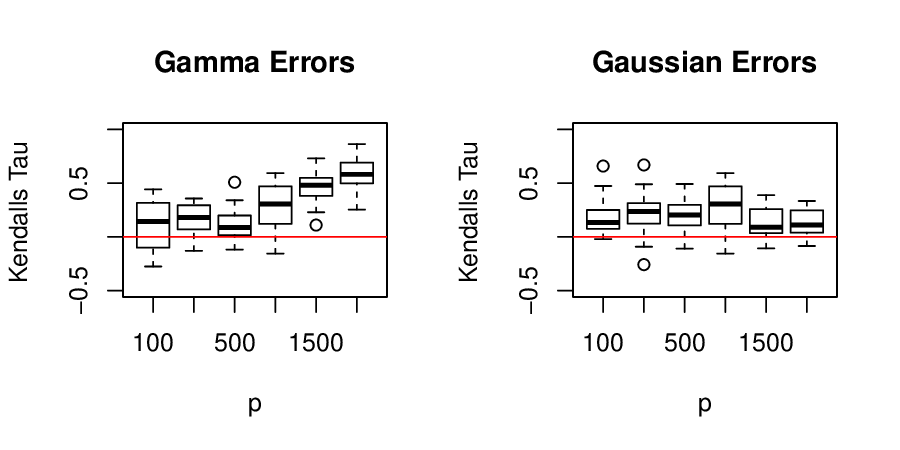}
	\caption{\label{fig:compareOtherErrors}Results of graph
		estimation when the errors are gamma or Gaussian.  Each
		sample size is set to $n = 3/4p$ where $p$ is the number of variables. Each barplot represents 20 simulation runs.}
\end{figure}	

\subsection{Pre-selection with hub graphs}	
Figure~\ref{fig:preSelectHub} shows the results of using pre-selection with randomly generated hub graphs. We use the same data generating procedure as described in Section~\ref{sec:highDSims}. Again we see that the proposed method, with or without pre-selection, outperforms the sparse Pairwise LiNGAM method. 
\begin{figure}[h]\centering
	\includegraphics[scale=.6]{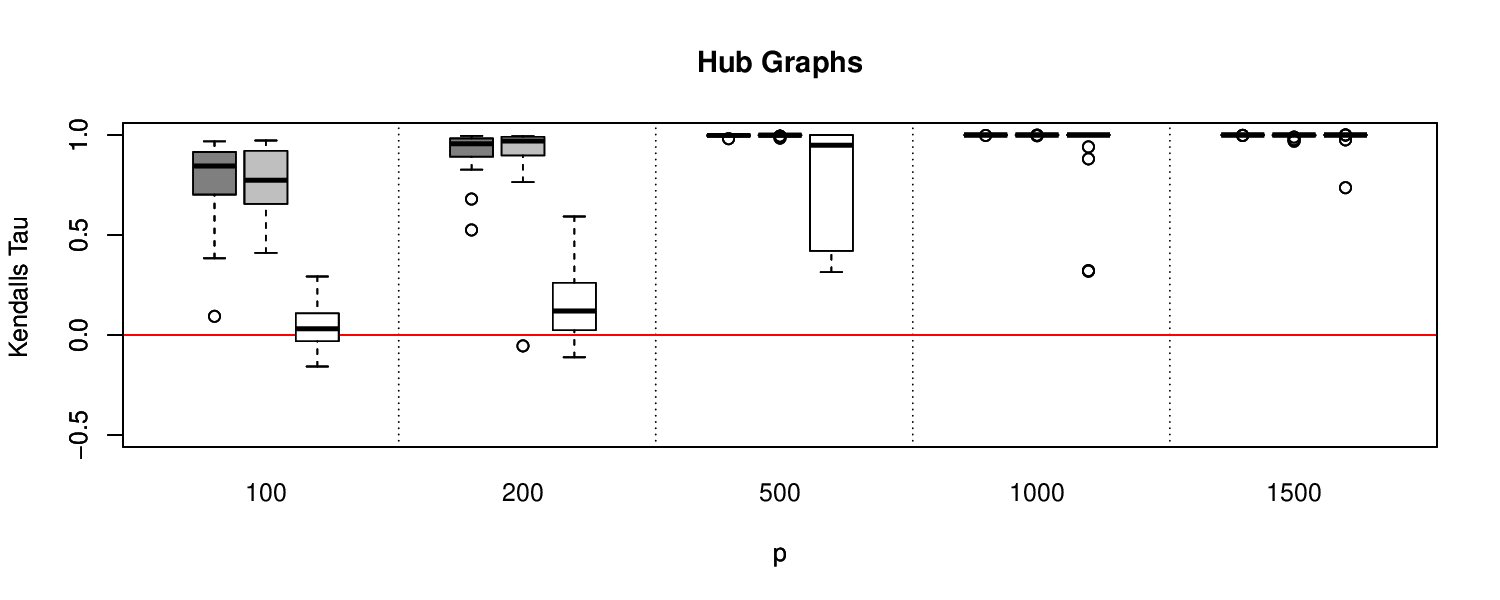}
	\caption{\label{fig:preSelectHub}Each boxplot represents the results of 20 simulations with random hub graphs and a pre-selection step; in each case $n = 3/4p$. From left to right the methods are: the proposed high dimensional LiNGAM procedure, same as Figure~\ref{fig:highDCons}; the proposed high dimensional LiNGAM procedure with pre-selection; the two stage pairwise procedure from \citet[Section 3.3]{hyvarinen2013pairwise}.}
\end{figure}	

\subsection{Parent selection using best subset regression}
At each step, as an alternative to picking a root by minimizing $|\tau|$ across all subsets $C \subseteq \Theta^{(z)}$ such that $|C| = J$, we could use a two-step procedure. Specifically, for each $v \in \Psi^{(z)}$, we first select a set of possible parents via best subset regression so that $C_v^{(z)}$ minimizes the conditional variance
\[C_v^{(z)} = \min_{C: C \subseteq \Theta^{(z)}; |C| = J} \hat \sigma^2_v - \hat \Sigma_{vC}\hat \Sigma_{CC}^{-1}\hat \Sigma_{Cv}. \]   	
We then only calculate $|\tau|$ for that specific set of parents and select a root $r$ by
\begin{equation}
r = \arg \min_{v \in \Psi^{(z)}} \max |\hat \tau_{v.C_v^{(z)} \rightarrow u}|.
\end{equation}	
If the parents of the roots are consistently selected by best subset regression, this procedure may also yield consistent estimation of the causal graph. This procedure works well in simulations; however, theoretical guarantees would require slightly different assumptions, such as a beta-min condition to ensure consistent parent selection. Also, in practice, we use a branch and bound procedure~\citep{leaps2017} and implementing a pruning procedure is not as straightforward so the two-step method tends to be more computationally expensive. The simulation settings use the random graphs as described in Section 4.2 with $n = 3/4 p$. 

\begin{figure}[h]\centering
	\includegraphics[scale=.5]{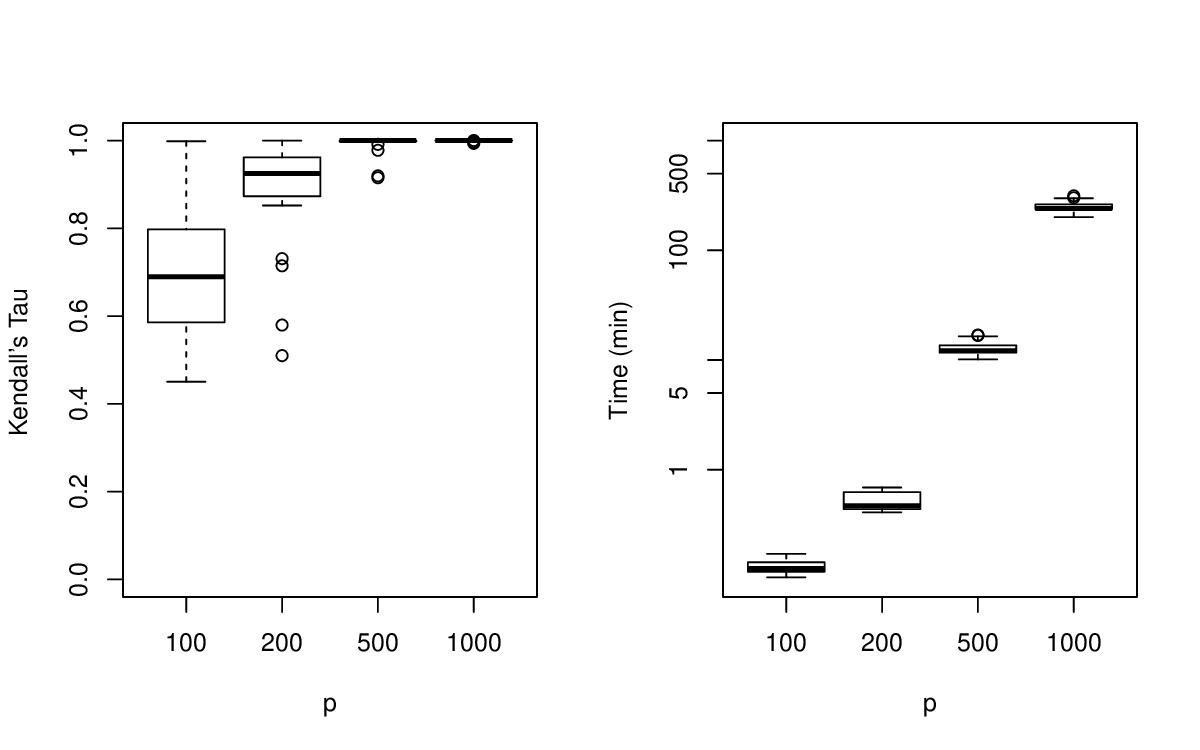}
	\caption{Results of graph estimation using the two step procedure where parents are selected via best subset regression.}
\end{figure}

\end{document}